\newtheorem{definition}{Definition}
\newtheorem{example}{Example}
\newtheorem{theorem}{Theorem}
\newtheorem{corollary}{Corollary}
\newtheorem{lemma}{Lemma}
\newtheorem{proposition}{Proposition}
\newtheorem{remark}{Remark}
\newdimen\proofrulebreadth \proofrulebreadth=.05em
\newdimen\proofdotseparation \proofdotseparation=1.25ex
\newdimen\proofrulebaseline \proofrulebaseline=2ex
\let\then\relax
\def\hfi{\hskip0pt plus.0001fil}
\mathchardef\squigto="3A3B
\newif\ifinsideprooftree\insideprooftreefalse
\newif\ifonleftofproofrule\onleftofproofrulefalse
\newif\ifproofdots\proofdotsfalse
\newif\ifdoubleproof\doubleprooffalse
\let\wereinproofbit\relax
\newdimen\shortenproofleft
\newdimen\shortenproofright
\newdimen\proofbelowshift
\newbox\proofabove
\newbox\proofbelow
\newbox\proofrulename
\def\shiftproofbelow{\let\next\relax\afterassignment\setshiftproofbelow\dimen0 }
\def\shiftproofbelowneg{\def\next{\multiply\dimen0 by-1 }%
\afterassignment\setshiftproofbelow\dimen0 }
\def\setshiftproofbelow{\next\proofbelowshift=\dimen0 }
\def\setproofrulebreadth{\proofrulebreadth}
\def\prooftree{% NESTED ZERO (\ifonleftofproofrule)
%
% first find out whether we're at the left-hand end of a proof rule
\ifnum  \lastpenalty=1
\then   \unpenalty
\else   \onleftofproofrulefalse
\fi
%
% some space on left (except if we're on left, and no infinity for outermost)
\ifonleftofproofrule
\else   \ifinsideprooftree
        \then   \hskip.5em plus1fil
        \fi
\fi
%
% begin our proof tree environment
\bgroup% NESTED ONE (\proofbelow, \proofrulename, \proofabove,
%               \shortenproofleft, \shortenproofright, \proofrulebreadth)
\setbox\proofbelow=\hbox{}\setbox\proofrulename=\hbox{}%
\let\justifies\proofover\let\leadsto\proofoverdots\let\Justifies\proofoverdbl
\let\using\proofusing\let\[\prooftree
\ifinsideprooftree\let\]\endprooftree\fi
\proofdotsfalse\doubleprooffalse
\let\thickness\setproofrulebreadth
\let\shiftright\shiftproofbelow \let\shift\shiftproofbelow
\let\shiftleft\shiftproofbelowneg
\let\ifwasinsideprooftree\ifinsideprooftree
\insideprooftreetrue
%
% now begin to set the top of the rule (definitions local to it)
\setbox\proofabove=\hbox\bgroup$\displaystyle % NESTED TWO
\let\wereinproofbit\prooftree
%
% these local variables will be copied out:
\shortenproofleft=0pt \shortenproofright=0pt \proofbelowshift=0pt
%
% flags to enable inner proof tree to detect if on left:
\onleftofproofruletrue\penalty1
}
\def\eproofbit{% NESTED TWO
%
% various hacks applicable to hypothesis list 
\ifx    \wereinproofbit\prooftree
\then   \ifcase \lastpenalty
        \then   \shortenproofright=0pt  % 0: some other object, no indentation
        \or     \unpenalty\hfil         % 1: empty hypotheses, just glue
        \or     \unpenalty\unskip       % 2: just had a tree, remove glue
        \else   \shortenproofright=0pt  % eh?
        \fi
\fi
%
% pass out crucial values from scope
\global\dimen0=\shortenproofleft
\global\dimen1=\shortenproofright
\global\dimen2=\proofrulebreadth
\global\dimen3=\proofbelowshift
\global\dimen4=\proofdotseparation
\global\count255=\proofdotnumber
%
% end the box
$\egroup  % NESTED ONE
%
% restore the values
\shortenproofleft=\dimen0
\shortenproofright=\dimen1
\proofrulebreadth=\dimen2
\proofbelowshift=\dimen3
\proofdotseparation=\dimen4
\proofdotnumber=\count255
}
\def\proofover{% NESTED TWO
\eproofbit % NESTED ONE
\setbox\proofbelow=\hbox\bgroup % NESTED TWO
\let\wereinproofbit\proofover
$\displaystyle
}%
\def\proofoverdbl{% NESTED TWO
\eproofbit % NESTED ONE
\doubleprooftrue
\setbox\proofbelow=\hbox\bgroup % NESTED TWO
\let\wereinproofbit\proofoverdbl
$\displaystyle
}%
\def\proofoverdots{% NESTED TWO
\eproofbit % NESTED ONE
\proofdotstrue
\setbox\proofbelow=\hbox\bgroup % NESTED TWO
\let\wereinproofbit\proofoverdots
$\displaystyle
}%
\def\proofusing{% NESTED TWO
\eproofbit % NESTED ONE
\setbox\proofrulename=\hbox\bgroup % NESTED TWO
\let\wereinproofbit\proofusing
\kern0.3em$
}
\def\endprooftree{% NESTED TWO
\eproofbit % NESTED ONE
% \dimen0 =     length of proof rule
% \dimen1 =     indentation of conclusion wrt rule
% \dimen2 =     new \shortenproofleft, ie indentation of conclusion
% \dimen3 =     new \shortenproofright, ie
%                space on right of conclusion to end of tree
% \dimen4 =     space on right of conclusion below rule
  \dimen5 =0pt% spread of hypotheses
% \dimen6, \dimen7 = height & depth of rule
%
% length of rule needed by proof above
\dimen0=\wd\proofabove \advance\dimen0-\shortenproofleft
\advance\dimen0-\shortenproofright
%
% amount of spare space below
\dimen1=.5\dimen0 \advance\dimen1-.5\wd\proofbelow
\dimen4=\dimen1
\advance\dimen1\proofbelowshift \advance\dimen4-\proofbelowshift
%
% conclusion sticks out to left of immediate hypotheses
\ifdim  \dimen1<0pt
\then   \advance\shortenproofleft\dimen1
        \advance\dimen0-\dimen1
        \dimen1=0pt
%       now it sticks out to left of tree!
        \ifdim  \shortenproofleft<0pt
        \then   \setbox\proofabove=\hbox{%
                        \kern-\shortenproofleft\unhbox\proofabove}%
                \shortenproofleft=0pt
        \fi
\fi
%
% and to the right
\ifdim  \dimen4<0pt
\then   \advance\shortenproofright\dimen4
        \advance\dimen0-\dimen4
        \dimen4=0pt
\fi
%
% make sure enough space for label
\ifdim  \shortenproofright<\wd\proofrulename
\then   \shortenproofright=\wd\proofrulename
\fi
%
% calculate new indentations
\dimen2=\shortenproofleft \advance\dimen2 by\dimen1
\dimen3=\shortenproofright\advance\dimen3 by\dimen4
%
% make the rule or dots, with name attached
\ifproofdots
\then
        \dimen6=\shortenproofleft \advance\dimen6 .5\dimen0
        \setbox1=\vbox to\proofdotseparation{\vss\hbox{$\cdot$}\vss}%
        \setbox0=\hbox{%
                \advance\dimen6-.5\wd1
                \kern\dimen6
                $\vcenter to\proofdotnumber\proofdotseparation
                        {\leaders\box1\vfill}$%
                \unhbox\proofrulename}%
\else   \dimen6=\fontdimen22\the\textfont2 % height of maths axis
        \dimen7=\dimen6
        \advance\dimen6by.5\proofrulebreadth
        \advance\dimen7by-.5\proofrulebreadth
        \setbox0=\hbox{%
                \kern\shortenproofleft
                \ifdoubleproof
                \then   \hbox to\dimen0{%
                        $\mathsurround0pt\mathord=\mkern-6mu%
                        \cleaders\hbox{$\mkern-2mu=\mkern-2mu$}\hfill
                        \mkern-6mu\mathord=$}%
                \else   \vrule height\dimen6 depth-\dimen7 width\dimen0
                \fi
                \unhbox\proofrulename}%
        \ht0=\dimen6 \dp0=-\dimen7
\fi
%
% set up to centre outermost tree only
\let\doll\relax
\ifwasinsideprooftree
\then   \let\VBOX\vbox
\else   \ifmmode\else$\let\doll=$\fi
        \let\VBOX\vcenter
\fi
% this \vbox or \vcenter is the actual output:
\VBOX   {\baselineskip\proofrulebaseline \lineskip.2ex
        \expandafter\lineskiplimit\ifproofdots0ex\else-0.6ex\fi
        \hbox   spread\dimen5   {\hfi\unhbox\proofabove\hfi}%
        \hbox{\box0}%
        \hbox   {\kern\dimen2 \box\proofbelow}}\doll%
%
% pass new indentations out of scope
\global\dimen2=\dimen2
\global\dimen3=\dimen3
\egroup % NESTED ZERO
\ifonleftofproofrule
\then   \shortenproofleft=\dimen2
\fi
\shortenproofright=\dimen3
%
% some space on right and flag we've just made a tree
\onleftofproofrulefalse
\ifinsideprooftree
\then   \hskip.5em plus 1fil \penalty2
\fi
}
\newcommand{\ninfer}[3]
     {\prooftree
          #1 
          \justifies #2
          \using #3
      \endprooftree}
\newcommand{\FV}{\ens{V}}
\newcommand{\ens}[1]{\mathbb{#1}}
\newcommand{\dom}{\textit{dom}}
\newcommand{\ea}{\textnormal{\tt e}}
\newcommand{\mea}{\textnormal{\tt me}}
\newcommand{\eaa}{\ea_1}
\newcommand{\eai}{\ea_i}
\newcommand{\ean}{\ea_n}
\newcommand{\Var}{\FV}
\newcommand{\xa}{\variable{x}}
\newcommand{\variable}[1]{\textnormal{\tt #1}}
\newcommand{\xaa}{\variable{x}_1}
\newcommand{\xai}{\variable{x}_i}
\newcommand{\xan}{\variable{x}_n}
\newcommand{\xb}{\variable{y}}
\newcommand{\xc}{\variable{z}}
\newcommand{\xd}{\variable{u}}
\newcommand{\vetat}{\variable{state}}
\newcommand{\true}{\textnormal{\tt true}}
\newcommand{\false}{\textnormal{\tt false}}
\newcommand{\Operator}{\ens{O}}
\newcommand{\Class}{\ens{C}}
\newcommand{\Method}{\ens{M}}
\newcommand{\nest}{\nu}
\newcommand{\lev}{\lambda}
\newcommand{\oper}[1]{\tt{#1}}
\newcommand{\op}{\oper{op}} 
\newcommand{\new}{\textnormal{\tt new}}
\newcommand{\nul}{\textnormal{\tt null}}
\newcommand{\extends}{\textnormal{\tt extends}}
\newcommand{\charac}{\textnormal{\tt char}}
\newcommand{\bool}{\textnormal{\tt boolean}}
\newcommand{\void}{\textnormal{\tt void}}
\newcommand{\ent}{\textnormal{\tt int}}
\newcommand{\return}{\textnormal{\tt return}}
\newcommand{\breaks}{\textnormal{\tt break}}
\newcommand{\main}{\textnormal{\tt main}}
\newcommand{\this}{\textnormal{\tt this}}
\newcommand{\pop}{\textnormal{\tt pop}}
\newcommand{\push}{\textnormal{\tt push}}
\newcommand{\Command}[1]{\textnormal{\tt #1}}
\newcommand{\ca}{\Command{I}}
\newcommand{\mi}{\Command{MI}}
\newcommand{\mj}{\Command{MJ}}
\newcommand{\caa}{\ca_{\textnormal{\tt 1}}}
\newcommand{\cab}{\ca_{\textnormal{\tt 2}}}
\newcommand{\cb}{\Command{J}}
\newcommand{\cc}{\Command{K}}
\newcommand{\cd}{\Command{L}}
\newcommand{\instr}[1]{\mathtt{#1}}
\newcommand{\iasg}{\instr{:=}}
\newcommand{\iwh}{\instr{while}}
\newcommand{\iwhile}{\instr{while}}
\newcommand{\ifor}{\instr{for}}
\newcommand{\cond}{\instr{condition}}
\newcommand{\ins}{\instr{Ins}}
\newcommand{\inc}{\instr{Increment}}
\newcommand{\iret}{\instr{return}}
\newcommand{\iif}{\instr{if}}
\newcommand{\ithen}{\instr{}}
\newcommand{\ielse}{\instr{else}}
\newcommand{\ido}{\instr{}}
\newcommand{\m}{\textnormal{\tt m}}
\newcommand{\n}{\textnormal{\tt cst}_\tau}
\newcommand{\C}{\textnormal{\tt C}}
\newcommand{\D}{\textnormal{\tt D}}
\newcommand{\Cons}{\textnormal{\tt k}}
\newcommand{\class}{\mathcal{C}}
\newcommand{\Exe}{\textnormal{\tt Exe}}
\newcommand{\pom}{p_\heap}
\newcommand{\stack}{\mathcal{S}}
\newcommand{\heap}{\mathcal{H}}
\newcommand{\conf}{\mathcal{C}}
\newcommand{\conff}{\mathcal{D}}
\newcommand{\data}{\mathcal{I}}
\newcommand{\rgl}{::=}
\newcommand{\typenv}{\Gamma}
\newcommand{\Imp}{\vDash}
\newcommand{\imp}{\vdash}
\newcommand{\sem}[1]{\llbracket #1 \rrbracket}
\newcommand{\slat}[1]{\ty{#1}}
\newcommand{\sla}{\slat{\alpha}}
\newcommand{\ord}{\preceq}
\newcommand{\join}{\vee}
\newcommand{\tier}[1]{\mathbf{#1}}
\newcommand{\tiera}{\tier{0}}
\newcommand{\tierb}{\tier{1}}
\newcommand{\ty}[1]{#1}
\newcommand{\tya}{\ty{\alpha}}
\newcommand{\tyb}{\ty{\beta}}
\newcommand{\dord}{\unlhd} 
\newcommand\mcall{\sqsubset}
\DeclareFontFamily{U}{mathb}{\hyphenchar\font45}
\DeclareFontShape{U}{mathb}{m}{n}{
      <5> <6> <7> <8> <9> <10> gen * mathb
      <10.95> mathb10 <12> <14.4> <17.28> <20.74> <24.88> mathb12
      }{}
\DeclareSymbolFont{mathb}{U}{mathb}{m}{n}
\DeclareMathSymbol{\sqsubsetneqq}   {3}{mathb}{"90}
\newcommand\smcall{\sqsubsetneqq}
\newcommand{\taille}[1]{|#1|}
\newcommand{\size}[1]{|#1|}
\title{A Type-Based Complexity Analysis of Object Oriented Programs}
\author{Emmanuel Hainry}
\ead{hainry@loria.fr}
\author{Romain P\'echoux}
\ead{pechoux@loria.fr}
\address{Universit{\'e} de Lorraine, CNRS, Inria, LORIA, Nancy, France}
\begin{document}
\begin{abstract}
A type system is introduced for a generic Object Oriented programming language in order to infer resource upper bounds. A sound and complete characterization of the set of polynomial time computable functions is obtained. As a consequence, the heap-space and the stack-space requirements of typed programs are also bounded polynomially. This type system is inspired by previous works on Implicit Computational Complexity, using tiering and non-interference techniques. The presented methodology has several advantages. First, it provides explicit big $O$ polynomial upper bounds to the programmer, hence its use could allow the programmer to avoid memory errors.  Second, type checking is decidable in polynomial time. Last, it has a good expressivity since it analyzes most object oriented features like inheritance, overload, override and recursion. Moreover it can deal with loops guarded by objects and can also be extended to statements that alter the control flow like break or return.
\end{abstract}
\begin{keyword}
Object Oriented Program, Type system,  complexity, polynomial time.
\end{keyword}

\maketitle

\section{Introduction}

\subsection{Motivations}
In the last decade, the development of embedded systems and mobile computing has led to a renewal of interest in predicting program resource consumption. This kind of problematic is highly challenging for popular object oriented programming languages which come equipped with environments for applications running on mobile and other embedded devices (e.g. Dalvik, Java Platform Micro Edition (Java ME), Java Card and Oracle Java ME Embedded). 

The current paper tackles this issue by introducing a type system for a compile-time analysis of both heap and stack space requirements of OO programs thus avoiding memory errors. This type system is also sound and complete for the set of polynomial time computable functions on the Object Oriented paradigm. 

This type system combines ideas coming from tiering discipline, used for complexity analysis of function algebra~\cite{BC92,LM93}, together with ideas coming from non-interference, used for secure information flow analysis~\cite{VIS96}. The current work is an extended version of~\cite{HP15} and is strongly inspired by the seminal paper~\cite{M11}.

\subsection{Abstract OO language}
The results of this paper will be presented in a formally oriented manner in order to highlight their theoretical soundness. For this, we will consider a generic  Abstract Object Oriented language called AOO. It can be seen as a language strictly more expressive than Featherweight Java~\cite{IPW01} enriched with features like variable updates and while loops. The language is generic enough. Consequently, the obtained results can be applied both to impure OO languages (\emph{e.g.} Java) and to pure ones (\emph{e.g}. SmallTalk or Ruby). Indeed, in this latter case, it just suffices to forget rules about primitive data types in the type system. Moreover, it does not depend on the implementation of the language being compiled (ObjectiveC, OCaml, Scala, ...) or interpreted (Python standard implementation, OCaml, ...). There are some restrictions: it does not handle exceptions, inner classes, generics, multiple inheritance or pointers. Hence languages such as C++ cannot be handled. However we claim that the analysis can be extended to exceptions, inner classes and generics. This is not done in the paper in order to simplify the technical analysis. The presented work captures Safe Recursion on Notation by Bellantoni and Cook~\cite{BC92} and we conjecture that it could be adapted to programs with higher-order functions. The intuition behind such a conjecture is just that tiers are very closely related to the $!$ and $\S$ modalities of light logics~\cite{Girard98}.

\subsection{Intuition}
The heap is represented by a directed graph where nodes are object addresses and arrows relate an object address to its field addresses. The type system splits variables in two universes: tier $\tiera$ universe and tier $\tierb$ universe. In this setting, the high security level is tier $\tiera$ while low security level is tier $\tierb$. While tier $\tierb$ variables are pointers to nodes of the initial heap, tier $\tiera$ variables may point to newly created addresses. The information may flow from tier $\tierb$ to tier $\tiera$, that is a tier $\tiera$ variable may depend on tier $\tierb$ variables. However the presented type system precludes flows from $\tiera$ to $\tierb$. Indeed once a variable has stored a newly created instance, it can only be of tier $\tiera$. Tier $\tierb$ variables are the ones that can be used either as guards of a while loop or as a recursive argument in a method call whereas tier $\tiera$ variables are just used as storages for the computed data. This is the reason why, in analogy with information-flow analysis, tier $\tiera$ is the high security level of the current setting, though this naming is opposed to the {\sc icc} standard interpretation where tier $\tierb$ is usually seen as ``safer'' than $\tiera$ because its use is controlled and restricted.

The polynomial upper bound is obtained as follows: if the input graph structure has size $n$ then the number of distinct possible configurations for $k$ tier $\tierb$ variables is at most $O(n^k)$. For this, we put some restrictions on operations that can make the memory grow : constructors for the heap and operators and method calls for the stack.

\subsection{Example}
Consider the following Java code duplicating the length of a boolean $\tt BList$ as an illustrating example:
\vspace*{-0.2cm}
\begin{verbatim}
y := x.clone();
while (x != null){
    y := new BList(true,y);
    x := x.getQueue();
}
\end{verbatim}
\vspace*{-0.2cm}
The tier of variable $\verb!x!$ will be enforced to be $\tierb$ since it is used in a while loop guard and in the call of the recursive method $\verb!clone!$. On the opposite, the tier of variable $\verb!y!$ will be enforced to be $\tiera$ since the $\verb!y:=new BList(true,y);!$ instruction enlarges the memory use. For each assignment, we will check that the tier of the variable assigned to is equal to (smaller than for primitive data) the tier of the assigned expression. Consequently, the assignment $\verb!y:=x.clone();!$ is typable in this code (since the call $\verb!x.clone();!$ is of tier $\tiera$ as it makes the memory grow) whereas it cannot be typed if the first instruction is to be replaced by either $\verb!x:=y.clone();!$ or $\xa \iasg \xb$. 

\subsection{Methodology}
The OO program complexity analysis presented in this paper can be summed up by the following figure: 
\begin{center}
\begin{tikzpicture}[node distance=3cm,
fleche/.style={->},
para/.style={draw, trapezium, trapezium left angle=60, trapezium right angle=120, trapezium stretches=true, text width=1cm, minimum height=14mm},
box/.style={draw,rounded corners,text width=11mm},
carre/.style={draw,minimum size=1cm}]
\node[carre] (p) {$P$};
\node[carre] (q) [right of=p] {$\tilde{P}$};
\node[para] (c1) [right of=q] {Poly cert.};
\node[draw,thick, star, star points=10, node distance=14mm,scale=.5] (s1) [above right of=c1] {}; 
\node[para,node distance=2cm] (c2) [above of=q] {Term. cert.};
\node[draw,thick, star, star points=9, node distance=14mm,scale=.5] (s2) [above right of=c2] {}; 

\draw[fleche] (p) to node[above] {transform}
(q);
\draw[fleche] (q) to node[above] {safety} (c1);
\draw[fleche, dashed] (q) to node[left] {termination} (c2);
\end{tikzpicture}
\end{center}

In a first step, given a program $P$ of a given OO programming language, we first apply a transformation step in order to obtain the AOO program $\tilde{P}$. This transformation contains the following steps:
\begin{itemize}
\item convert each syntactical construct of the source language in $P$ to the corresponding construct in the abstract OO language. In particular, for statements can be replaced by while statements,
\item for all public fields of $P$, write the corresponding getter and setter in $\tilde{P}$,
\item $\alpha$-convert the variables so that there is no name clashes in $\tilde{P}$,
\item flatten the program (this will be explained in Subsection~\ref{flatflat}).
\end{itemize}
All these steps can be performed in polynomial time and the program abstract semantics is preserved. Consequently, $P$ terminates iff $\tilde{P}$ terminates.

In a second step,  a termination check and a safety check can be performed in parallel. The termination certificate can be obtained using existing tools (see the related works subsection). As the semantics is preserved, the check can also be performed on the original program $P$ or on the compiled bytecode. 
In the safety check, a polynomial time type inference (Proposition~\ref{poltypeinf}) is performed together with a safety criterion check on recursive methods. This latter check 
called \emph{safety} can be checked in polynomial time. A more general criterion, called general safety and which is unfortunately undecidable, is also provided. If both checks succeed, Theorem~\ref{sound} ensures polynomial time termination. 

\subsection{Outline}
In Section 2 and 3, the syntax and semantics of the considered generic language AOO are presented. Note that the semantics is defined on meta-instructions, flattened instructions that make the semantics formal treatment easier. The main contribution: a tier based type system is presented in Section 4 together with illustrating examples. In Section 5, two criteria for recursive methods are provided: the safety that is decidable in polynomial time and the general safety that is strictly more expressive but undecidable. Section 6 establishes the main non-interference properties of the type system. Section 7 and 8 are devoted to prove soundness and, respectively, completeness of the main result: a characterization of the set of polynomial time computable functions. As an aside, explicit space upper bounds on the heap and stack space usage are also obtained. Section 9 is devoted to prove the polynomial time decidability of type inference. Section 10 discusses extensions improving the expressivity, including an extension based on declassification.

%%%%%%%%%%%%%%%%%%%%%%%%%%%%%%%%%%%%%%%%%%%%%%%%%%%%%%%%%%%%%%%%
%%%%%%%%%%%%%%%%%%%%%%%%%%%%%%%%%%%%%%%%%%%%%%%%%%%%%%%%%%%%%%%%
%%%%%%%%%%%%%%%%%%%%%%%%%%%%%%%%%%%%%%%%%%%%%%%%%%%%%%%%%%%%%%%%
\section{Syntax of AOO} \label{syn}
In this section, the syntax of an Abstract Oriented Object programming language, called AOO, is introduced. This language is general enough so that any well-known OO programming language (Java, OCaml, Scala, ...) can be compiled to it under some slight restrictions (not all the features of these languages -- threads, user interface, input/output, ... -- are handled and some program transformations/refinements are needed for a practical application).
\paragraph{Grammar}
Given four disjoint sets $\Var$, $\Operator$, $\Method$ and $\Class$ representing the set of variables, the set of operators, the set of method names and the set of class names, respectively, expressions, instructions, constructors, methods and classes are defined by the grammar of Figure~\ref{fig1},
\begin{figure}[!ht]
\hrulefill
 $$
 \begin{array}{rcl}
\text{Expressions} \ni \ea  &\rgl& \xa \ |\ \n \ |\ \nul \ | \ \this \ | \ \op(\overline{\ea}) \\
 &&  |\  \new\ \C(\overline{\ea}) \ |\ \ea.\m(\overline{\ea})  \\ 
\rule{0mm}{1.5em}
\text{Instructions} \ni  \ca &\rgl& ; \ | \ [\tau]\ \xa \iasg \ea ;  \ | \  \caa \ \cab  \ | \  \iwh(\ea)\ido \{ \ca  \} \\
&&  | \ \iif (\ea)\ithen\{ \ca_1\}\ielse \{\ca_2\} \ |\ \ea.\m(\overline{\ea}); \\
\rule{0mm}{1.5em}
\text{Methods} \ni  \m_{\C} &\rgl&  \tau\ \m(\overline{\tau\ \xa}) \{\ca [\return\ {\xa};] \} \\
\rule{0mm}{1.5em}
\text{Constructors} \ni  \Cons_\C &\rgl&  \C(\overline{\tau\ \xb}) \{ \ca \} \\
\rule{0mm}{1.5em}
\text{Classes} \ni  \C &\rgl&  \C\ [\extends\ \D ]\ \{ \overline{\tau\ \xa;}\ \overline{\Cons_\C} \ \overline{\m_\C} \} 
 \end{array}
$$
\caption{Syntax of AOO classes}
\label{fig1}
\hrulefill
\end{figure}
where $\xa \in \Var$,  $\op \in \Operator$, $\m \in \Method$ and $\C \in \Class$, where
$[e]$ denotes some optional syntactic element $e$ and where $\overline{e}$ denotes a sequence of syntactic elements $e_1,\ldots,e_n$. Also assume a fixed set of discrete primitive types $\mathbb{T}$ to be given, $\textit{e.g.}\ \mathbb{T}=\{\void,\bool, \ent, \charac\}$ or $\mathbb{T}=\emptyset$ in the case of a pure OO language. In what follows, $\{\void,\bool\} \subseteq \mathbb{T}$ will always hold. The $\tau$s are type variables ranging over $\Class \cup \mathbb{T}$. The metavariable $\n$ represents a primitive type constant of type $\tau \in \mathbb{T}$. Let $\dom(\tau)$ be the domain of values of type $\tau$. For example, $\dom(\bool)=\{\true,\false\}$ or $\dom(\ent)=\mathbb{N}$. $\n \in \dom(\tau)$ holds. Finally, define $\dom(\mathbb{T})=\cup_{\tau \in \mathbb{T}}\dom(\tau)$.  Each primitive operator $\op \in \Operator$ has a fixed arity $n$ and comes equipped with a signature of the shape $\op::\tau_1 \times \cdots \times \tau_n \to \tau_{n+1}$ fixed by the language implementation and such that $\tau_1,\ldots,\tau_n \in  \Class \cup \mathbb{T}$ and $\tau_{n+1} \in \mathbb{T}$. That is, operator outputs are of primitive type. An example of such operator will be \verb!==!. Also note, that primitive operators can be both considered to be applied to finite data-types as for Java integers as well as infinite datatype as for Python integers. 

 The AOO syntax does not include a $\ifor$ instruction based on the premise that, as in Java, a for statement $\ifor(\tau\ \xa \iasg \ea; \cond ; \inc)\{ \ins \}$ can be simulated by the statement $\tau\ \xa \iasg \ea;$ $\iwh(\cond)\ \ido \{ \ins \ \inc; \}$. Given a method $\tau\ \m(\tau_1\ \xa_1,\ldots,\tau_n\ \xa_n) \{\ca\ [\return\ {\xa};] \}$ of $\C$, its signature is $\tau\ \m^\C(\tau_1,\ldots,\tau_n)$, the notation $\m^\C$ denoting that $\m$ is declared in $\C$. The signature of a constructor $\Cons_\C$ is  $ \C(\overline{\tau})$. Note that method overload is possible as a method name may appear in several distinct signatures.

\paragraph{Variables toponymy}
In a class $\class=  \C \{ \tau_1\ \xa_1;\ldots;\tau_n \ \xa_n;\ \ \overline{\Cons_\C} \ \overline{\m_\C} \}$, the variables $\xa_i$ are called fields. In a method or constructor $\tau\ \m(\overline{\tau\ \xb}) \{\ca [\return\ {\xa};] \}$, the arguments $\xb_j$ are called parameters. Each variable $\xa$ declared in an assignment of the shape ${\tau}\ \xa \iasg \ea ;$ is called a local variable. Hence, in a given class, a variable is either a field, or a parameter or a local variable. Let $\C.\mathcal{F}=\{\overline{\xa}\}$ to be the set of fields in a class $\C\ \{ \overline{\tau\ \xa;}\ \overline{\Cons_\C}\ \overline{\m_\C}\}$ and $\mathcal{F}=\cup_{\C \in \Class} \C.\mathcal{F} \subseteq \Var$ be the set of all fields of a given program $P$, when $P$ is clear from the context.

\paragraph{No access level}
In an AOO program, the fields of an instance cannot be accessed directly using the ``.'' operator. Getters will be needed. This is based on the implicit assumption that all fields are \texttt{private} since there is no field access in the syntax. On the opposite, methods and classes are all \texttt{public}. This is not a huge restriction for an OO programmer since any field can be accessed and updated in an outer class by writing the corresponding getter and setter. 
 
\paragraph{Inheritance}
Inheritance is allowed by the syntax of AOO programs through the use of the $\extends$ construct. Consequently, override is allowed by the syntax.
 In the case where $\C\ \extends\ \D$, the constructors $\C(\overline{\tau\ \xb}) \{\ca\}$ are constructors initializing both the fields of $\C$ and the fields of $\D$. Inheritance defines a partial order on classes denoted by $\C \dord \D$.

\paragraph{AOO programs} A \emph{program} is a collection of classes together with exactly one class $\Exe\{  \void\ \main()\{ \texttt{Init} \ \texttt{Comp} \} \}$ with $ \texttt{Init}, \texttt{Comp} \in \text{Instructions}$. The method $\main$ of class $\Exe$ is intended to be the entry point of the program. The instruction $ \texttt{Init}$ is called the {\emph{initialization instruction}}. Its purpose is to compute the program input, which is strongly needed in order to define the complexity of an AOO program (if there is no input, all terminating programs are constant time programs). The instruction $\texttt{Comp}$ is called the {\emph{computational instruction}}. The type system presented in this paper will analyze the complexity of this latter instruction. See Subsection~\ref{input} for more explanations about such a choice.

\paragraph{Well-formed programs} \label{subSynSem}
Throughout the paper, only well-formed programs satisfying the following conditions will be considered:
\begin{itemize}
\item Each class name $\C$ appearing in the collection of classes corresponds to exactly one class of name $\C \in \Class$.
\item Each local variable $\xa$ is both declared and initialized exactly once by a $\tau\ \xa := \ea;$ instruction for its first use.
\item A method output type is $\void$ iff it has no $\iret$ statement.
\item Each method signature is unique with respect to its name, class and input types. This implies that it is forbidden to define two signatures of the shape $\tau\ \m^\C(\overline{\tau})$ and $\tau'\ \m^\C(\overline{\tau})$ with $\tau \neq \tau'$.
\end{itemize}

\begin{example}\label{ex1}
Let the class \verb!BList! be an encoding of binary integers as lists of booleans (with the least significant bit in head). The complete code will be given in Example~\ref{ex8}.

{\tt
\begin{lstlisting}[basicstyle=\footnotesize]
  BList {
    boolean value;
    BList queue;

    BList() {
      value := true;
      queue := null;
    }
    
    BList(boolean v, BList q) {
      value := v;
      queue := q;
    }

    BList getQueue() { return queue; }

    void setQueue(BList q) {
      queue := q;
    }

    boolean getValue() { return value; }
    
    ... 
  }
\end{lstlisting}
}
\end{example}

\section{Semantics of AOO}\label{sem}
In this section, a pointer graph semantics of AOO programs is provided. Pointer graphs are reminiscent from Cook and Rackoff's Jumping Automata on Graphs~\cite{CR80}. 
A pointer graph is basically a multigraph structure representing the memory heap, whose nodes are references. The pointer graph semantics is designed to work on such a structure together with a stack, for method calls. The semantics is a high-level semantics whose purpose is to be independent from the bytecode or low-level semantics and will be defined on meta-instructions, a meta-language of flattened instructions with stack operations.

\subsection{Pointer graph}
A \emph{pointer graph} $\heap$ is a directed multigraph $(V,A)$. 
The nodes in $V$ are memory references and the arrows in $A$ link one reference to a reference of one of its fields.
Nodes are labeled by class names and arrows are labeled by the field name. In what follows, let $l_V$ be the node label mapping from $V$ to $\Class$ and $l_A$ be the arrow label mapping from $A$ to $\mathcal{F}$.

The memory heap used by a AOO program will be represented by a pointer graph. This pointer graph explicits the arborescent nature of objects: each constructor call will create a new node (memory reference) of the multigraph and arrows to the nodes (memory references) of its fields. Those arrows will be annotated by the field name. The heap in which the objects are stored corresponds to this multigraph. Consequently, bounding the heap memory use consists in bounding the size of the computed multigraph. 

\subsection{Pointer mapping}
A variable is of primitive (resp. reference) data type if it is declared using a type metavariable in $\mathbb{T}$ (resp. $\Class$).
A \emph{pointer mapping} with respect to a given pointer graph $\heap=(V,A)$ is a partial mapping $\pom: \Var \cup\{\this\} \mapsto V \cup \dom(\mathbb{T})$ associating primitive value in $\dom(\mathbb{T})$ to some variable of primitive data type in $\Var$ and a memory reference in $V$ to some variable of reference data type or to the current object $\this$.

 As usual, the domain of a pointer mapping $\pom$ is denoted $\dom(\pom)$. 
 
 By completion, for a given variable $\xa \notin \dom(\pom)$, let $\pom(\xa)= \nul$, if $\xa$ is of type $\C$, $\pom(\xa)=0$, if $\xa$ is of type $\ent$, $\pom(\xa)=\false$, if $\xa$ is of type $\bool$, ...
 The use of completion is just here to ensure that the presented semantics will not get stuck.
 
 \begin{example}\label{ex2}
Consider the class $\verb!BList!$ of Example~\ref{ex1}. Let $\verb!A!$ be a class having two $\verb!BList!$ fields $\xa_1$ and $\xa_2$ and $\verb!B!$ be a class extending $\verb!BList!$. Consider the initialization instruction $\tt Init$ defined by:
{\tt
\begin{lstlisting}
  Init ::=    BList b := new BList();
              BList c := new BList(true,b);
              B d := new B(c);
              A e := new A(c,c); 
\end{lstlisting}
}
Figure~\ref{graph:constructors} illustrates the pointer graph associated to this sequence of object creations. The figure contains both the pointer graph of labeled nodes and arrows together with the pointer mapping whose domain is represented by boxed variables and whose application is symbolized by snake arrows.

\begin{figure}
\centering
\begin{tikzpicture}[node distance=22mm,
varp/.style={->,decorate,decoration={snake,post length=1mm}}, 
var/.style={draw,node distance=1.1cm}]
\node (e1) {$\verb!A!$};
\node (c1) [right of=e1] {$\verb!BList!$};
\node (d1) [below of=c1] {$\verb!B!$};
\node (b1) [right of=c1] {$\verb!BList!$};
\node (a1) [right of=b1] {$\&\nul$};
\draw[->] (e1) to[bend left] node [above] {$\xa_1$} (c1) ;
\draw[->] (e1) to[bend right] node [below] {$\xa_2$} (c1);
\draw[->] (d1) to node [right] {${\tt queue}$} (c1);
\draw[->] (c1) to node [above] {${\tt queue}$} (b1);
\draw[->] (b1) to node  [above] {${\tt queue}$} (a1);

\node[var] (b) [above of=b1] {b};
\node[var] (c) [above of=c1] {c};
\node[var] (d) [left of=d1] {d};
\node[var] (e) [above of=e1] {e};

\draw[varp] (b) to (b1);
\draw[varp] (c) to (c1);
\draw[varp] (d) to (d1);
\draw[varp] (e) to (e1);
\end{tikzpicture}
\caption{Example of pointer graph and pointer mapping}\label{graph:constructors}
\end{figure}
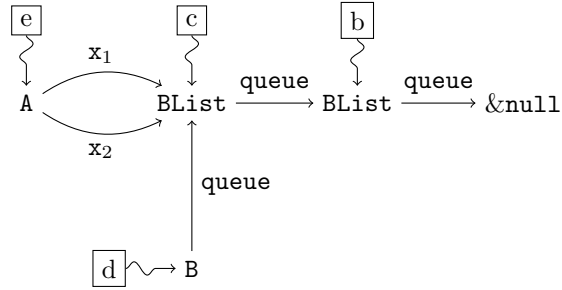
\end{example}
As we will see shortly, the semantics of an assignment $\xa:=\ea;$, for some variable $\xa$ of reference type, consists in updating the pointer mapping in such a way that $\pom(\xa)$ will be the reference of the object computed by $\ea$. By abuse of notation, let $\pom(\ea)$ be a notation for representing this latter reference.

In what follows, let $\pom[\xa \mapsto v]$, $v \in V \cup \dom(\mathbb{T})$, be a notation for the pointer mapping $\pom'$ that is equal to $\pom$ but on $\xa$ where the value is updated to $v$.

\subsection{Pointer stack}
For a given pointer graph $\heap$, a \emph{stack frame} $s_{\heap}$ is a pair $\langle s , \pom \rangle$ composed by a method signature $s=\tau\ \m^\C(\tau_1,\ldots,\tau_n)$ and a pointer mapping $\pom$.

A \emph{pointer stack} $\stack_\heap$ is a LIFO structure of stack frames corresponding to the same pointer graph $\heap$. Define $\top \stack_\heap$ to be the top pointer mapping of $\stack_\heap$. Finally, define $\pop(\stack_\heap)$ to be the pointer stack obtained from $ \stack_\heap$ by removing $\top \stack_\heap$ and $\push(s_{\heap},\stack_\heap)$ to be the pointer stack obtained by adding $s_{\heap}$ to the top of $\stack_\heap$.

In what follows, we will write just $\pop$ and $\push(s_{\heap})$ instead of $\pop(\stack_\heap)$ and $\push(s_{\heap},\stack_\heap)$ when $\stack_\heap$ is clear from the context.

As expected, the pointer stack of a program is used when calling a method: references to the parameters are pushed on a new stack frame at the top of the pointer stack. The pointer mappings of a pointer stack $\stack_\heap$ map method parameters to the references of the arguments on which they are applied, respecting the dynamic binding principle found in Object Oriented Languages.  
\begin{example}\label{ex3}
For example, considering the method $\verb!setQueue(BList q)!$ defined in the class $\verb!BList!$ of Example~\ref{ex1}, adding a method call $\verb!d.setQueue(b);!$ at the end of the initialization instruction of Example~\ref{ex2} will push a new stack frame $\langle \void\ {\verb!setQueue!}^{\verb!BList!} (\verb!BList!),\pom \rangle$ on the pointer stack. $\pom(\verb!q!)$ will point to $\pom(\verb!b!)$, the node corresponding to the object computed by $\verb!b!$, and $\pom(\this)$ will point to $\pom(\verb!d!)$, the node corresponding to the object computed by $\verb!d!$. 
\end{example}
At the beginning of an execution, the pointer stack will only contain the stack frame $\langle \void\ \main^{\Exe}(), p_0 \rangle$ ; $p_0$ being a mapping associating each local variable in the main method to the $\nul$ reference, whether it is of reference type, and to the basic primitive value otherwise ($\false$ for $\bool$ and $0$ for $\ent$). 
We will see shortly that a pop operation removing the top pointer mapping from the pointer stack will correspond, as expected, to the evaluation of a return statement in a method body.

\subsection{Memory configuration}

A memory configuration $\conf=\langle \heap, \stack_\heap \rangle$ consists in a pointer graph $\heap$ together with a pointer stack $\stack_\heap$. Among memory configurations, we distinguish the \emph{initial configuration} $\conf_{0}$ defined by $\conf_{0}=\langle (\{\& \nul\},\emptyset), [\langle \void\ \main^{\Exe}(), p_0 \rangle] \rangle$ where $\& \nul$ is the reference of the $\nul$ object (\textit{i.e.} $l_V(\& \nul)= \nul$) and $[s_\heap]$ denotes the pointer stack composed of only one stack frame $s_\heap$.

In other words, the initial configuration is such that the pointer graph only contains the null reference as node and no arrows and a pointer stack with one frame for the $\main$ method call.

\subsection{Meta-language and flattening}\label{flatflat}

The semantics of AOO programs will be defined on a meta-language of expressions and instructions. Meta-expressions are flat expressions. Meta-instructions consist in flattened instructions and \texttt{pop} and \texttt {push} operations for managing method calls. Meta-expressions and meta-instructions are defined formally by the following grammar:
$$\begin{array}{lllll}
\mea &\rgl & 
\xa \ |\ \n \ |\ \nul \ | \ \this   \ |\ \op(\overline{\xa}) \\
& &|\ \new\ \C(\overline{\xa}) \ |\ \xb.\m(\overline{\xa})
\end{array}$$
$$\begin{array}{lllll}
\mi &\rgl & 
 ; \ | \ [\tau]\ \xa \iasg \mea ;  \ | \   \mi_1 \ \mi_2 \ |\ \xb.\m(\overline{\xa});  %\\
\\
 & & | \ \iwh(\xa) \ido \{ \mi  \} \ | \ \iif (\xa)\ithen\{ \mi_1\} \ielse \{\mi_2\} \\
 & & | \   \pop; \ | \ \push(s_\heap);\ | \ \epsilon
\end{array}$$
where $\epsilon$ denotes the empty meta-instruction.

Flattening an instruction $\ca$ into a meta-instruction $\underline{\ca}$ will consist in adding fresh intermediate variables for each complex expression parameter. This procedure is standard  and defined in Figure~\ref{flat}.
\begin{figure*}[!t]
\hrulefill
\centering 
\begin{align*}
\underline{[[\tau]\ \xa \iasg \ea];} &=[[\tau]\ \xa \iasg \ea]; \quad \text{if }\ea \in \Var \cup \dom(\mathbb{T}) \cup \{\this,\nul\}\\
\underline{[\tau]\ \xa \iasg \op(\eaa,\ldots,\ean);}&=\underline{\tau_1\ \xaa \iasg \eaa; \ldots \tau_n\ \xan \iasg \ean;}\ [\tau]\ \xa =\op(\xaa,\ldots,\xan);
\\
\underline{[\tau]\ \xa \iasg \new\ \C(\eaa,\ldots,\ean);}&=\underline{\tau_1\ \xaa \iasg \eaa; \ldots \tau_n\ \xan \iasg \ean;}\ [\tau]\ \xa \iasg \new \ \C(\xaa,\ldots,\xan); 
\\
[\underline{[\tau]\ \xa \iasg] \ea.\m(\eaa,\ldots,\ean);}&=\underline{\tau_{n+1}\ \xa_{n+1}=\ea; \tau_1\ \xaa \iasg \eaa; \ldots \tau_n\ \xan \iasg \ean;}\\
&\phantom{=} \ [[\tau]\ \xa \iasg]\xa_{n+1}.\m(\xaa,\ldots,\xan);
\\
\underline{ \ca_1\ \ca_2}&=\underline{\ca_1} \ \underline{\ca_2}\\
\underline{\iwh(\ea)\ido \{ \ca  \} } &= \underline{\bool \ \xaa := \ea;}\ \iwh(\xaa)\ido \{ \underline{\ca} \ \underline{\xaa := \ea;}\} 
\\
\underline{\iif (\ea)\ithen\{ \ca_1\} \ielse \{\ca_2\}} &= \underline{\bool \ \xaa := \ea;}\ \iif (\xaa)\ithen\{ \underline{\ca_1}\} \ielse \{\underline{\ca_2}\}
\end{align*}
{All $\xai$ represent fresh variables and the types $\tau_i$ match the expressions $\eai$ types}
\caption{Instruction flattening}
\label{flat}
\hrule
\end{figure*}
 The flattened meta-instruction will keep the semantics of the initial instruction unchanged. The main interest in such a program transformation is just that all the variables will be statically defined in a meta-instruction whereas they could be dynamically created by an instruction, hence allowing a cleaner (and easier) semantic treatment of meta-instructions. We extend the flattening to methods (and constructors) by $\tau \ \m(\tau_1\ \xa_1,\ldots,\tau_n\ \xa_n)\{ \underline{\ca} \ [\return \ \xa;]\}$ so that each instruction is flattened. A flattened program $\underline{P}$ is the program obtained by flattening all the instructions in the methods of a program $P$. Notice that the flattening of an AOO program is also an AOO program, as the flattening is a closed transformation with respect to the AOO syntax, and that the flattening is a polynomially bounded program transformation.
\begin{lemma}\label{lem:flat}
Define the size of an instruction $\size{\ca}$  (respectively meta-instruction $\size{\mi}$) to be the number of symbols in $\ca$ (resp. $\mi$). For each instruction $\ca$, we have $\size{\underline{\ca}} = O(\size{\ca}^2)$. 
\end{lemma}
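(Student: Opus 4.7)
The plan is to prove the bound by structural induction. First I would introduce an auxiliary lemma on expressions: let
\[
g(\ea) = \size{\underline{\tau\ \xa \iasg \ea;}}
\]
for any type $\tau$ and variable $\xa$ (the value is independent of these choices up to an additive constant). I would prove $g(\ea) = O(\size{\ea}^2)$ by structural induction on $\ea$. The base cases ($\xa$, $\n$, $\nul$, $\this$) give $g(\ea) = O(1)$. For a compound expression such as $\op(\eaa,\ldots,\ean)$, the flattening yields $\underline{\tau_1\ \xaa \iasg \eaa;}\ldots\underline{\tau_n\ \xan \iasg \ean;}\ \xa \iasg \op(\xaa,\ldots,\xan);$, so
\[
g(\ea) \leq \sum_{i=1}^n g(\eai) + O(n) \leq C\sum_{i=1}^n \size{\eai}^2 + O(n) \leq C\Bigl(\sum_{i=1}^n \size{\eai}\Bigr)^2 + O(\size{\ea}),
\]
which is $O(\size{\ea}^2)$ because $\size{\ea} = \Theta(n + \sum_i \size{\eai})$. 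The cases $\new\ \C(\overline{\ea})$ and $\xb.\m(\overline{\ea})$ are analogous (the latter introduces one extra intermediate variable for the receiver).

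Second, I would prove the lemma itself by structural induction on the instruction $\ca$, using the expression lemma as a black box. The assignment case reduces directly to $g(\ea)$. The sequence case is additive: $\size{\underline{\ca_1\ \ca_2}} = \size{\underline{\ca_1}} + \size{\underline{\ca_2}} \leq C(\size{\ca_1}^2 + \size{\ca_2}^2) \leq C\size{\ca}^2$. The method-call-as-statement case follows from the expression analysis.

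The interesting cases are the conditional and especially the $\iwh$ case, since the flattening duplicates the guard. For $\iwh(\ea)\ido\{\ca'\}$, writing $a = \size{\ea}$ and $b = \size{\ca'}$, the flattened instruction has size at most
\[
2\,g(\ea) + \size{\underline{\ca'}} + O(1) \leq 2Ca^2 + Cb^2 + O(1) \leq C'(a+b)^2 + O(1) = O(\size{\ca}^2),
\]
where the factor $2$ reflects the two occurrences of $\underline{\xaa \iasg \ea;}$. The $\iif$ case is handled identically but without duplication. The main obstacle is precisely this guard duplication: it is the reason a linear bound fails and one only obtains a quadratic one, so the correct formulation and proper choice of constants in the induction (absorbing the lower-order $O(\size{\ca})$ additive term into the quadratic bound) must be done carefully. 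Once this is handled, combining the two inductions yields $\size{\underline{\ca}} = O(\size{\ca}^2)$.
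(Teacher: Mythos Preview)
Your argument is correct and follows the same structural-induction approach as the paper, with the added detail of a separate expression lemma (the paper simply asserts that each atomic instruction flattens to something linear in its own size and that there are linearly many atomic instructions). One side remark is inaccurate, though it does not affect your proof of the stated bound: guard duplication is \emph{not} why a linear bound fails---each guard is duplicated exactly once and expression flattening is itself $O(\size{\ea})$, so in fact $\size{\underline{\ca}}=O(\size{\ca})$ holds; the lemma's quadratic bound is simply loose.
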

\begin{proof}
By induction on the definition of flattening. The flattening of each atomic instruction adds a number of new symbols that is at most linear in the size of the original instruction. The number of instructions is also linear in $\size{\ca}$. 
\end{proof}
\begin{corollary}\label{lem:flatsize}
Define the size of an AOO program $\size{P}$ to be the number of symbols in $P$. For each program $P$, we have $\size{\underline{P}} = O(\size{P}^2)$. 
\end{corollary}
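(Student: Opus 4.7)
The plan is to reduce the corollary to Lemma~\ref{lem:flat} by observing that flattening acts only on the instructions contained inside method and constructor bodies, while leaving untouched the class headers, field declarations, method and constructor signatures, the $\extends$ clauses, and the $\return$ statements. Consequently, the size difference between $P$ and $\underline{P}$ comes entirely from the replacement of each instruction $\ca$ by its flattened form $\underline{\ca}$.

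First I would fix notation: let $\ca_1,\ldots,\ca_k$ enumerate the (top-level) instructions appearing in the method and constructor bodies of $P$, and let $c$ denote the remaining syntactic weight of $P$ (class names, field declarations, signatures, braces, \texttt{extends} clauses, $\return$ tokens, etc.). Since flattening does not alter this structural skeleton, we have
\[
\size{P} = c + \sum_{i=1}^{k} \size{\ca_i}, \qquad \size{\underline{P}} = c + \sum_{i=1}^{k} \size{\underline{\ca_i}}.
\]

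Next I would apply Lemma~\ref{lem:flat} instancewise: there exist constants $K$ and $K'$ such that $\size{\underline{\ca_i}} \le K\,\size{\ca_i}^2 + K'$ for every $i$. Summing and using the elementary bound $\sum_i a_i^2 \le \bigl(\sum_i a_i\bigr)^2$ for non-negative $a_i$, I obtain
\[
\size{\underline{P}} \;\le\; c + K \sum_{i=1}^{k} \size{\ca_i}^2 + K' k \;\le\; c + K\Bigl(\sum_{i=1}^{k} \size{\ca_i}\Bigr)^{2} + K' k.
\]
Since both $k$ and $\sum_i \size{\ca_i}$ are bounded by $\size{P}$, and $c \le \size{P}$, the right-hand side is $O(\size{P}^2)$, which is precisely the claim.

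No step here should cause real difficulty; the only slightly delicate point is the justification that the non-instruction parts of $P$ are preserved verbatim by the flattening, so that the above decomposition is valid. This follows immediately from the inductive definition in Figure~\ref{flat}, which recurses only through the instruction constructors, together with the extension to methods and constructors stated just before Lemma~\ref{lem:flat}.
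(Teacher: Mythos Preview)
Your proposal is correct and follows the same approach as the paper, which simply states the corollary as an immediate consequence of Lemma~\ref{lem:flat} without giving any further argument. Your write-up fills in the obvious details (decomposing $P$ into its instruction bodies plus structural skeleton, summing the per-instruction quadratic bounds, and using $\sum_i a_i^2 \le (\sum_i a_i)^2$) that the paper leaves implicit.
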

An alternative choice would have been to restrict program syntax by requiring expressions to be flattened, thus avoiding the use of the meta-language. However such a choice would impact negatively the expressivity of this study. On another hand, one possibility might have been to generate local variables dynamically in the programming semantics but such a treatment makes the analysis of pointer mapping domain (i.e. the number of living variables) a very hard task to handle.

\begin{example}\label{ex4}
We add the method $\verb!decrement()!$ to the class $\verb!BList!$ of Example~\ref{ex1}:
{\tt
\begin{lstlisting}
  void decrement() { 
    if (value) {
      value := false; 
    }
    else{
      if (queue != null) {
        value := true;
        queue.decrement();
      } else {
        value := false; 
      }
    }
  } 
\end{lstlisting}
}
The program flattening will generate the following body for the flattened method:
{\tt
\begin{lstlisting}
  void decrement() { 
    if (value) {
      value := false; 
    }
    else{
      BList $\xa_1$ := null;
      BList $\xa_2$ := queue; 
      boolean $\xa_3$ := $\xa_2$ != $\xa_1$
      if ($\xa_3$) {
        value := true;
        queue.decrement();
      } else {
        value := false; 
      }
    }
  } 
\end{lstlisting}
}
where $\xa_1$, $\xa_2$ and $\xa_3$ are fresh variables.
\end{example}

\subsection{Program semantics}\label{seman}

 \begin{figure*}[!t]
\hrulefill
 \begin{align}
&(\conf, [\tau]\ \xa \iasg \xb;\mi ) \to  ( \conf[\xa \mapsto \conf(\xb)], \mi ) \quad \textit{if } \xa \notin \Class(\conf).\mathcal{F} \\
&(\conf, [\tau]\ \xa \iasg \xb;\mi ) \to  ( \conf[\conf(\this) \ \stackrel{ \xa}{\mapsto} \conf(\xb)], \mi ) \quad \textit{if } \left\{\begin{array}{l}
\xa \in \Class(\conf).\mathcal{F}\\
\conf(\this) \neq \nul
\end{array}\right.\\
&(\conf, [\tau]\ \xa \iasg \xb;\mi ) \to  ( \conf, \mi ) \quad \textit{if } \xa \in \Class(\conf).\mathcal{F} \textit{ and } \conf(\this) = \nul\\
&(\conf, [\tau]\ \xa \iasg \n;\mi ) \to  ( \conf[\xa \mapsto \n], \mi )  \\
& (\conf, [\tau]\ \xa \iasg \nul;\mi ) \to  ( \conf[\xa \mapsto\ \&\nul], \mi ) \\
&(\conf, [\tau]\ \xa \iasg \this;\mi ) \to  ( \conf[\xa \mapsto \conf(\this)], \mi ) \\
&(\conf, [\tau]\ \xa \iasg \op(\overline{\xb});\mi ) \to  (\conf[\xa \mapsto \sem{\op}(\overline{\conf(\xb)})], \mi ) \\
&(\conf, [\tau]\ \xa \iasg \new\ \C(\overline{\xb});\mi ) \to  (\conf[v  \mapsto \C][v \ \stackrel{ \xc_i}{\mapsto} \conf(\xb_i) ][ \xa \mapsto v], \mi )  \\
&\quad\text{where } v \text{ is a fresh node (memory reference) and }\C.\mathcal{F}=\{\xc_1,\ldots,\xc_n\}  \notag\\
&(\conf, \ ; \mi) \to (\conf, \mi) \\
&(\conf, [\xd \iasg]\xa.\m(\overline{\xb});\mi ) \to  (\conf, \\
&\qquad\push(\langle \tau\ \m^{\C\star}(\overline{\tau}),\top \conf [\this \mapsto \conf(\xa), \xc_i \mapsto \conf(\xa.\xc_i), \xa_i \mapsto \conf(\xb_i)] \rangle); \notag\\
& \qquad\mi' \ [\xd \iasg \xa';]\ \pop;\ \mi ) \notag\\
& \quad\text{if }\m \text{ is a flattened method } \tau\ \m(\tau_1\ \xa_1,\ldots \tau_n\ \xa_n)\{ \mi' \ [\return \ \xa';]\} \text{ in }\C\star\nonumber \\
&\quad\text{and }\C.\mathcal{F}=\{\xc_1,\ldots,\xc_n\}  \notag\\
&(\conf, \push(s_\heap);\mi ) \to  (\conf[\push(s_\heap)], \mi ) \\
&(\conf, \pop;\mi ) \to  (\conf[ \pop], \mi ) \\
&(\conf,   \iwh(\xa)\ido \{ \mi'  \}\ \mi ) \to (\conf, \ \mi'\ \iwh(\xa)\ido \{ \mi'  \} \ \mi ) \quad \text{if }\conf(\xa)=\true\\
&(\conf,   \iwh(\xa)\ido \{ \mi'  \}\ \mi ) \to ({\conf, \ \mi })\quad  \text{if }\conf(\xa)=\false\\
&(\conf, \iif (\xa)\ithen\{ \mi_{\true}\} \ielse \{\mi_{\false}\}\ \mi)  \to (\conf, \mi_{\conf(\xa)}\ \mi)
\end{align}
\caption{Semantics of AOO programs\label{fig:sem}}
\hrulefill
\end{figure*}

Informally, the small step semantics  $\to$ of AOO programs relates a pair $( \conf, \mi )$ of memory configuration $\conf$ and meta-instruction $\mi$  to another pair $( \conf', \mi' )$ consisting of a new memory configuration $\conf'$ and of the next meta-instruction $\mi'$ to be executed. Let $\to^*$  (respectively $\to^+$) be its reflexive and transitive (respectively transitive) closure. In the special case where $ ( \conf, \mi ) \to^* ( \conf', \epsilon)$, we say that $\mi$ \emph{terminates on memory configuration} $\conf$.

Now, before defining the formal semantics of AOO programs, we introduce some preliminary notations. 

Given a memory configuration $\conf=\langle \heap, \push(\langle \tau\ \m^\C(\overline{\tau}), \pom\rangle ,\stack_\heap )\rangle$:
\begin{itemize}
\item $\top \conf = \pom$
\item $\conf(\xa)= \pom(\xa)$
\item $\conf(\xa.\xc)= v$, with $v \in \heap$ such that $\conf(\xa) \stackrel{ \xc}{\mapsto}v \in \heap$
\item $\Class(\conf)=\C$
\item $\conf[\xa \mapsto v]=\langle \heap, \push(\langle \tau \m^\C(\overline{\tau}), \pom[\xa \mapsto v]\rangle ,\stack_\heap )\rangle$, $v \in V \cup \dom(\mathbb{T})$
\item $\conf[ \push(s_\heap)]=\langle \heap, \push(s_\heap, \push(\langle \tau \m^\C(\overline{\tau}), \pom\rangle ,\stack_\heap ))\rangle$
\item $\conf[\pop]=\langle \heap, \stack_\heap \rangle$
\end{itemize}
In other words,  $\top \conf$ is a shorthand for the pointer mapping at the top of the stack $\push(\langle \tau \m^\C(\overline{\tau}), \pom\rangle ,\stack_\heap )$. $\conf(\xa)$ is a shorthand notation for $\top \conf(\xa)$. $\conf(\xa.\xc)$ is the memory reference of the field $\xc$ of the object stored in $\xa$. 
$\Class(\conf)$ is the class of the current object under evaluation in the memory configuration. $\conf[\xa \mapsto v]$, $v \in V \cup \dom(\mathbb{T})$, is a notation for the memory configuration $\conf'$ that is equal to $\conf$ but on $\top \conf (\xa)$ where the value is updated to $v$. $\conf[ \push(s_\heap)]$ and  $\conf[\pop]$ are notations for the memory configuration where the frame $s_\heap$ has been pushed to the top of the pointer stack and where the top pointer mapping has been removed from the top of the stack, respectively.

Finally, let $\conf[ v  {\mapsto} \C]$, $v \in V$ denote a memory configuration $\conf'$ whose graph contains the new node $v$ labeled by $\C$ (i.e. $l_V(v)=\C$) and let $\conf[ v \stackrel{\xa}{\mapsto} w]$, $v,w \in V$, denote a memory configuration $\conf'$ whose pointer graph contains the new arrow $(v,w)$ labeled by $\xa$ (\textit{i.e.} $l_A((v,w))=\xa$). In the case where there was already some arrow $(v,u)$ labeled by $\xa$ (\textit{i.e.} $l_A((v,u)) = \xa$) in the graph then it is deleted and replaced by $(v,w)$.

Each operator $\op \in \Operator$ of signature  $\op::\tau_1 \times \cdots \times \tau_n \to \tau_{n+1}$
is assumed to compute a total function $\sem{\op}: \tau_1 \times \cdots \times \tau_n \to \tau_{n+1}$ fixed by the language implementation.  Operators will be discussed in more details in Subsection~\ref{operators}.

The formal rules of the small step semantics are provided in Figure~\ref{fig:sem}. Let us explain the meaning of these rules.
\begin{itemize}
\item Rules (1) to (7) are fairly standard rules only updating the top pointer mapping $\top \conf$ of the considered configuration but do not alter the other parts of the pointer stack. Rule (2) is the only rule altering the pointer graph. Rule (1) describes the assignment of a variable to another (that is not a field of the current object). 
Rule (2) describes a field assignment when the object is not null. Notice that in this case, the pointer graph is updated as the reference node of the current object points to the reference node of the assigned variable. Rule (3) indicates that if the object is null, assigning its fields results in skipping the instruction. Rule (4) is the assignment of a primitive constant to a variable. Rule (5) is the assignment of the null reference $\&\nul$ to a variable.  
Rule (6) consists in the assignment of the self-reference. Notice that such an assignment may only occur in a method body (because of well-formedness assumptions) and consequently the stack is non-empty and must contain a reference to $\this$. Rule (7) describes operator evaluation based on the prior knowledge of function $\sem{\op}$. 
\item Rule (8) consists in the creation of a new instance. Consequently, this rule adds a new node $v$ of label $\C$ and the corresponding arrows $(v,\conf(\xb_i))$ of label $\xc_i$ in the pointer graph of $\conf$. $\conf(\xb_i)$ are the nodes of the graph (or the primitive values) corresponding to the parameters of the constructor call and $\xc_i$ is the corresponding field name in the class $\C$. Finally, this rule adds a link from the variable $\xa$ to the new reference $v$ in the pointer mapping $\top \conf$.
\item Rule (9) just consists in the evaluation of the empty instruction '$;$'. 
\item Rule (10) consists in a call to method $\m$ of class $\C$, provided that $\xc$ is of class $\C$, dynamically for overrides. If $\m$ is not defined in $\C$ then it checks for a method of the same signature in the upper class $\D$, and so on. Let $\C\star$ be a notation for the least superclass of $\C$ in which $\m$ is defined. It adds a new instruction for pushing a new stack frame on the stack, containing references of the current object $\this$ on which $\m$ is applied, references of each field, and references of the parameters. After adding the flattened body $\mi'$ of $\m$ to the evaluated instruction, it adds an assignment storing the returned value $\xc'$ in the assigned variable $\xa$, whenever the method is not a procedure, and a $\pop;$ instruction. The $\pop$ instruction is crucial to indicate the end of the method body so that the callee knows when to return the control flow to the caller.
\item Rules (11) and (12) are standard rules for manipulating the pointer stack through the use of $\pop$ and $\push$ instructions.
\item Rules (13) to (15) are standard rules for control flow statements. 
\end{itemize}

One important point to stress is that, contrarily to usual OO languages, the above semantics can reduce when a method is called on a variable pointing to the $\null$ value (see Rule (10)). In this particular case, all assignments to the fields of the current object are ignored (see Rule (3)). This choice has been made to simplify the formal treatment as exceptions are not included in AOO syntax. In general, exception handling generate a new control flow that is tedious to handle with a small step semantics. More, our typing discipline relies on the control flow being smooth.

\subsection{Input and Size}\label{input}
In order to analyze the complexity of programs, it is important to relate the execution of a given program to the size of its input. Consequently, we need to define both the notion of input of an AOO program and the notion of size for such an input. This will make clear the choice made in Section~\ref{syn} to express an executable class by splitting the instruction of the $\main$ method in an {\emph{initialization instruction}} $ \texttt{Init}$ and a {\emph{computational instruction}} $\texttt{Comp}$.

An AOO program of executable $\Exe\{ \void \ \main()\{ \texttt{Init}\ \texttt{Comp} \} \}$ terminates if the following conditions hold:
\begin{enumerate}
\item $(\conf_0,\underline{\texttt{Init}} )\to^* (\data,\epsilon)$
\item $(\data,\underline{\texttt{Comp}} )\to^* (\conf,\epsilon) $
\end{enumerate}
The memory configuration $\data$ computed by the initialization instruction is called the \emph{input}.

An important point to stress is that, given a program, the choice of initialization and computational instructions is left to the analyzer. This choice is crucial for this analysis to be relevant. Note also that the initialization instruction {\tt Init} can be changed so that the program may be defined on any input. 

Another way to handle the notion of input could be to allow i/o (open a stream on a file, deserialization,...) in this instruction but at the price of a more technical and not that interesting treatment. A last possibility is to consider the input to be the main method arguments. But this would mean that as most of the programs do not use their prompt line argument, they would be considered to be constant time programs. 

There are two particular cases:
\begin{itemize}
\item If the initialization instruction is empty then there will be no computation on reference type variables apart from constant time or non-terminating ones, as we will see shortly. This behavior is highly expected as it means that the program has no input. As there is no input, it means that either the program does not terminate or it terminates in constant time.
\item If the computational instruction is empty (that is ``;'') then the program will trivially pass the complexity analysis.
\end{itemize}

\begin{example}\label{ex5}
Consider the initialization instruction ${\tt Init}$ of Example~\ref{ex2}. ${\tt Init}$ is already flattened so that $\underline{{\tt Init}}={\tt Init}$ holds. The input $\mathcal{I}$ is a memory configuration $\langle \heap , \stack_\heap \rangle$ such that $\heap$ is the pointer graph described in Figure~\ref{graph:constructors} and $\stack_\heap = [\langle \void\ \main^{\Exe}(), p_\heap \rangle]$, $p_\heap$ being the pointer mapping described by the snake arrows of Figure~\ref{graph:constructors}.

Now consider the computational instruction ${\tt Comp}=\verb!d.setQueue(b);!$. It is also flat and so is the method body. Consequently, we have:
\begin{align*}
(\mathcal{I},{\tt d.setQueue(b);}) &\to (\mathcal{I}, \push(s_\heap); {\tt queue} \iasg {\tt q};\ \pop;)\\
&\to (\mathcal{I}[\push(s_\heap)] , {\tt queue} \iasg {\tt q};\ \pop;)\\
&\to (\mathcal{I'}[\mathcal{I'}(\this) \stackrel{{\tt queue}}{\mapsto} \mathcal{I'}({\tt q})], \pop;)\\
&= (\mathcal{I'}[\mathcal{I}({\tt d}) \stackrel{{\tt queue}}{\mapsto} \mathcal{I}({\tt b})], \pop;)\\
&\to (\mathcal{I'}[\mathcal{I}({\tt d}) \stackrel{{\tt queue}}{\mapsto} \mathcal{I}({\tt b})][\pop], \epsilon)\\
&=(\mathcal{I}[\mathcal{I}({\tt d}) \stackrel{{\tt queue}}{\mapsto} \mathcal{I}({\tt b})], \epsilon)
\end{align*}
with $s_\heap =\langle \void\  \verb!setQueue!^{\verb!BList!}(\verb!BList!),\top \mathcal{I} [\this \mapsto \mathcal{I}(\verb!d!), \verb!q! \mapsto \mathcal{I}(\verb!b!)] \rangle$ and $\mathcal{I'}=\mathcal{I}[\push(s_\heap)]$. The two first transitions are an application of Rules (8) and (9) of Figure~\ref{fig:sem}. The third one is Rule (14) as ${\tt queue}$ is a field of the current object. The first equality holds by definition of $\mathcal{I'}$ and its top stack frame $\top \mathcal{I'}=s_\heap$. The last reduction is Rule (10) and the equality holds as removing the top stack of $\mathcal{I'}$ leads to $\mathcal{I}$, keeping in mind that the pointer graph has been updated by $[\mathcal{I}({\tt d}) \stackrel{{\tt queue}}{\mapsto} \mathcal{I}({\tt b})]$. Indeed now the node referenced by $\tt d$ points to $\tt b$ via the arrow labeled by $\tt queue$.
\end{example}

\begin{definition}[Sizes] The size:
\begin{enumerate}
\item $\size{\heap}$ of a pointer graph $\heap=(V,A)$ is defined to be the number of nodes in $V$.\label{i1}
\item $\size{\pom}$ of a pointer mapping $\pom$  is defined by $\size{\pom}= \sum_{\xa \in \dom(\pom)}\size{\pom(\xa)}$, where the size of numerical primitive value is the value itself, the size of a boolean is $1$ and the size of a memory reference is $1$. \label{i2}
\item $\size{s_\heap}$ of a stack frame $s_\heap=\langle s, p \rangle$ is defined by: $\size{s_\heap}=1+\size{p}$. 
\item $\size{\stack_\heap}$ of a stack $\stack_\heap$ is defined by $\size{\stack_{\heap}}=\sum_{s_{\heap} \in \stack_{\heap}}\size{s_{\heap}}$.\label{i4}
\item $\size{\conf}$ of a memory configuration $\conf=\langle \heap,\stack_{\heap}\rangle$ is defined by $\size{\conf}=\size{\heap}+\size{\stack_{\heap}}$.\label{i5}
\end{enumerate}
\end{definition}

In Item~\ref{i1}, it suffices to bound the number of nodes as, for the considered multigraphs, $\taille{A}=\mathcal{O}(\taille{V})$. Indeed, each node has a number of out arrows bounded by a constant (the maximal number of fields in all classes of a given program). In Item~\ref{i2}, numerical primitive values are not considered to be constant. This definition is robust if we consider their size to be constant: \text{e.g.} a signed 32 bit integer could be considered as a constant smaller than $2^{31}-1$. In such a case, the size of each pointer mapping would be constant as no fresh variable can be generated within a program thanks to flattening.
In Item~\ref{i4}, the size of a pointer stack is very close to the size of the usual OO Virtual Machine stack since it counts the number of nested method calls (i.e. the number of stack frames in the stack) and the size of primitive data in each frame (that are duplicated during the pass-by-value evaluation). 
Finally, Item~\ref{i5} defines the size of a memory configuration, the program input and bounds both the heap size of the input $\data$ and \textit{a fortiori} the stack size as the stack is empty in $\data$.

Negative integers are not considered for simplicity reasons (see section~\ref{operators} where we need data types to have a lower bound). Floats are not considered as there is an infinite number of floats of the same size. Only finite types and countable types can be treated. To our knowledge this point is not tackled by the related works on {\sc icc}.

\subsection{Compatible pairs}\label{compatible}
Given a memory configuration $\conf$ and a meta-instruction $\mi$, the pair $(\conf,\mi)$ is compatible if there exists an instruction $\mi'$ from a well-formed program such that $(\conf_0,\mi') \to^* (\conf, \mi)$. Throughout the paper, we will only consider compatible pairs $(\conf,\mi)$. 

This notion is introduced in order to prevent the consideration of a pair $(\conf,\mi)$ having a variable not defined in the memory configuration $\conf$ and called without being declared in the meta-instruction $\mi$.

Note that the semantics of Section~\ref{seman} cannot reach uncompatible pairs as each variable is supposed to be declared before its first use by definition of well-formed programs. However, this restriction will be required in order to prevent bad configurations from being considered in Theorem~\ref{thm:soundness}.

\section{Type system}\label{type}
In this section, a tier based type system for ensuring polynomial time and polynomial space upper bounds on the size of a memory configuration is introduced. 

\subsection{Tiered types}
The set of base types is defined to be the set of all primitive and reference types $\Class \cup \mathbb{T}$. \emph{Tiers} are elements of the lattice $(\{\tiera,\tierb\},\join,\wedge)$ where $\wedge$ and $\join$ are the greatest lower bound operator and the least upper bound operator, respectively ; the induced order, denoted $\ord$, being such that $\tiera \ord \tierb$. Given a sequence of tiers $\overline{\alpha}=\alpha_1,\ldots, \alpha_n$ define $\vee \overline{\alpha}=\alpha_1 \vee \ldots \vee \alpha_n$. Let $\alpha,\beta,\gamma,\ldots$ denote tiers in $\{\tiera, \tierb\}$.

A \emph{tiered type} is a pair $\tau(\alpha)$ consisting of a type $\tau \in \Class \cup \mathbb{T}$ together with a tier $\alpha \in \{\tiera, \tierb\}$.

\paragraph{Notations} Given two sequences of types $\overline{\tau}=\tau_1,\ldots,\tau_n$ and tiers $\overline{\alpha}=\alpha_1,\ldots, \alpha_n$ and a tier $\alpha$, let $\overline{\tau}(\overline{\alpha})$ denote $\tau_1(\alpha_1),\ldots,\tau_n(\alpha_n)$,  $\overline{\tau}({\alpha})$ denote $\tau_1(\alpha),\ldots,\tau_n(\alpha)$ and  $\langle \overline{\tau}\rangle$ (resp. $\langle \overline{\tau}(\overline{\alpha})\rangle$) denote the cartesian product of types (resp. tiered types). 

For example, given a sequence of types $\overline{\tau}=\bool, \charac, \C$ and a sequence of tiers $\overline{\alpha}=\tiera,\tierb,\tierb$, we have $\overline{\tau}(\tiera)=\bool(\tiera), \charac(\tiera), \C(\tiera)$, $\overline{\tau}=\bool \times \charac \times \C$ and $\langle \overline{\tau}(\overline{\alpha})\rangle= \bool(\tiera) \times \charac(\tierb) \times \C(\tierb)$.

\paragraph{What is a tier in essence} Tiers will be used to separate data in two kinds as in Bellantoni and Cook's safe recursion scheme~\cite{BC92} where data are divided into ``safe'' and ``normal'' data kinds. According to Danner and Royer~\cite{DN12},  ``normal data [are the data] that drive recursions and safe data [are the data] over which recursions compute''. In this setting, tier $\tierb$ will be an equivalent for normal data type, as it consists in data that drive recursion and while loops. Tier $\tiera$ will be the equivalent for safe data type, as it consists in computational data storages. Instruction tiers will ensure that expressions of the right tier are used at the right place (e.g. tier $\tiera$ data will never drive a loop or a recursion) and that the information flow never go from $\tierb$ to $\tiera$ so that the first condition is preserved independently of tier $\tiera$ during program execution. In particular, a tier $\tierb$ instruction will never be controlled by a tier $\tiera$ instruction (in a conditional statement or recursive call).

 \subsection{Operators}\label{operators}
 In order to control the complexity of programs, we need to constrain the admissible tiered types of operators  depending on their computational power. For that purpose and following~\cite{HMP13}, we define \emph{neutral operators} whose computation does not make the size increase and \emph{positive operators} whose computation may make the size increase by some constant. They are both assumed to be polynomial time computable.
 \begin{definition}An operator  $\op::\tau_1 \times \ldots \times \tau_n \to \tau$ is:
\begin{enumerate}
\item \emph{neutral} if $\sem{\op}$ is a polynomial time computable function and one of the following conditions hold:
\begin{enumerate}
\item either $\tau=\bool$ or $\tau =\charac$,\label{a}
\item $\op::\ent \times \ldots \times \ent \to \ent$ and $\tt \forall i \in [1,n],\ \forall \textnormal{\tt cst}_{\ent}^i:$
$$ 0 \leq \sem{\op}(\textnormal{\tt cst}_{\ent}^1,\ldots, \textnormal{\tt cst}_{\ent}^n) \leq \max_{j}{\textnormal{\tt cst}_{\ent}^j}.$$ \label{b}
\end{enumerate} 
\item \emph{positive} if it is not neutral, $\sem{\op}$ is a polynomial time computable function, $\op::\ent \times \ldots \times \ent \to \ent$, and:
$$\tt \forall i \in [1,n],\ \forall \textnormal{\tt cst}_{\ent}^i,\ 0 \leq \sem{\op}(\textnormal{\tt cst}_{\ent}^1,\ldots, \textnormal{\tt cst}_{\ent}^n) \leq \max_{j}{\textnormal{\tt cst}_{\ent}^j}+c,$$ for some constant ${\tt c} \geq 0$.\label{c}
\end{enumerate}
\end{definition}
In the above definition, Item~\ref{a} could be extended to any primitive data type inhabited by a finite number of values. 

In Items~\ref{b} and~\ref{c}, the comparison is only defined whenever all the $\tau_i$ and $\tau$ are of type $\ent$. This could be extended to booleans without any trouble by considering a boolean to be $0$ or $1$. 

Also notice that operator overload can be considered by just treating two such operators as if they were distinct ones.

\begin{example}\label{ex6}
For simplicity, operators are always used in a prefix notation.
The operator $-$, whose semantics is such that $\sem{-}(x,y)=\max(x-y,0)$, is neutral. Indeed for all $\textnormal{\tt cst}_{\ent}^1$, $\textnormal{\tt cst}_{\ent}^2$, $\sem{-}(\textnormal{\tt cst}_{\ent}^1,\textnormal{\tt cst}_{\ent}^2)=\max(\textnormal{\tt cst}_{\ent}^1-\textnormal{\tt cst}_{\ent}^2,0) \leq \max(\textnormal{\tt cst}_{\ent}^1,\textnormal{\tt cst}_{\ent}^2)$. The operators $<$, testing that its left operand is strictly smaller than its right one, and the operator $\verb!==!$, testing the equality of primitive values or the equality of memory references, and the operator $\tt !=$ of Example~\ref{ex4} are neutral operators as their output is $\bool$ and they are computable in polynomial time.

The operator $+$ is neither neutral, nor positive as $\sem{+}(\textnormal{\tt cst}_{\ent}^1,\textnormal{\tt cst}_{\ent}^2)=\textnormal{\tt cst}_{\ent}^1+\textnormal{\tt cst}_{\ent}^2$ and there is no constant $\textnormal{\tt c}\geq 0$ such that $\forall \textnormal{\tt cst}_{\ent}^1,\textnormal{\tt cst}_{\ent}^2,\ \textnormal{\tt cst}_{\ent}^1+\textnormal{\tt cst}_{\ent}^2 \leq \textnormal{\tt cst}_{\tau_i}^i+\textnormal{\tt c}$. However if we consider its partial evaluation $+n$ to be an operator, then $+n$ is a positive operator as $\forall \textnormal{\tt cst}_{\ent}^1,\ \sem{+n}(\textnormal{\tt cst}_{\ent}^1) = \textnormal{\tt cst}_{\ent}^1+n \leq \textnormal{\tt cst}_{\tau_i}^i+\textnormal{\tt c}$. Indeed, just take $\textnormal{\tt c} \geq n$. The reason behind such a strange distinction is that a call of the shape $\xa \iasg \xb + \xb$ could lead to an exponential computation in a loop while a call of the shape $\xa \iasg \xb + n$ should remain polynomial (under some restrictions on the loop).

As mentioned above, the notions of neutral and positive operators can be extended to operators with numerical output and boolean input. Hence allowing to consider operators such as {\tt x?y:z} returning $\xb$ or $\xc$ depending on whether $\xa$ is $\true$ or $\false$ when applied to numerical primitive data. 
Indeed, we have $\llbracket${\tt ?:}$\rrbracket(\xa,\xb,\xc) \leq \max(\xb,\xc)$. Consequently, {\tt ?:} is a neutral operator.
\end{example}

\begin{definition}\label{typing:op}
An \emph{operator typing environments} $\Omega$ is a mapping such that each operator  $\op::\langle \overline{\tau} \rangle \to \tau$ is assigned the type:
\begin{itemize}
\item $\Omega(\op)=\{ \langle \overline{\tau}(\tiera) \rangle \to \tau(\tiera), \langle \overline{\tau}(\tierb) \rangle\to \tau(\tierb)\}$ if $\op$ is a neutral operator,
\item $\Omega(\op)=\{ \langle \overline{\tau}(\tiera) \rangle \to \tau(\tiera)\}$ if $\op$ is a positive operator,
\item $\Omega(\op)=\emptyset$ otherwise.
\end{itemize}
\end{definition}

\subsection{Environments}
In this subsection, we define three other kinds of environments:
\begin{itemize}
\item \emph{method typing environments} $\delta$ associating a tiered type to each program variable $v \in \Var$,
\item \emph{typing environments} $\Delta$ associating a typing environment $\delta$ to each method signature $\tau\ \m^\C(\overline{\tau})$, \text{i.e.} $\Delta(\tau\ \m^\C(\overline{\tau}))=\delta$ (by abuse of notation, we will sometime use the notation $\Delta(\m^\C)$ when the method signature is clear from the context),
\item \emph{contextual typing environments} $\typenv = (s, \Delta)$, a pair consisting of a method signature and a typing environment. The method (or constructor) signature $s$ indicates under which context (typing environment) the fields are typed.
\end{itemize}

For each $\xa \in \Var$, define $\typenv(\xa)=\Delta(s)(\xa)$. Also define $\typenv\{\xa \leftarrow \tau(\alpha)\}$ to be the contextual typing environment $\typenv'$ such that $\forall \xb \neq \xa, \ \typenv'(\xb)=\typenv(\xb)$ and $\typenv'(\xa)=\tau(\alpha)$. Let $\typenv\{s'\}$ be a notation for the contextual typing environment that is equal to $(s',\Delta)$, \textit{i.e.} the signature $s'$ has been substituted to $s$ in $\typenv$. Method typing environment are defined for method signatures and can be extended without any difficulty to constructor signatures.

\noindent\textit{What is the purpose of contextual environments.} They will allow the type system to type a field with distinct tiered types depending on the considered method. Indeed, a field can be used in the guard of a while loop (and will be of tier $\tierb$) in some method whereas it can store freshly created objects (and will be of tier $\tiera$) in some other method. This is the reason why the presented type system has to keep information on the context.

\subsection{Judgments}
Expressions and instructions will be typed using tiered types. A method of arity $n$ method will be given a type of the shape $\C(\beta)\times\tau_1(\sla_1) \times \ldots \times  \tau_n(\sla_n) \to \tau(\sla)$.
Consequently, given a contextual typing environment $\typenv$, there are four kinds of typing judgments:
\begin{itemize}
\item $\typenv,\Omega \imp_\beta \ea:\tau(\alpha)$ for expressions, meaning that the expression $\ea$ is of tiered type $\tau(\alpha)$ under the contextual typing environment $\typenv$ and operator typing environment $\Omega$ and can only be assigned to in an instruction of tier at least $\beta$,
\item $\typenv,\Omega \imp \ca:\void(\alpha)$ for instructions, meaning that the instruction $\ca$ is of tiered type $\void(\alpha)$ under the contextual typing environment $\typenv$ and operator typing environment $\Omega$,
\item $\typenv,\Omega \imp_\beta s :\C(\beta)\times \langle \overline{\tau}(\overline{\tya})\rangle \to \tau(\alpha)$ for method signatures, meaning that the method $\m$ of signature $s$ belongs to the class $\C$ ($\C(\beta)$ is the tiered type of the current object $\this$), has parameters of type $\langle \overline{\tau}(\overline{\tya})\rangle$, has a return variable of type $\tau(\alpha)$, with $\tau=\void$ in the particular case where there is no return statement, and can only be called in instructions of tier at least $\beta$,
\item $\typenv , \Omega \imp \C(\overline{\tau}) : \langle \overline{\tau}(\tiera)\rangle \to  \C(\tiera)$ for constructor signatures, meaning that the constructor $\C$ has parameters of type $\langle \overline{\tau}(\tiera)\rangle$ and a return variable of type $\C(\tiera)$, matching the class type $\C$.

\end{itemize}
 Given a sequence $\overline{\ea}=\ea_1,\ldots,\ea_n$ of expressions, a sequence of types $\overline{\tau}=\tau_1,\ldots,\tau_n$ and two sequences of tiers $\overline{\alpha}=\alpha_1,\ldots,\alpha_n$ and $\overline{\beta}=\beta_1,\ldots,\beta_n$, the notation $\typenv, \Omega \imp_{\overline{\beta}} \overline{\ea} : \overline{\tau}(\overline{\alpha})$ means that $\typenv,\Omega \imp_{\beta_i} {\ea}_i : {\tau}_i(\alpha_i)$ holds, for all $i \in [1,n]$.

\subsection{Typing rules}
The intuition of the typing discipline is as follows: keeping in mind, that tier $\tierb$ corresponds to while loop guards data and that tier $\tiera$ corresponds to data storages (thus possibly increasing data), the type system precludes flows from tier $\tiera$ data to tier $\tierb$ data.

Most of the rules are basic non-interference typing rules following Volpano \textit{et al.} type discipline: tiers in the rule premises (when there is one) are equal to the tier in the rule conclusion so that there can be no information flow (in both directions) using these rules.

 \subsubsection{Expressions typing rules}
The typing rules for expressions are provided in Figure~\ref{fig:TypeE}. They can be explained briefly:
\begin{itemize}
\item Primitive constants and the null reference can be given any tier $\alpha$ as they have no computational power (Rules \textit{(Cst)}  and \textit{(Null)}) and can be used in instructions of any tier $\beta$. As in Java and for polymorphic reasons, $\nul$ can be considered of any class $\C$.
\item Rule  \textit{(Var)} just consists in checking a variable tiered type with respect to a given contextual typing environment. This is the rule allowing a polymorphic treatment of fields depending on the context. Indeed, remember that $\typenv(\xa)$ is a shorthand notation for $\Delta(s)(\xa)$. Moreover, a variable can be used in instructions of any tier $\beta$. 
\item Rule  \textit{(Op)} just consists in checking that the type of an operator with respect to a given operator typing environment matches the input and output tiered types. The operator output has to be used in an instruction of tier being at least the maximal admissible tier of its operand $\vee \overline{\beta} = \beta_1 \vee \ldots \vee \beta_n$. In other words, if one operand can only be used in a tier $\tierb$ instruction then the operator output can only be used in a tier $\tierb$ instruction.
\item The rule \textit{(Self)} makes explicit that the self reference $\this$ is of type $\C$ and enforces the tier of the fields to be equal to the tier of the current object, thus preventing ``flows by references'' in the heap. Moreover it can be used in instructions of any tier $\beta$.
\item The rule \textit{(Pol)} allows us to type a given expression with the superclass type while keeping the tiers unchanged. 
\item The rule \textit{(New)}  checks that the constructor arguments and output all have tier $\tiera$. The new instance has to be of tier $\tiera$ since its creation makes the memory grow (a new reference node is added in the heap). The constructor arguments have also to be of tier $\tiera$. Otherwise a flow from tier $\tiera$, the new instance, to tier $\tierb$, one of its fields, might occur. The explanation for the annotation instruction tier $\beta$ is the same as the one for operators.
\item Rule \textit{(C)} just  checks a direct type correspondence between the arguments types and the method type when a method is called. However this rule is very important as it allows a polymorphic type discipline for fields. Indeed the contextual environment is updated so that a field can be typed with respect to the considered method.  The explanation for the annotation instruction tier $\beta$ is the same as the one for operators. We will see shortly how to make restriction on such annotations in order to restrict the complexity of recursive method calls.
\end{itemize}

 \begin{figure}[!ht]
{\small
$$
\ninfer{ }{\typenv, \Omega \imp_\beta \n:\tau(\alpha)}{\textit{(Cst)}}
\qquad
\ninfer{ }{\typenv, \Omega \imp_\beta \nul:\C(\alpha)}{\textit{(Null)}}
$$
$$
\ninfer{ \quad \typenv(\xa)=\tau(\tya) }{\typenv, \Omega\imp_\beta \xa:\tau(\tya)}{\textit{(Var)}}
$$
$$
\ninfer{\typenv, \Omega \imp_{\overline{\beta}} \overline{\ea}:\overline{\tau}(\alpha) \qquad \langle \overline{\tau}(\alpha) \rangle \to \tau (\alpha) \in \Omega(\op)}
{\typenv, \Omega \imp_{\vee \overline{\beta}} \op(\overline{\ea}):\tau(\alpha) }
{\textit{(Op)}}
$$
$$
\ninfer{\forall \xa \in \C.\mathcal{F}, \exists \tau, \typenv(\xa)=\tau(\alpha)
}{\typenv, \Omega \imp_\beta \this:\C(\alpha)}{\textit{(Self)}}
\qquad 
\ninfer{\typenv, \Omega \imp_\beta \ea : \D(\alpha) \quad \D \dord \C}{\typenv,\Omega \imp_\beta \ea : \C(\alpha)}
{\textit{(Pol)}}
$$
$$
\ninfer{\typenv, \Omega \imp_{\overline{\beta}} \overline{\ea}:\overline{\tau}(\tiera)  \quad \typenv\{\C(\overline{\tau})\}, \Omega \imp \C(\overline{\tau}): \langle \overline{\tau}(\tiera) \rangle \to \C(\tiera) } 
{\typenv, \Omega \imp_{\vee \overline{\beta}} \new\ \C(\overline{\ea}):\C(\tiera)}
{\textit{(New)}}
$$
$$
\ninfer{
\typenv,\Omega\imp_\beta \ea:\C(\tyb) \  \typenv,\Omega\imp_{\overline{\beta}} \overline{\ea}:\overline{\tau}(\overline{\tya})  \ 
  \typenv\{\tau\ \m^{\C}(\overline{\tau})\},\Omega\imp_\beta  \tau\ \m^{\C}(\overline{\tau}) : \C(\beta)\times \langle \overline{\tau}(\overline{\tya}) \rangle\to  \tau({\tya})
 } 
{\typenv,\Omega\imp_{\beta \vee (\vee \overline{\beta})}  \ea.\m(\overline{\ea}):\tau(\tya)}
{\textit{(C)}}
$$
\caption{Typing rules for expressions\label{fig:TypeE}}
}
\end{figure}

 \subsubsection{Instructions typing rules}
The typing rules for instructions are provided in Figure~\ref{fig:TypeI}. They can be explained briefly:
\begin{itemize}
\item Rule \textit{(Skip)} is straightforward.
\item Rule \textit{(Seq)} shows that the tier of the sequence $\ca_1\ \ca_2$ will be the maximum of the tiers of $\ca_1$ and $\ca_2$. It can be read as ``a sequence of instructions including at least one instruction that cannot be controlled by tier $\tiera$ cannot be controlled by tier $\tiera$'' and it preserves non-interference as it is a weakly monotonic typing rule with respect to tiers. 
\item The recovery is performed thanks to the Rule \textit{(ISub)} that makes possible to type an instruction of tier $\tiera$ by $\tierb$ (as tier $\tiera$ instruction use is less constrained than tier $\tierb$ instruction use) without breaking the system non-interference properties. Notice also that there is no counterpart for expressions as a subtyping rule from $\tierb$ to $\tiera$ would allow us to type $\verb!x+1;!$ with $\verb!x!$ of tier $\tierb$ while a subtyping rule from $\tiera$ to $\tierb$ would allow the programmer to type programs with tier $\tiera$ variables in the guards of while loops.
\item Rule \textit{(Ass)} is an important non-interference rule. It forbids information flows of reference type from a tier  to another: it is only possible to assign an expression $\ea$ of tier $\alpha$ to a variable $\xa$ of tier $\alpha$. In the particular case of primitive data, a flow from $\tierb$ to $\tiera$ might occur. This is just because primitive data are supposed to be passed-by-value (otherwise we should keep the equality of tier preserved).  In the case of a field assignment, all tiers are constrained to be $\tiera$ in order to avoid changes inside the tier $\tierb$ graph. Finally, if the expression $\ea$ cannot be used in instructions of tier less than $\beta$, we constrain the instruction tiered type to be $\void(\alpha \vee \beta)$ in order to fulfill this requirement.
\item Rule \textit{(If)} constrains the tier of the conditional guard $\ea$ to match the tiers of the branching instructions $\ca_1$ and $\ca_2$. Hence, it prevents assignments of tier $\tierb$ variables to be controlled by a tier $\tiera$ expression.
\item Rule \textit{(Wh)} constrains the guard of
the loop $\ea$ to be a boolean expression of tier $\tierb$, thus preventing while loops from being controlled by tier $\tiera$ expressions. 
\end{itemize}

\begin{figure}[!ht]
{
$$
\ninfer{ \phantom{truc}}{\typenv, \Omega\imp \ ;\ :\void(\alpha)}{\textit{(Skip)}}
\qquad 
\ninfer{\forall i,\ \typenv, \Omega\imp \ca_i:\void(\tya_i) } 
{\typenv, \Omega\imp \ca_1 \ \ca_2: \void(\tya_1 \join \tya_2)}
{\textit{(Seq)}}
$$
$$
\ninfer{\typenv, \Omega\imp \ca:\void(\tiera)} 
{\typenv, \Omega\imp \ca :\void(\tierb)}
{\textit{(ISub)}}
\qquad
\ninfer{[\typenv, \Omega\imp_{\tiera} \xa: \tau(\tya')]  \quad \typenv, \Omega\imp_\beta \ea :\tau(\tya)} 
{\typenv, \Omega\imp [[\tau]\ \xa \iasg] \ea; : \void(\tya \vee \beta)}
{\textit{(Ass**)}}
$$
$$
\ninfer{\typenv, \Omega\imp_\tya \ea :\bool(\tya) \quad \forall i,\ \typenv, \Omega\imp \ca_i: \void(\tya) } 
{\typenv, \Omega\imp \iif (\ea) \ithen \{\ca_1\} \ielse \{\ca_2\}:\void(\tya)}
{\textit{(If)}} 
$$
$$
\ninfer{\typenv, \Omega \imp_\tierb \ea:\bool(\tierb) \quad \typenv, \Omega\imp \ca:\void(\tierb)} 
{\typenv, \Omega\imp \iwhile (\ea) \ido \{\ca\} :\void(\tierb)}
{\textit{(Wh)}}  
$$
 $ \textit{(**)}$
 
 $\xa \in \mathcal{F} \implies \alpha=\tiera$
 
 $\tau \in \mathbb{T} \implies \alpha' \leq \alpha$
 
 $\tau \in \Class \implies \alpha' =\alpha$
\caption{Typing rules for instructions\label{fig:TypeI}}
}
\end{figure}

 \subsubsection{Methods typing rules}
The typing rules for methods are provided in Figure~\ref{fig:TypeCM}. They can be explained briefly:
\begin{itemize}
\item Rule \textit{(Body)} shows how to type method definitions.
It updates the environment with respect to the parameters, current object and return type in order to allow a polymorphic typing discipline for methods: while typing a program, a method can be given distinct types depending on where and how it is called.
The tier $\gamma$ of the method body is used as annotation so that if $\gamma=\tierb$, the method cannot be called in a tier $\tiera$ instruction. We also check that the current object matches the tier $\beta$. 
As presented, this rule may create an infinite branch in the typing tree of a recursive method.
This could be solved by keeping the set of methods previously typed in a third context, we chose not to include this complication. Consequently, a method is implicitly typed only once in a given branch of a typing derivation. 
\item Rule \textit{(Cons)} shows how to type constructors. The constructor body and parameters are enforced to be of tier $\tiera$ as a constructor invocation makes the memory grow.
\item Rule \textit{(OR)} deals with overridden method, keeping tiers preserved, thus allowing standard OO polymorphism.
\end{itemize}
\begin{figure}[!ht]
{\small
$$
\ninfer{ \typenv\{\this\leftarrow \C(\beta), \overline{\xa} \leftarrow \overline{\tau}(\overline{\tya}),  [\xa \leftarrow \tau(\alpha)] \}, \Omega \imp I : \void(\gamma) \quad \typenv,\Omega \imp_\gamma \this : \C(\beta)}
{\typenv, \Omega \imp_{\gamma} \tau\ \m^{\C}(\overline{\tau}) : \C(\beta) \times \langle \overline{\tau}(\overline{\tya})\rangle \to  \tau({\tya})}
{\textit{(Body**)}} 
$$
$$
\ninfer{ \typenv\{\overline{\xa} \leftarrow \overline{\tau}(\tiera) \} , \Omega \imp I : \void(\tiera) \quad \C(\overline{\tau\ \xa}) \{\ca \} \in \C }
{\typenv , \Omega \imp \C(\overline{\tau}) : \langle \overline{\tau}(\tiera)\rangle \to  \C(\tiera)}
{\textit{(Cons)}} 
$$
$$
\ninfer{\C \dord \D \qquad \typenv,\Omega \imp_\gamma \tau\ \m^{\D}(\overline{\tau}) : \D({\tyb})\times \langle \overline{\tau}(\overline{\tya}) \rangle\to  \tau({\tya}) \quad  \tau\ \m(\overline{\tau\ \xa}) \{\ca\
[\return\ {\xa};] \} \in \D}{\typenv,\Omega\imp_\gamma \tau\ \m^{\C}(\overline{\tau}) : \C({\tyb})\times\langle \overline{\tau}(\overline{\tya}) \rangle \to  \tau({\tya})}
{\textit{(OR)}}
$$
(**) provided that $\tau\ \m(\overline{\tau\ \xa}) \{\ca\
[\return\ {\xa};] \} \in \C $
}
\caption{Typing rules for methods}\label{fig:TypeCM}
%\hrulefill
\end{figure}
In the above rules, the tier $\gamma$ is just an annotation on the least tier of the instructions where the method is called. It will be used in the next Section to constrain the tier of recursive methods.

\subsection{Well-typedness.} 
Given a program of executable $\Exe\{ \void \  \main()\{ \texttt{Init}\ \texttt{Comp} \} \}$, a typing environment $\Delta$ and operator typing environment $\Omega$, the judgment: $$\Delta, \Omega \imp  \Exe \{\void\ \main()\{ \texttt{Init}\ \texttt{Comp} \}\}  :\diamond$$ means that the program is well-typed with respect to $\Delta$ and $\Omega$ and is defined by: 
$$
\ninfer{(\void \ \main^{\Exe}(),\Delta), \Omega \Imp \texttt{Init}  : \void \quad (\void \ \main^{\Exe}(),\Delta), \Omega \imp\texttt{Comp}  : \void(\tierb) } 
{(\void \ \main^{\Exe}(),\Delta), \Omega \imp \Exe \{\void\ \main()\{ \texttt{Init}\ \texttt{Comp} \}\} : \diamond}
{}
$$
where $\Imp$ is a judgment derived from the type system by removing all tiers and tier based constraints in the typing rules. For example, Rule \textit{(C)} of Figure~\ref{fig:TypeE} 
becomes:
$$
\ninfer{
  \typenv,\Omega \Imp \ea : \C \quad \typenv,\Omega \Imp \overline{\ea} : \overline{\tau} \quad \typenv\{\tau\ \m^{\C}(\overline{\tau})\},\Omega\Imp  \tau\ \m^{\C}(\overline{\tau}) : \C \times \langle \overline{\tau} \rangle\to  \tau
 } 
{\typenv,\Omega\Imp  \ea.\m(\overline{\ea}):\tau}
{\textit{(C)}}
$$
or Rule \textit{(Body)} of Figure~\ref{fig:TypeCM} becomes:
$$
\ninfer{ \typenv\{\this\leftarrow \C, \overline{\xa} \leftarrow \overline{\tau},  [\xa \leftarrow \tau] \}, \Omega \Imp I : \void \quad \typenv,\Omega \Imp \this : \C }
{\typenv, \Omega \Imp \tau\ \m^{\C}(\overline{\tau}) : \C \times \langle \overline{\tau}\rangle \to  \tau}
{\textit{(Body)}} 
$$
It means that all the environments are lifted so that types are substituted to tiered types.

 Since no tier constraint is checked in the initialization instruction $\texttt{Init}$, the complexity of this latter instruction is not under control ;  as explained previously the main reason for this choice is that this instruction is considered to be building the program input. In contrast, the computational instruction $\texttt{Comp}$ is considered to be the computational part of the program and has to respect the tiering discipline.

\subsection{Examples}
\subsubsection{Well-typedness and type derivations}
\begin{example}\label{ex7}
Consider a program having the following computational instruction:
{\tt
\begin{lstlisting}
  Exe {
    void main(){
      //Init
      int n :=... ;
      BList b := null;
      while(n>0){
        b := new BList(true,b);
        n := n-1;
      }
      //Comp
      int z := 0;
      while(b.getTail() != null){
        b := b.getTail();
        z := z+1;
      }
    }
  }
\end{lstlisting}
}

It is well-typed as it can be typed by the following derivation:
$$
\ninfer{\ninfer{\ninfer{\typenv(\xc)=\ent(\tiera)}{\typenv, \Omega\imp_{\tiera} \xc: \ent(\tiera)}{(Var)}  \  \ninfer{}{\typenv, \Omega\imp_\tiera 0 :\ent(\tiera)}{(Cst)}} 
{\typenv, \Omega\imp \ \xc \iasg 0; : \void(\tiera)}
{\textit{(Ass)}}  \ \ninfer{\stackrel{\vdots}{(\Pi)} } 
{\typenv, \Omega\imp \ca :\void(\tierb)}
{(Wh)}  } 
{\typenv, \Omega\imp{\tt Comp} : \void(\tierb)}
{\textit{(Seq)}}
$$
where $\ca= \iwhile(\verb!b.getQueue()! != \nul)\{ \verb!b := b.getQueue(); z := z+1;!\}$.
Now we consider the sub-derivation $\Pi$, where we omit $\typenv,\Omega$ in the context in order to lighten the notation and where $\cb= \verb!b := b.getQueue(); z := z+1;!$:

$$
\ninfer
{
	\ninfer{
		\ninfer{\stackrel{\vdots}{\Pi_1}}
		{\imp_\tierb \tt b.getQueue() : BList(\tierb)}
		{(C)}
		\ninfer{}
		{\imp_\tierb \nul : \tt BList(\tierb)}
		{}
	}
	{\imp_\tierb \tt b.getQueue() != \nul : \bool(\tierb)}
	{}
       \ninfer{\stackrel{\vdots}{\Pi_2}}
       {\imp \cb  : \void(\tierb)}
       {(Seq)} 
}
{\imp \ca :\void(\tierb)}
{(Wh)} 
$$

For $\Pi_1$ we have:
$$ 
\ninfer{ 
	\ninfer{\typenv(\tt b)= {\tt BList}(\tierb)}
	{\imp_\tierb \tt b : {\tt BList}(\tierb)}
	{}
	 \ninfer{
	 	\ninfer{}
	 	{\{\tt getQueue\}[\this,\tt queue : {\tt BList}(\tierb)] , \Omega \imp ; : \void(\tierb) \ \cdots}
	 	{}
	}
	{\{\tt getQueue\}\imp_\tierb  \tt getQueue : BList(\tierb) \to  BList({\tierb})}
	{}
 } 
{\imp_{\tierb}  \tt b.getQueue():BList(\tierb)}
{\textit{(C)}}
$$
where $\{\tt getQueue \}$ is a shorthand notation for $\typenv\{{\tt BList}\ {\tt getQueue}^{\tt BList}() \} , \Omega$ and $[\this,\tt queue : {\tt BList}(\tierb)]$ is a shorthand notation for $[\this\leftarrow {\tt BList}(\tierb),  {\tt queue} \leftarrow {\tt BList}(\tierb)]$.

For $\Pi_2$, provided that $\mea=\tt b.getQueue() $, we have:
$$
 \ninfer{\
 	\ninfer{
 		\ninfer{\typenv(\tt b)=BList(\tierb)}
 		{\imp_\tiera \tt b : BList(\tierb)}
 		{}
 		 \ 
 		 \ninfer{\Pi_1}
 		 {\imp_\tierb \mea : \tt BList(\tierb)}
 		 {}
	}
 	{\imp \tt b :=\mea; :\void(\tierb)}
 	{}
 	\ 
 	\ninfer{
 		\ninfer{\typenv(\tt z)=\ent(\tiera)}
 		{\imp_\tiera \tt z : \ent(\tiera)}
 		{}
 		 \  
 		 \ninfer{\vdots}
 		 {\imp_\tiera \tt z+1 : \ent(\tiera)}
 		 {}
 	}
 	{\imp \tt z := z+1; : \void(\tiera)}
 	{}	
 }
{\imp \cb  : \void(\tierb)}
{(Seq)} 
$$
Though incomplete, the above derivation can be fully typed as we have already seen in Example~\ref{ex6} that $+1$ is a positive operator. Consequently, it can be given the type $\ent(\tiera) \to \ent(\tiera)$ and $\tt z +1$ can be typed, provided that $\typenv(\xc)=\ent(\tiera)$.
\end{example}

\subsubsection{Light notation: {\tt BList} revisited}
As we have seen above, type derivation can be tricky to provide. In order to overcome this problem, we will no longer use typing derivations: tiers will be made explicit in the code. The notation $\xa^\alpha$ means that $\xa$ has tier $\alpha$ under the considered environments $\typenv,\Omega$, \textit{i.e.} $\typenv,\Omega \imp_\beta \xa : \tau(\alpha)$, for some $\tau$ and $\beta$, whereas $\ca : \alpha$ means that $\typenv , \Omega \imp \ca : \void(\alpha)$.

\begin{example}\label{ex8}
We will consider some of the constructors and methods of the class $\tt BList$ to illustrate this notation.

{\tt
\begin{lstlisting}
BList { /* List of booleans: coding binary integers */
	boolean value;
	BList queue;

	BList(boolean v, BList q) {
		value := v;
		queue := q;
	}

	BList getQueue() { return queue; }

	void setQueue(BList q) {
		queue := q;
	}

	boolean getValue() { return value; }

	BList clone() {
		BList n;
		if (value != null) {
			n = new BList(value, queue.clone());
		} else {
			n = new BList(null, null);
		}
		return n;
	}

	void decrement() {
		if (value == true) {
			value := false;
		} else {
			if (queue != null) {
				value := true;
				queue.decrement();
			} else {
				value := false;
			}
		}
	}

	int length () {
		int res := 1; 
		if ( queue != null ) {
			res := queue.length();
			res := res +1;
		}
		else {;}
		return res ;
	}

	boolean isEqual(BList other) {
		boolean res := true;
		BList b1 := this;
		BList b2 := other;
		while (b1 != null && b2 != null) {
			if (b1.getValue() != b2.getValue()){
				res := false
			} else {;;}
			b1 := b1.getQueue();
			b2 := b2.getQueue();
		}
		if (b1 != null || b2 != null) {
			res := false;
		}
		return res;
	}
}
\end{lstlisting}
}

Let us now consider and type each method and constructor:

{\tt
\begin{lstlisting}
    BList(boolean v$^\tiera$, BList q$^\tiera$) {
      value := v; : $\tiera$
      queue := q; : $\tiera$
    }
\end{lstlisting}
}
The constructor $\tt BList$ can be typed by $\bool(\tiera) \times \tt Blist(\tiera) \to BList(\tiera)$,
using Rules \textit{(Cons)} and \textit{(Seq)} and twice Rule \textit{(Ass)}. In this latter applications, the tier are enforced to be $\tiera$ because of the side condition $\xa \in \mathcal{F} \implies \alpha=\tiera$. As a consequence, it is not possible to create objects of tiered type $\tt BList(\tierb)$ in a computational instruction.
{\tt
\begin{lstlisting}
    BList getQueue() {
      return queue;
    }
\end{lstlisting}
}
We have already seen in Example~\ref{ex7} that the method  {\tt getQueue} can be typed by $\tt BList(\tierb) \to BList(\tierb)$. It can also be typed by $\tt BList(\tiera) \to BList(\tiera)$, by Rules \textit{(Body)} and \textit{(Self)}.

The types $\tt BList(\alpha) \to BList(\beta)$, $\alpha \neq \beta$, are prohibited because of Rules \textit{(Body)} and \textit{(Self)} since the tier of the current object has to match the tier of its fields.

In the same manner, the method {\tt getValue} can be given the types $\tt BList(\alpha) \to \bool(\alpha)$, $\alpha \in \{\tiera,\tierb\}$.

The method {\tt setQueue}:
{\tt
\begin{lstlisting}
    void setQueue(BList q$^\tiera$) {
      queue := q; : $\tiera$
    }
\end{lstlisting}
}
can only be given the types $\tt BList(\tiera) \times BList(\tiera) \to \void(\beta)$. Indeed, by Rule \textit{(Ass)}, $\tt queue$ and $\tt q$ are enforced to be of tier $\tiera$. Consequently, $\this$ is of tier $\tiera$ by Rules \textit{(Body)} and \textit{(Self)}. The tier of the body is not constrained.

Consider the method $\tt decrement$:
{\tt
\begin{lstlisting}
    void decrement() { 
      if (value$^\tiera$ == true) {
        value$^\tiera$ := false; $: \tiera$
      } else {
        if (queue$^\tiera$ != null) {
          value := true;
          queue$^\tiera$.decrement(); $: \tiera$
        } else {
          value$^\tiera$ := false; $: \tiera$
        }
      }
    } $: \tiera$ 
\end{lstlisting}
}
Because of the Rules \textit{(Ass)} and \textit{(Body)}, it can be given the type $\tt BList(\tiera) \to \void(\tiera)$. As we shall see later in Subsection~\ref{recmet}, this method will be rejected by the safety condition as it might lead to exponential length derivation whenever it is called in a while loop.

Now consider the following method:
{\tt
\begin{lstlisting}
  int length() {
    int res := 1; : $\tiera$
    if (queue$^\tierb$ != null) {
      res := queue.length(); : $\tierb$ //using (Isub)
      res := res+1; : $\tiera$
    }
    else {;}
    return res;
  }
\end{lstlisting} 
}
It can be typed by $\Gamma,\Omega \imp_\tierb \tt int\ length^{\tt BList}() :  BList(\tierb) \to \ent(\tiera)$.

Now consider the method testing the equality:
{\tt
\begin{lstlisting}
    boolean isEqual(BList other$^\tierb$) {
      boolean res$^\tiera$ := true; $: \tierb$
      BList b1$^\tierb$ := this$^\tierb$; $: \tierb$
      BList b2$^\tierb$ := other$^\tierb$; $: \tierb$
      while (b1$^\tierb$ != null && b2$^\tierb$ != null){
         if(b1$^\tierb$.getValue() != b2$^\tierb$.getValue()){
           res$^\tiera$ := false;$: \tierb$ //using (ISub)
         } else {;}
         b1$^\tierb$ := b1$^\tierb$.getQueue();$: \tierb$
         b2$^\tierb$ := b2$^\tierb$.getQueue();$: \tierb$
      }
      if (b1$^\tierb$ != null || b2$^\tierb$ != null) {
        res$^\tiera$ := false;$: \tierb$ //using (ISub)
      } else {;} $: \tierb$  
      return res$^\tiera$;
    }
\end{lstlisting}
}
The local variables $\tt b1$ and $\tt b2$ are enforced to be of tier $\tierb$ by Rules \textit{(Wh)} and (Op). Consequently, $\this$ and $\tt other$ are also of tier $\tierb$ using twice Rule (Ass). This is possible as it does not correspond to field assignments. Moreover, the methods $\tt getValue$ and $\tt getQueue$ will be typed by $\tt BList(\tierb) \to boolean(\tierb)$ and $\tt BList(\tierb) \to BList(\tierb)$, respectively. Finally, the local variable can be given the type $\bool(\tierb)$ or $\bool(\tiera)$ (in this latter case, the subtyping Rule (ISub) will be needed as illustrated in the above instruction) and, consequently, the admissible types for $\tt isEqual$ are $\tt BList(\tierb) \times BList(\tierb) \to \bool(\alpha)$, $\alpha \in \{\tiera,\tierb\}$.
\end{example}

\subsubsection{Examples of overriding}
\begin{example}[Override1]
Consider the following classes:
{\tt
\begin{lstlisting}
A {
  int x;
   ...
  void f(int y){  
    x := x+1; : $\tierb$ //using (ISub)
  }
}

B extends A{
  void f(int y){
    while(y$^\tierb$>0){
      x := x+1; : $\tiera$
      y := y-1; : $\tierb$
    }
  }
}
\end{lstlisting}
}
In the class $\verb!A!$, the method $\verb!f!$ can be given the type $\verb!A!(\tiera)\times \ent(\tierb) \to \void(\tierb)$ using Rules \textit{(Op)}, \textit{(Ass)}, \textit{(Self)} and \textit{(ISub)}.
In the class $\verb!B!$, the method $\verb!f!$ can be given the type $\verb!B!(\tiera)\times \ent(\tierb) \to \void(\tierb)$ using Rules \textit{(Op)}, \textit{(Ass)}, \textit{(Seq)}, \textit{(Wh)} and \textit{(Self)}.
Consequently, the following code can be typed:
{\tt
\begin{lstlisting}
A x = ...
if (condition){
  x=new A(...);
}else{
  x=new B(...);
}
x.f(25);
\end{lstlisting}
}
provided that $\tt condition$ is of tier $\tierb$ and using Rules \textit{(Pol)} and \textit{(OR)}. Indeed, we are not able to predict statically which one of the two method will be called. However we know that the argument is of tier $\tiera$ as its field will increase polynomially in both cases.
\end{example}

\begin{example}[Override2]
Consider the class $\tt BList$ and a subclass $\tt PairOfBList$ :
{\tt
\begin{lstlisting}
public class BList{
	...
	int length () {
		int res := 1; 
		if ( queue != null ) {
			res := queue.length();
			res := res +1;
		}
		else {;}
		return res ;
	}
}

public class PairOfBList extends BList {
	Blist l$^\tierb$;

	int length () {
		int res$^\tiera$ := queue$^\tierb$.length()$^\tiera$+2;
		while ( l$^\tierb$ != null ) {
			l$^\tierb$ := l$^\tierb$.getQueue();
			res$^\tiera$ := res$^\tiera$ +1;
		}
		else {;}
		return res$^\tiera$ ;
	}
}
\end{lstlisting}
}
The override method $\tt length$ of $\tt PairOfBList$ computes the size of a pair of $\tt BList$ objects, that is the sum of their respective size. As highlighted by the annotations, it can be given the type $\tt PairOfBList(\tierb) \to \ent(\tiera)$. Consequently, a method call of the shape:
$$\tt \ent \ i^\tiera := p.length()^\tiera; : \void(\tierb)$$
$\tt p$ being of type $\tt PairOfBList$, can be typed using the Rule (OR):
$$
\ninfer{\tt PairOfBList \dord BList \qquad \typenv,\Omega \imp_\tierb \ent\ length^{BList}() : BList(\tierb) \to  \ent({\tiera}) }{\tt \typenv,\Omega\imp_\tierb \ent\ length^{PairOfBList}() : PairOfBList({\tierb})\times \to  \ent({\tiera})}
{}
$$
as Example~\ref{ex8} has demonstrated that the judgment $\typenv,\Omega \imp_\tierb \ent\ length^{BList}() : BList(\tierb) \to  \ent({\tiera})$ can be derived. In order for the program to be safe, the tier annotation is $\tierb$ and, consequently, the assignment is of tiered type $\void(\tierb)$ (hence cannot be guarded by $\tiera$ data).
\end{example}

\subsubsection{Exponential as a counter-example}
\begin{example}\label{ex9}
Now we illustrate the limits of the type system with respect to the following method computing an exponential:
{\tt
\begin{lstlisting}
  exp(int x, int y){
    while(x$^\tierb$>0){ 
        int u$^?$ := y$^\tiera$;
        while (u$^?$>0){
           y := y$^\tiera$+1; : $\tiera$
           u := u$^?$-1; : $?$
        }
        x := x$^\tierb$-1;
    }
    return y$^\tiera$;
  }
\end{lstlisting}
}
It is not typable in the presented formalism. Indeed, suppose that it is typable. The expression $\verb!y+1;!$ enforces $\xb$ to be of tier $\tiera$ by Rule \textit{(Op)} and by definition of positive operators. Consequently, the instruction $\verb!int u := y;!$ enforces $\xd$ to be of tier $\tiera$ because of typing discipline for assignments (Rule \textit{(Ass)}). However, the innermost while loop enforces $\xd >0$ to be of tier $\tierb$ by Rules \textit{(Wh)} and \textit{(Op)}, so that $\xd$ has to be of tier $\tierb$ and we obtain a contradiction.

Now, one might suggest that the exponential could be computed using an intermediate method $\verb!add!$ for addition:
{\tt
\begin{lstlisting}
add(int x, int y){                       
  while(x$^\tierb$>0){
    x$^\tierb$ := x$^\tierb$-1; $:\tierb$
    y$^\tiera$ := y$^\tiera$+1; $:\tiera$
  }
  return y$^\tiera$;
}                             

expo(int x){
  int res := 1;
  while(x$^\tierb$>0){ 
      res := add(res$^?$, res$^?$);
      x := x$^\tierb$-1; : $\tierb$
  }
  return res;
}
\end{lstlisting}
}
Thankfully, this program is not typable as the first argument of $\verb!add!$ is enforced to be of tier $\tierb$ and the second argument is enforced to be of tier $\tiera$ (see previous section). Hence, the variable $\verb!res!$ would have two distinct tiers under the same context. That is clearly not allowed by the type system.
\end{example}

\subsection{Type preservation under flattening}

We show that the flattening of a typable instruction has a type preservation property. A direct consequence is that the flattened program can be considered instead of the initial program.
\begin{proposition}\label{lemma:typepres}
Given an instruction $\ca$, a contextual typing environment $\typenv$ and an operator typing environment $\Omega$ such that $\typenv,\Omega \imp \ca : \void(\alpha)$ holds, there is a contextual typing environment $\typenv'$ such that the following holds:
\begin{itemize}
\item $\forall \xa \in \ca,\ \typenv'(\xa)=\typenv(\xa)$
\item $\typenv',\Omega \imp \underline{\ca} : \void(\alpha)$
\end{itemize}
where $\xa \in \ca$ means that the variable $\xa$ appears in $\ca$.

Conversely, if $\typenv',\Omega \imp \underline{\ca} : \void(\alpha)$, then $\typenv', \Omega \imp {\ca} : \void(\alpha)$.
\end{proposition}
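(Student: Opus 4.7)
The plan is to proceed by induction on the definition of the flattening function in Figure~\ref{flat}, which mirrors the structural complexity of $\ca$. The base cases ($\ca \equiv \ ;$, $\ca \equiv [\tau]\ \xa \iasg \ea;$ with $\ea$ already atomic, and simple method-call statements where all arguments are atomic) are immediate: $\underline{\ca} = \ca$, so $\typenv' = \typenv$ works. For each compound case, I would invert the typing derivation of $\ca$ to extract, for every subexpression $\ea_i$ that flattening turns into a fresh variable $\xa_i$, its tiered type $\tau_i(\alpha_i)$ together with its annotation tier $\beta_i$, and then set $\typenv'(\xa_i) = \tau_i(\alpha_i)$ while leaving $\typenv'$ to agree with $\typenv$ on every pre-existing variable. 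This guarantees the first bullet $\forall \xa \in \ca,\ \typenv'(\xa) = \typenv(\xa)$ by construction.

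The representative case is $\ca \equiv [\tau]\ \xa \iasg \op(\eaa,\ldots,\ean);$, typed at $\void(\alpha \vee \beta)$ with $\beta = \vee \overline{\beta}$ through Rules \textit{(Op)} and \textit{(Ass)}, where each $\ea_i$ has tiered type $\tau_i(\alpha)$ and annotation $\beta_i$. After extending $\typenv$ to assign $\xa_i$ the type $\tau_i(\alpha)$, I apply the induction hypothesis to each $\underline{\tau_i\ \xa_i \iasg \ea_i;}$, obtaining a derivation of tier $\void(\alpha \vee \beta_i)$ under $\typenv'$; these are composed by \textit{(Seq)} into a block of tier $\void(\alpha \vee \beta)$. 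The terminal assignment $[\tau]\ \xa \iasg \op(\xaa,\ldots,\xan);$ re-types each $\xa_i$ via \textit{(Var)} at an arbitrary annotation, say $\tiera$, so that \textit{(Op)} then \textit{(Ass)} yield $\void(\alpha)$; a final \textit{(Seq)} combines everything into $\void(\alpha \vee \beta)$, matching the original tier. The cases for \textit{\new\ \C}, method calls, sequencing, \iwhile\ and \iif\ follow the same pattern; for the control-flow constructs, the fresh boolean guard $\xaa$ inherits tier $\tierb$ (for \iwhile) or the branch tier (for \iif), and the reassignment $\xaa := \ea;$ inserted before the loop body restores the invariant after each iteration. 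Method-signature typing remains unchanged because the contextual environment $\typenv$ is preserved and the new variables are local.

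For the converse direction, given $\typenv',\Omega \imp \underline{\ca} : \void(\alpha)$, I would show by induction on the same structure that the fresh variables introduced by flattening can be syntactically inlined: each $\xa_i$ is assigned exactly once, immediately before its use, and its tiered type $\typenv'(\xa_i)$ coincides with the tiered type extracted for $\ea_i$ in the flattened derivation. Replacing the sequence ``$\tau_i\ \xa_i \iasg \ea_i; \ldots\ \op(\ldots,\xa_i,\ldots)$'' by ``$\op(\ldots,\ea_i,\ldots)$'' and re-applying the corresponding expression rule reconstructs a derivation of $\ca$ at the same tier $\alpha$.

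The delicate point, and what I expect to be the main obstacle, is bookkeeping on the annotation tiers $\beta$ in expression judgments: the original derivation of $\op(\overline{\ea})$ uses $\beta = \vee \overline{\beta}$, but in the flattened form each $\beta_i$ gets absorbed into the tier of the intermediate assignment via \textit{(Ass)} while the terminal assignment uses variables whose annotation tier can be freely chosen by \textit{(Var)}. Verifying that \textit{(Seq)}-composition of these pieces reproduces the original $\void(\alpha \vee \beta)$ (and not something strictly larger) requires a careful case split on whether the variable being assigned is a field and on whether $\tau$ is primitive or a reference type, since Rule \textit{(Ass)} treats the two differently ($\alpha' \leq \alpha$ vs.\ $\alpha' = \alpha$). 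The nested method-call case also requires checking that the signature rule \textit{(C)} still applies after the receiver and each argument have been hoisted into fresh variables, but since \textit{(Var)} is tier-polymorphic on the annotation, this goes through without additional hypotheses on $\Delta$.
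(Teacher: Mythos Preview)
Your proposal is correct and follows essentially the same approach as the paper: induction on the flattening definition, inversion of the typing derivation to extract tiered types for subexpressions, extension of $\typenv$ on fresh variables, and recombination via \textit{(Seq)} (the paper takes the method-call case as representative rather than the operator case, and handles your ``delicate point'' on annotation tiers with a terse ``sub-typing might be used''). Your treatment of the converse is also aligned with, and more explicit than, the paper's one-line remark that $\typenv$ and $\typenv'$ agree on shared variables.
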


\begin{proof}
By induction on program flattening on instructions. Consider a method call $\ca =\tau \  \xa \iasg \ea.\m(\ea_1,\ldots,\ea_n);$ such that $\typenv,\Omega \imp \ca : \void(\alpha)$ and $\typenv =(s,\Delta)$. This means that  $\typenv,\Omega \imp_{\gamma_i} \ea_i : \tau_i(\alpha_i)$, $\typenv,\Omega \imp_{\gamma} \xa : \tau(\alpha)$ and  $\typenv,\Omega \imp_{\gamma'} \ea : \C(\beta)$ hold, for some $\gamma_i$, $\gamma$, $\gamma'$, $\tau_i$, $\alpha_i$, $\tau$, $\alpha$, $\C$ and $\beta$ (see Rule \textit{(C)} of Figure~\ref{fig:TypeE}). Applying the rules of Figure~\ref{flat}, the flattening of $\ca$ is of the shape $\underline{\cb}\ [\tau]\ \xa =\xa_{n+1}.\m'(\xaa,\ldots,\xan);$ with $\cb = \tau_1\ \xaa \iasg \eaa; \ldots \tau_n\ \xan \iasg \ean; \tau_{n+1}\ \xa_{n+1} \iasg \ea;$. Let $\typenv'$ for the environment that is equal to $\typenv$ but on the method signature $s$ where:
$$\typenv'(\xb)=\begin{cases} 
\tau(\alpha) \text{ if } \xb =\xa\\
\tau_i(\alpha_i) \text{ if } \xb \in \{\xa_1,\ldots, \xa_n\}\\
\C'(\beta) \text{ if } \xb =\xa_{n+1}\\
\typenv(\xb) \text{ otherwise }
\end{cases}
$$
We have
$\typenv', \Omega \imp  J \ [\tau]\ \xa =\xa_{n+1}.\m'(\xaa,\ldots,\xan);:\void(\alpha)$ and $\typenv' ,\Omega \imp J  : \void(\alpha)$ (sub-typing might be used). By induction hypothesis, there is a contextual typing environment $\typenv''$ such that $\typenv'',\Omega \imp \underline{J} : \void(\alpha)$ and $\forall\xa \in \ca,\ \typenv''(\xa)=\typenv'(\xa)=\typenv(\xa)$ and, consequently, $\typenv'' \imp_\gamma \underline{I}: \void(\alpha)$. All the other cases are treated similarly.

The converse is straightforward as $\typenv$ and $\typenv'$ match on the variables that they ``share'' in common.
\end{proof}

\subsection{Subject reduction}
We introduce an intermediate lemma stating that an instruction obtained through the evaluation of the computational instruction of a well-typed program is also well-typed. For this, we first need to extend the type system so that it will be defined on any meta-instruction:
\begin{figure}[!ht]
$$
\ninfer{} 
{\typenv, \Omega\imp  \pop; :\void(\alpha)}
{\textit{(Pop)}}  
\qquad
\ninfer{} 
{\typenv, \Omega\imp  \push(s_\heap); :\void(\tiera)}
{\textit{(Push)}}  
$$
\caption{Extra typing rules for Push and Pop instructions} \label{pp}
\end{figure}

Notice that this definition is quite natural as $\push$ makes the memory increase and so is of tier $\tiera$ while $\pop$ makes the memory decrease and so can be of any tier. One possibility would have been to include these typing rules directly in the initial type system. We did the current choice as these meta-instructions are only obtained during computation whereas the type inference is supposed to be performed statically on the flattened program, that is an AOO program, independently of its execution.

\begin{lemma}[Subject reduction]\label{lem:subred}
Let $\typenv$ be a contextual typing environment, $\Omega$ be an operator typing environment and $P$ be an AOO program of executable $\Exe\{ \void\ \main()\{ {\tt Init}\ {\tt Comp} \} \}$ such that $P$ is well-typed with respect to $\typenv$ and $\Omega$. If  $(\mathcal{I},\underline{\tt Comp}) \to^* (\conf, \mi)$ then $\typenv, \Omega \imp \mi : \void(\tierb)$.
\end{lemma}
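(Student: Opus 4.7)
The plan is to proceed by induction on the length $k$ of the reduction $(\mathcal{I},\underline{\texttt{Comp}}) \to^* (\conf, \mi)$, using case analysis on the last reduction rule of Figure~\ref{fig:sem} in the inductive step. I will lean on the two extra typing rules \textit{(Pop)} and \textit{(Push)} of Figure~\ref{pp} to type the meta-instructions introduced by Rule~(10), and I will use Proposition~\ref{lemma:typepres} for the base case.

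\textbf{Base case $k=0$.} Here $\conf = \mathcal{I}$ and $\mi = \underline{\texttt{Comp}}$. Well-typedness of $P$ gives $\typenv, \Omega \imp \texttt{Comp} : \void(\tierb)$. By Proposition~\ref{lemma:typepres}, this derivation can be lifted to the flattened form so that $\typenv, \Omega \imp \underline{\texttt{Comp}} : \void(\tierb)$ holds (the environment can be enriched on the fresh intermediate variables without affecting variables occurring in $\texttt{Comp}$).

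\textbf{Inductive step.} Assume $\typenv, \Omega \imp \mi : \void(\tierb)$ and a reduction step $(\conf, \mi) \to (\conf', \mi')$. I would fix a typing derivation $\pi$ of $\mi$ and do a case analysis on the top-level shape of $\mi$, which dictates which semantic rule applies. For almost all rules, the argument is routine: rules~(1)--(7), (9) and (11)--(12) consume a prefix of $\mi$ that was typed by a single axiom or by a short derivation using \textit{(Ass)}, \textit{(Pop)}, \textit{(Push)}, and the remainder $\mi'$ is typed by a strict sub-derivation of $\pi$ combined by \textit{(Seq)}, possibly using \textit{(ISub)} to lift a tier $\tiera$ conclusion to $\tierb$. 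For rules~(13)--(14), the hypothesis \textit{(Wh)} gives both $\typenv,\Omega\imp_\tierb \xa : \bool(\tierb)$ and $\typenv,\Omega\imp \mi_0 : \void(\tierb)$ for the loop body, so unfolding $\iwhile(\xa)\ido\{\mi_0\}$ into $\mi_0\ \iwhile(\xa)\ido\{\mi_0\}$ is just two applications of \textit{(Seq)}. Rule~(15) takes one of the branches $\mi_{\conf(\xa)}$, each of which is typed at tier $\tierb$ (if needed, by applying \textit{(ISub)} to the $\tiera$-typed branch, since \textit{(If)} forces both branches to share the guard's tier).

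\textbf{The hard case is Rule~(10)}, the method call. Here the original $\mi$ has prefix $[\xd\iasg]\xa.\m(\overline{\xb});$ typed via \textit{(C)} from a typing $\typenv\{s\},\Omega \imp_\beta s : \C(\beta)\times \langle\overline{\tau}(\overline{\alpha})\rangle \to \tau(\alpha)$ of the method's signature $s=\tau\ \m^\C(\overline{\tau})$, and the step produces
\[
\push(\langle s, \ldots\rangle);\ \mi_{\text{body}}\ [\xd \iasg \xa';]\ \pop;\ \mi_0,
\]
where $\mi_{\text{body}}$ is the flattened body of $\m$. The obstacle is that $\mi_{\text{body}}$ is well-typed only under the contextual environment $\typenv\{s\}$ (updated by \textit{(Body)} with $\this \leftarrow \C(\beta)$, parameters and return variable), not necessarily under $\typenv$ itself. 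I would resolve this by observing that the typing judgment derived via \textit{(Body)} from the global typing $\Delta$ gives exactly $\typenv\{s\},\Omega \imp \mi_{\text{body}} : \void(\gamma)$ with $\gamma \ord \beta$, that the bracketing $\push(\cdot);\ \cdots\ \pop;$ plays the role of the implicit context switch, and that rules \textit{(Push)} and \textit{(Pop)} type their respective instructions at $\void(\tiera)$ and $\void(\alpha)$ for any $\alpha$. Combining these with \textit{(ISub)} where necessary, one obtains a $\void(\tierb)$ typing of the whole rewritten instruction, using \textit{(Seq)} to concatenate with the residual $\mi_0$ (which is typable at $\void(\tierb)$ by the same analysis of the derivation of $\mi$ that established the call site's typing). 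Rule~(8) is a simpler variant of the same pattern, only involving a constructor call typed via \textit{(New)} and \textit{(Cons)}.

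The main technical subtlety is therefore ensuring that the $\push/\pop$ brackets validly witness the change of the contextual environment used inside the method body; once that coherence is stated precisely (it amounts to saying that $\Delta$ contains a \textit{(Body)}-derivation for every method signature appearing in the program), every other case of the induction reduces to an application of \textit{(Seq)} and at most one use of \textit{(ISub)}.
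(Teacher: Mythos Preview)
Your proposal is correct and follows essentially the same approach as the paper: induction on the length of the reduction, case analysis on the last semantic rule, with the routine cases handled by peeling off a prefix via \textit{(Seq)} and lifting via \textit{(ISub)}, the while and if cases by unfolding, and the method-call case by typing the injected $\push$/$\pop$ with the extra rules of Figure~\ref{pp} and the body via the \textit{(Body)}/\textit{(C)} derivation. You are in fact more explicit than the paper about the context-switch subtlety (the body being typed under $\typenv\{s\}$ rather than $\typenv$), which the paper's proof glosses over; otherwise the structure and the key ideas coincide.
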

\begin{proof}
By Proposition~\ref{lemma:typepres}, $\underline{\tt Comp}$ can be typed.
We proceed by induction on the reduction length $n$ of $\to$ ; $\to^n$ being a notation for a length $n$ reduction:
\begin{itemize}
\item If $n=0$ then $\mi=\underline{\tt Comp}$ and since the program is well-typed, we have $\typenv, \Omega \imp \underline{\tt Comp} : \void(\tierb)$.
\item Now consider a reduction of length $n+1$. It can be written as $(\mathcal{I},\underline{\tt Comp}) \to^n (\conf,\mi) \to (\conf',\mi')$. By induction hypothesis $\typenv, \Omega \imp \mi : \void(\tierb)$. Moreover, the last rule can be:
\begin{itemize}
\item the evaluation of an assignment (Rules (1-7) of Figure~\ref{fig:sem}) or the evaluation of a push or pop instructions (Rules (9) and (10)  of Figure~\ref{fig:sem}). In all these cases, $\mi=\mi_1 \ \mi'$ and $(\conf,\mi) \to (\conf', \mi')$. By Rule \textit{(Seq)} of Figure~\ref{fig:TypeI}, $\typenv, \Omega \imp \mi' : \void(\alpha)$, for some $\alpha$ and, consequently, $\typenv, \Omega \imp \mi' : \void(\tierb)$ using Rule \textit{(ISub)} of Figure~\ref{fig:TypeI}.
\item the evaluation of a method call (Rule (8) of Figure~\ref{fig:sem}). In this case, $(\conf,\mi)=(\conf,[\xa \iasg]\xc.\m(\overline{\xb});\mi') \to (\conf', \push(s_\heap);\mi'' \ [\xa \iasg \xc';]\ \pop;\ \mi )$, provided that $\mi''$ is the flattened body of method $\m$. By Rule \textit{(Seq)} of Figure~\ref{fig:TypeI}, $\typenv, \Omega \imp \mi' : \void(\alpha)$, for some $\alpha$. By the extra rules of Figure~\ref{pp}, $\typenv, \Omega \imp \push(s_\heap); : \void(\tiera)$ and $\typenv, \Omega \imp \pop; : \void(\tiera)$. Moreover by Rules \textit{(Body)} of Figure~\ref{fig:TypeI} and \textit{(C)} of Figure~\ref{fig:TypeE}, the flattened body $\mi''$ can be typed by $\typenv, \Omega \imp \mi'': \void(\beta)$, for some $\beta$ (the method output tier). We let the reader check that $\xa \iasg \xc';$ can also be typed using the same reasoning in the case where the method returns something. Finally, using several times Rule \textit{(Seq)} and one time Rule \textit{(ISub)} of Figure~\ref{fig:TypeI}, we obtain that $\typenv, \Omega \imp \push(s_\heap);\mi'' \ [\xa \iasg \xc';]\ \pop;\ \mi : \void(\tierb)$.
\item the evaluation of a while loop (Rules (11) and (12) of Figure~\ref{fig:sem}). In this case $\mi =  \iwh(\xa)\ido \{ \mi_1  \}\ \mi_2$  and either $\xa$ is evaluated to false, in which case we take $\mi' = \mi_2$  or $\xa$ evaluates to true and $\mi'= \mi_2\ \iwh(\xa)\ido \{ \mi_1  \}$. In both cases, $\typenv, \Omega \imp  \mi_2 : \void(\alpha)$, some $\alpha$, and $\typenv, \Omega \imp \iwh(\xa)\ido \{ \mi_1  \} : \void(\tierb)$ can be derived by Rules $\textit{(Seq)}$ and $\textit{(Wh)}$  of Figure~\ref{fig:TypeI}. Consequently, we can derive $\typenv, \Omega \imp  \mi' : \void(\tierb)$ by either using Rule $\textit{(ISub)}$ (for false), if needed, or just by using Rule $\textit{(Seq)}$ (for true).
\item the evaluation of a {\tt if} ((Rule (13) of Figure~\ref{fig:sem}) can be done in a similar manner.
\end{itemize}
and so the result.\qedhere
\end{itemize}
\end{proof}

\begin{remark}
Subject reduction is presented in a weak form as we do not consider the tier $\tiera$ case. It also holds but is of no interest in this particular case as, by monotonicity of the typing rule, every sub-instruction will be of tier 0 and there will be no loop and no recursive call. This is the reason why the computational instruction is typed using tier $\tierb$ in the definition of well-typedness.
\end{remark}

\section{Safe recursion}
\subsection{Recursive methods}
Given two methods of signatures $s$ and $s'$ and names $\m$ and $\m'$, define the relation $\mcall$ on method signatures by $s\ \mcall s'$ if $\m'$ is called in in the body of $\m$.
This relation is extended to inheritance 
by considering that overriding methods are called by the overridden method.
Let  $\mcall^+$ be its transitive closure.
A method of signature $s$ is \emph{recursive} if $s\ \mcall^+ s$ holds. Given two method signatures $s$ and $s',\ s \equiv s'$ holds if both $s \mcall^+ s'$ and $s' \mcall^+ s$ hold. Given a signature $s$, the class $[s]$ is defined as usual by $[s]=\{s' \ | \  s' \equiv s \}$. Finally, we write $s \smcall^+ s'$ if  $s \mcall^+ s'$ holds but not $s' \mcall^+ s$.

Notice that the extension of $\mcall$ to method override is rough as we consider overriding methods to be called by the overridden method though it is clearly not always the case. However it is put in order to make the set $[s]$ computable for any method signature $s$. Indeed in a method call of the shape $\xa.\m()$, provided that $\xa$ is a declared variable of type $\C$ and that $\void\ \m()$ is a method declared in $\C$ and overridden in some subclass $\D$, the evaluation may lead dynamically to either a call to  $\void\ \m^\C()$ or to a call to $\void\ \m^\D()$ depending on the instance that will be assigned to $\xa$ during the program evaluation. This choice is highly undecidable as it depends on program semantics. Here we will consider that $\void\ \m^\C() \mcall \void\  \m^\D()$. Consequently, if  the call $\xa.\m()$ is performed in the body of a method of signature $s$ then $s\ \mcall\ \void\ \m^\C()\ \mcall\ \void \ \m^\D()$ and, consequently, $s\ \mcall^+ \ \void\ \m^\D()$, \textit{i.e.} a call to the method of signature $s$ may lead to a call to the method of signature $\void\ \m^\D()$. 

\begin{lemma}\label{lemma:lev}
Given an AOO program, for any method signature $s$, the set $[s]$ is computable in polynomial time in the size of the program.
\end{lemma}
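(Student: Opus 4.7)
The plan is to reduce the problem to a standard strongly connected components computation on the call graph of method signatures. Observe that by definition $s \equiv s'$ iff $s \mcall^+ s'$ and $s' \mcall^+ s$, so $[s]$ is exactly the strongly connected component of $s$ in the directed graph $G = (V,E)$ whose vertex set $V$ is the set of all method signatures declared in the program and whose edge set $E$ contains a pair $(s, s')$ whenever $\m'$ is syntactically called in the body of $\m$, together with the override-induced edges dictated by the extended definition of $\mcall$.

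First I would show that $G$ can be constructed in polynomial time in $\size{P}$. The number of vertices is bounded by the total number of method declarations, which is linear in $\size{P}$. For each method body of $\m$, I would scan it for subterms of the shape $\ea.\m'(\overline{\ea})$; for each such call, the statically declared receiver type $\C$ can be recovered from the surrounding declarations, and then for every method $\m'$ of matching signature declared in $\C$ or in any subclass $\D$ with $\C \dord \D$ (to cover override-induced edges, consistent with the convention that overriding methods are treated as called by the overridden method), add the corresponding edge. Since inheritance $\dord$ can be precomputed in polynomial time by transitive closure of the $\extends$ relation, and the number of relevant classes is at most linear in $\size{P}$, each call site contributes at most a polynomial number of edges; summing over all call sites in all method bodies gives $|E| = O(\size{P}^2)$.

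Next I would invoke a standard strongly connected components algorithm (Tarjan's or Kosaraju's), which runs in time $O(|V| + |E|)$, to partition $V$ into SCCs. Reading off the SCC containing $s$ yields $[s]$. The total cost is polynomial in $\size{P}$.

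The only real subtlety is the handling of the override extension of $\mcall$: one must be careful to add edges to all overriding methods in subclasses, not only to the statically resolved target, otherwise the computed SCC would understate $[s]$. Once this is made explicit as above, the argument is entirely routine; no delicate complexity analysis is needed beyond bounding the size of $G$ and invoking the known linear-time SCC procedure.
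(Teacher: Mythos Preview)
Your approach is correct and follows the same line as the paper's own proof: both read off the call relation $\mcall$ from the program syntax and then apply a standard polynomial-time graph procedure (you invoke an SCC algorithm such as Tarjan's, while the paper simply computes the transitive closure $\mcall^+$ and checks mutual reachability). One minor slip: you write ``subclass $\D$ with $\C \dord \D$'', but in the paper's convention $\C \dord \D$ means $\C$ extends $\D$, so for $\D$ a subclass of $\C$ you want $\D \dord \C$.
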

\begin{proof}
It just suffices to look at the program syntax to generate linearly the relation $\mcall$ and then to compute its transitive closure $\mcall^+$.
\end{proof}

\subsection{Level}

The notion of level of a meta-instruction is introduced to compute an upper bound on the number of nested recursive calls for a method call evaluation.

\begin{definition}[Level]
The level $\lev$ of a method signature is defined by:
\begin{itemize}
\item 
$\lev(s)=\max \{ \lev(s') \ | \ s \mcall s'\}$ if $s \notin [s]$ (\textit{i.e.} the method is not recursive),
\item
$\lev(s)=1+ \max \{ \lev(s') \ | \ s \smcall^+ s'\}$ otherwise,
\end{itemize}
setting $\max(\emptyset)=0$.

Let $\lev(P)$ be  the maximal level of a method in a program $P$ (or $\underline{P}$).  By abuse of notation, we will write $\lev(\m)$ and, respectively, $\lev$ when the signature of the method $\m$ and the program $P$, respectively, are clear from the context.
\end{definition}

\begin{example}\label{ex10}
Consider the methods of Example~\ref{ex8}:
\begin{itemize}
\item $\tt \lev(getQueue) = \lev(getValue)= \lev(setQueue) = 0$ as these methods do not call any other methods in their body and, consequently, are not recursive,
\item $\tt \lev(isEqual)=0$ as $\tt isEqual$ call the methods $\tt getQueue$ and $\tt getValue$ but the converse does not hold. Consequently, $\tt BList \ isEqual^{\tt BList}(BList)  \smcall^+ BList \ getQueue^{\tt BList}()$ and $[\tt BList \ isEqual^{\tt BList}(BList)] = \emptyset$,
\item $\tt \lev(decrement)=1+ \max \{ \lev(s') \ | \ \void \ decrement^{\tt BList}() \smcall^+ s'\} = 1+\max(\emptyset) =1,$
\item $\tt \lev(length)=1$, for the same reason.
 \end{itemize}
\end{example}

\subsection{Intricacy}
The notion of intricacy corresponds to the number of nested $\iwh$ loops in a
meta-instruction and will be used to compute the requested upper bounds.

\begin{definition}[Intricacy]
Let the intricacy $\nest$ be partial function from meta-instructions to integers defined as follows:
\begin{itemize}
\item ${\nest}([[\tau] \xa \iasg] \mea;) = 0$ if $\mea$ is not a method call, 
\item ${\nest}([{[\tau]}\ \xa \iasg] \xb.\m(\ldots);) =\max_{\D \dord \C\star}(\nu(\mi_{\D})) $\\
provided that $\m$ is of the shape $\tau\ \m(\ldots)\{\mi_{\D}\ [\iret\ \xc;]\}$ in a class $\D \dord \C\star$, where $\xb$ is of type $\C$ and $\C \star$ is the least super-class of $\C$ where $\m$ is defined.
\item ${\nest}(\mi\ \mi') = \max({\nest}(\mi), {\nest}(\mi'))$
\item ${\nest}(\instr{if}(\xa) \{\mi\}\instr{else}\{\mi'\}) = \max({\nest}(\mi), {\nest}(\mi'))$
\item ${\nest}(\iwh (\xa)\{\mi\}) = 1 + {\nest}(\mi)$
\end{itemize}
 Moreover, let $\nest(P)$ be the maximal intricacy of a meta-instruction within the flattened AOO program $\underline{P}$. By abuse of notation, we will write $\nest$ when the program $P$ is clear from the context.
\end{definition}
Observe that for any instruction $\ca$, the intricacy of its flattening $\nest(\underline{\ca})$ is well-defined as there is no $\push$ or $\pop$ operations occuring in it. Moreover, in the simple case where there is no inheritance then the intricacy of a method call ${\nest}([{[\tau]}\ \xa \iasg] \xb.\m(\ldots);)$, for some method of the shape $\tau\ \m(\ldots)\{\mi\ [\iret\ \xc;]\}$ is just equal to $\nest(\mi)$ (as $\C \star =\C$ and the only $\D$ such that $\D \dord \C$ is $\C$ itself).

\begin{example}
Consider the following meta-instruction $\mi$:
{\tt
\begin{lstlisting}
   while(x){
     while(y){
       b := l.isEqual(o);
     }
   }
\end{lstlisting}
}
$\nest(\mi)=2+\nest(\mi')$, if $\mi'$ is the flattened body of the method $\tt isEqual$. We let the reader check that $\nest(\mi')=1$ (there is one while inside). Consequently, $\nest(\mi)=3$.
\end{example}

\subsection{Safety}\label{recmet}
Now we put some aside restrictions on recursive methods to ensure that their computations remain polynomially bounded.

\begin{definition}[Safety]\label{def:safety}
A well-typed program with respect to a typing environment $\Delta$ and operator typing environment $\Omega$ is \emph{safe} if for each recursive method  $\tau \ \m(\overline{\tau\ \xa})\{ \ca\ [\return\ \xa;]\}$:
\begin{enumerate}
\item there is exactly one call to some $\m' \in [\m]$ in the instruction $\ca$,\label{D1}
\item there is no while loop inside $\ca$, i.e. $\nu(\ca)=0$,\label{D2}
\item and only judgments of the shape $(s, \Delta),\Omega \imp_\tierb \tau \ \m^\C(\overline{\tau\ \xa}) :\C(\tierb)\times \langle \overline{\tau}(\tierb) \rangle \to \tau(\alpha)$ can be derived using Rules $\textit{(Body)}$ and $\textit{(OR)}$.\label{D3}
\end{enumerate}
\end{definition}
Item~\ref{D1} ensures that recursive methods will be restricted to have only one recursive call performed during the evaluation of their body. It will prevent exponential accumulation through mutual recursion. A counter-example is a program computing the Fibonacci sequence with a mutually recursive call of the shape $\m(n-1)+\m(n-2)$, for $n>2$.
Item~\ref{D2} is here to simplify the complexity analysis. It is not that restrictive in the sense that it enforces the programmer to make a choice between a functional programming style using pure recursive calls or an imperative one using while loops. 
Item~\ref{D3} is very important. It enforces recursive methods to have tier $\tierb$ inputs in order to prevent a control flow depending on tier $\tiera$ variables during the recursive calls evaluation and to have an output whose use is restricted to tier $\tierb$ instructions (this is the purpose of the $\imp_\tierb$ use) in order to prevent the recursive method call to be controlled by tier $\tiera$ instructions.
The method output tier $\alpha$ is not restricted.

\begin{example}\label{11}
A well-typed program whose computational instruction uses the methods $\tt getQueue,\ getValue,\ setQueue,\ length$ and $\tt isEqual$ of Example~\ref{ex8} will be safe. Indeed  the only recursive method is length. As illustrated in Example~\ref{ex8}, it can be typed by $\tt BList(\tierb) \to \ent(\tiera)$, it does not contain any while loop and has only one recursive call in its body.

A program whose computational instruction uses the method $\tt decrement$ will not be safe as this method can only be given the type $BList(\tiera) \to \void(\tiera)$. Though it entails a lack of expressivity, there is no straightforward reason to reject this code. However we will see clearly in the next subsection that we do not want tier $\tiera$ data to control a while or a recursion while we do not want tier $\tierb$ data to be altered. Changing the type system by allowing $\tt decrement$ to apply to tier $\tierb$ objects would allow codes like:
{\tt
\begin{lstlisting}
   while(!o.isEqual(l)){
     o.decrement();
   }
\end{lstlisting}
}
where $\tt l$ is a $\tt BList$ of $n$ booleans equal to $\false$. Clearly, such a loop can be executed exponentially in the input size (the size of the lists). This is highly undesirable.
\end{example}

An important point to mention is that safety allows an easy way to perform expression subtyping for objects through the use of cloning. In other words, it is possible to bring a reference type expression from tier $\tierb$ to tier $\tiera$ based on the premise that it has been cloned in memory (this was already true for primitive data by Rule \textit{(Ass)}). Thus a form of subtyping that does not break the non-interference properties is admissible as illustrated by the following example:
\begin{example}
Let us add a clone method to the to the class $\tt BList$. This clone method will output a copy of the current object that, as it is constructed, will be of tier $\tiera$.

{\tt
\begin{lstlisting}
  BList clone(){
    BList res$^\tiera$ := null;
    int v$^\tiera$ := value$^\tierb$;
    if(queue$^\tierb$ == null){
      res$^\tiera$ := new BList(v$^\tiera$,null)$^\tiera$;
    }
    else{
     res$^\tiera$ := new BList(v$^\tiera$,queue.clone()$^\tiera$);
    }
    return res;
  }
\end{lstlisting}
}
This is typable by $\tt BList(\tierb) \to BList(\tiera)$. Moreover, the method is safe, as there is only one recursive call.
\end{example}

\begin{lemma}\label{lemma:safeequiv}
A program $P$ is safe iff $\underline{P}$ is safe. 
\end{lemma}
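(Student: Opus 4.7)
The plan is to verify that each of the three conditions in Definition~\ref{def:safety} is preserved by the flattening transformation, in both directions. Since flattening acts on method bodies and leaves method signatures (and therefore the call relation $\mcall$, the equivalence classes $[s]$, and the notion of ``recursive method'') unchanged, it suffices to argue about the body of each recursive method before and after flattening.

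First I would show that the relation $\mcall$ is preserved: for any two method names $\m$ and $\m'$, a call to $\m'$ appears in the body of $\m$ in $P$ iff it appears in the body of $\m$ in $\underline{P}$. This follows by a direct structural induction on the flattening rules of Figure~\ref{flat}: the only places where method call subexpressions are introduced are the rules for method invocation, operator application and constructor application, and each such rule turns the subexpression $\eai$ into an assignment $\xai \iasg \eai$, then proceeds recursively. Thus the multiset of method calls occurring in $\ca$ and in $\underline{\ca}$ are equal. In particular, $[\m]$ is invariant under flattening, and Condition~(1) of Definition~\ref{def:safety} transfers between $P$ and $\underline{P}$.

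Next I would show that flattening preserves the intricacy function $\nest$. The only flattening rule that touches a $\iwhile$ replaces $\iwh(\ea)\ido\{\ca\}$ by $\underline{\bool\ \xaa \iasg \ea;}\ \iwh(\xaa)\ido\{\underline{\ca}\ \underline{\xaa \iasg \ea;}\}$; all new material injected before the loop and in the extra final assignment is flat code without $\iwh$, so it contributes $0$ to the intricacy. Similarly, the $\iif$ rule does not create new loops, and the assignment/constructor/call rules introduce only flat prefixes. A straightforward induction on $\ca$ then gives $\nest(\underline{\ca})=\nest(\ca)$; combined with invariance of the call graph above, one gets $\nu(\underline{\mi})=\nu(\mi)$ for any meta-instruction, and in particular Condition~(2) transfers between $P$ and $\underline{P}$.

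For Condition~(3), I would invoke Proposition~\ref{lemma:typepres} directly: a method body $\ca$ can be given a derivation $(s,\Delta),\Omega \imp_\tierb s : \C(\tierb)\times \langle \overline{\tau}(\tierb)\rangle\to \tau(\alpha)$ iff its flattening can be given such a derivation (the proposition supplies both directions after inspecting rule \textit{(Body)}, which differs only in what typing judgment it carries for the body). Putting the three items together yields the equivalence. The only step that requires a bit of care is checking that the inductive case for flattening a method call in the proof of $\mcall$-preservation correctly accounts for a method-call expression nested inside another method-call's arguments, since such an expression is \emph{extracted} into its own assignment $\tau_i\ \xai \iasg \eai$; this is the main (mild) obstacle, and it is handled just by noting that the extracted assignment still contains exactly one top-level call to the same method, so call multiplicities are unchanged.
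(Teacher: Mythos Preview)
Your proposal is correct and follows essentially the same approach as the paper, which dispatches the lemma in two sentences: it invokes Proposition~\ref{lemma:typepres} for well-typedness and then asserts that ``the flattening transformation has no effect on method calls, while loops and method signatures,'' so all three items of Definition~\ref{def:safety} transfer. Your write-up is simply a more explicit unfolding of that same argument, checking each item separately.

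One small point to tighten: your claim that ``the multiset of method calls occurring in $\ca$ and in $\underline{\ca}$ are equal'' is not valid for arbitrary instructions, because the $\iwh$ flattening rule duplicates the guard expression $\ea$ (it appears once before the loop and once at the end of the loop body), so any method call inside $\ea$ is doubled in $\underline{\ca}$. This does not harm the lemma, since Item~(\ref{D2}) of safety already forces recursive method bodies to contain no while loops, and for such bodies your multiset argument is sound; but you should make this dependency on Item~(\ref{D2}) explicit rather than claiming multiset preservation by an unrestricted induction over the flattening rules.
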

\begin{proof}
By Proposition~\ref{lemma:typepres}, $P$ is well-typed iff $\underline{P}$ is well-typed. The flattening transformation has no effect on method calls, while loops and method signatures. Consequently, $P$ satisfies Items~\ref{D1},~\ref{D2} and~\ref{D3} iff so does $\underline{P}$.
\end{proof}

\begin{lemma}\label{lemma:safecheck}
Given a well-typed program $P$ with respect to some typing derivation $\Pi$, the safety of $P$ can be checked in polynomial time in $\size{P}$. 
\end{lemma}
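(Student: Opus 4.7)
The plan is to check each of the three clauses of Definition~\ref{def:safety} separately, exploiting the fact that all of them are either syntactic or localizable in the provided derivation $\Pi$.

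First, I would identify the recursive method signatures. From the program syntax one builds the call graph $\mcall$ by a single pass over the bodies of all methods and constructors, which is $O(\size{P}^2)$. Its transitive closure $\mcall^+$ can be computed in $O(\size{P}^3)$ (e.g.\ Floyd--Warshall on the method-signature graph), and a signature $s$ is recursive iff $s \mcall^+ s$. In the same step one obtains, for every $s$, the equivalence class $[s] = \{ s' \mid s \equiv s'\}$. This is just a reformulation of Lemma~\ref{lemma:lev}, and yields in polynomial time both the set of recursive signatures and the membership test ``$s' \in [s]$'' as a table lookup.

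Next, for every recursive method $\tau\ \m(\overline{\tau\ \xa})\{\ca\ [\return\ \xa;]\}$, I would check conditions~\ref{D1} and~\ref{D2} in a single traversal of $\ca$. For~\ref{D1}, walk the abstract syntax tree of $\ca$ once and count occurrences of method calls $\ea.\m'(\overline{\ea})$ whose resolved signature belongs to $[\m]$; using the precomputed table this is $O(|\ca|)$ per method and $O(\size{P}^2)$ overall, and we reject unless the count is exactly one. For~\ref{D2}, the same traversal tests that no $\iwh$ node appears in $\ca$. Both are routine syntactic checks.

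Finally, condition~\ref{D3} is checked against the derivation $\Pi$ itself. I would scan $\Pi$ and inspect every application of Rule \textit{(Body)} or \textit{(OR)} whose conclusion types a recursive signature $s$: the rule's conclusion must match the pattern $(s,\Delta),\Omega \imp_\tierb \tau\ \m^\C(\overline{\tau}) : \C(\tierb) \times \langle \overline{\tau}(\tierb)\rangle \to \tau(\alpha)$, i.e.\ the method-call tier annotation is $\tierb$, the self-tier is $\tierb$, and every parameter tier is $\tierb$ (the return tier $\alpha$ being unconstrained). Each test is a constant-time pattern match on the conclusion of a rule node, so the cost is linear in the size of $\Pi$. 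Taking all three checks together yields a polynomial-time procedure.

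The main obstacle is the dependence of the last check on the given derivation $\Pi$: I rely on the fact that $\Pi$ is part of the input and has size polynomial in $\size{P}$, an assumption consistent with Proposition~\ref{poltypeinf} establishing polynomial-time type inference. Were one to ask the check to be carried out purely on $P$ (ignoring $\Pi$), one would need to re-infer tiers for the recursive signatures, which is exactly what polynomial type inference provides; hence the bound still holds, but the cleanest statement is with respect to a fixed $\Pi$.
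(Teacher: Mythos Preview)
Your proposal is correct and follows essentially the same three-step decomposition as the paper: use Lemma~\ref{lemma:lev} to compute the recursive classes, check Items~\ref{D1} and~\ref{D2} by a syntactic traversal, and check Item~\ref{D3} by scanning $\Pi$.

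The one point where you diverge is in bounding $\size{\Pi}$ by a polynomial in $\size{P}$. You treat this as an assumption backed by Proposition~\ref{poltypeinf}, but that proposition concerns the \emph{decision} of well-typedness, not the size of a derivation. The paper instead gives a direct structural argument: every typing rule except \textit{(ISub)} and \textit{(Pol)} consumes a program construct, \textit{(ISub)} can be applied at most once per instruction, and \textit{(Pol)} at most $k$ times per expression where $k$ is the inheritance depth, itself bounded by $\size{P}$; hence $\size{\Pi}$ is quadratic in $\size{P}$. This closes the gap you flagged without appealing to type inference, and makes the lemma self-contained with respect to a fixed $\Pi$.
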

\begin{proof}
By Lemma~\ref{lemma:lev}, computing the sets of ``equivalent'' recursive methods can be done in polynomial time in $\size{P}$. Once, this is done, checking that there is no while loops and only one recursive call inside can be done quadratically in the program size. Finally, checking for Item~\ref{D3} can be done linearly in the size of the typing derivation $\Pi$ of $P$. However such a derivation is quadratic in the size of the program as all typing rules apart from $\textit{(ISub)}$ and $\textit{(Pol)}$ correspond to some program construct. Just notice that Rule $\textit{(ISub)}$ can be applied only once per instruction while Rule  $\textit{(Pol)}$ can be applied at most $k$ times per expression, provided that $k$ is an upper bound on the number of nested inheritance (and $k$ is bounded by $\size{P}$).
\end{proof}

\begin{corollary}
Given a well-typed program $P$ with respect to some typing derivation $\Pi$, the safety of $\underline{P}$ can be checked in polynomial time in $\size{P}$. 
\end{corollary}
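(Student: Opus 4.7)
The plan is to obtain this corollary as an essentially immediate consequence of the preceding results, namely Lemma~\ref{lemma:safeequiv}, Lemma~\ref{lemma:safecheck} and Corollary~\ref{lem:flatsize}. There are two equally natural routes; I would sketch both but settle on the first as the cleaner argument.

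First, I would invoke Lemma~\ref{lemma:safeequiv}, which gives the equivalence: $P$ is safe iff $\underline{P}$ is safe. Hence deciding the safety of $\underline{P}$ reduces to deciding the safety of $P$. Next, I would apply Lemma~\ref{lemma:safecheck} directly to $P$ together with its typing derivation $\Pi$, which yields a safety check running in time polynomial in $\size{P}$. Combining these two observations immediately produces the desired bound, so no genuinely new computation is required.

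As a sanity check, I would also verify the alternative route: since $\underline{P}$ is itself a well-formed (indeed well-typed, by Proposition~\ref{lemma:typepres}) AOO program, Lemma~\ref{lemma:safecheck} applies to it and gives a safety check polynomial in $\size{\underline{P}}$; by Corollary~\ref{lem:flatsize} we have $\size{\underline{P}} = O(\size{P}^2)$, so this bound is also polynomial in $\size{P}$. One small subtlety here is that Lemma~\ref{lemma:safecheck} is stated relative to a typing derivation of $P$, so in this alternative one needs to observe that the derivation $\Pi$ of $P$ can be transformed (via the construction in the proof of Proposition~\ref{lemma:typepres}) into a typing derivation of $\underline{P}$ whose size remains polynomial in $\size{P}$. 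Either route works.

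I do not anticipate any real obstacle: the statement is a bookkeeping corollary combining safety preservation under flattening with the polynomial bound on flattening size and the polynomial cost of the safety check. The only point worth mentioning explicitly in the final write-up is the choice between applying Lemma~\ref{lemma:safecheck} to $P$ (and then transferring via Lemma~\ref{lemma:safeequiv}) versus applying it to $\underline{P}$ (and then transferring via Corollary~\ref{lem:flatsize}); the former is shorter and is the version I would present.
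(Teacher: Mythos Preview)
Your proposal is correct, and your first route is exactly the argument the paper gives: apply Lemma~\ref{lemma:safecheck} to $P$ and transfer to $\underline{P}$ via Lemma~\ref{lemma:safeequiv}. Your alternative route via Corollary~\ref{lem:flatsize} also works but is not the one the paper chose.
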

\begin{proof}
By Lemma~\ref{lemma:safecheck}, the safety of $P$ can be checked in polynomial time in $\size{P}$ and so the result, by Lemma~\ref{lemma:safeequiv}. 
\end{proof}

\subsection{General safety}
The safety criterion is sometimes very restrictive from an expressivity point of view because of Item~\ref{D1}.
This Item is restrictive but simple: the interest is to have a decidable criterion for safety while keeping a completeness result for polynomial time as we shall see shortly. However it can be generalized to a semantics (thus undecidable criterion) that we call general safety in order to increase program expressivity.

\begin{definition}[General safety]\label{def:gsafety}
A well-typed program with respect to a typing environment $\Delta$ and operator typing environment $\Omega$ is \emph{generally safe} if for each recursive method  $\tau \ \m(\overline{\tau\ \xa})\{ \ca\ [\return\ \xa;]\}$:
\begin{enumerate}
\item the following condition is satisfied:
\begin{enumerate}
\item either there is at most one call to some $\m' \in [\m]$ in the evaluation of $\ca$,\label{c1}
\item or all recursive calls are performed on a distinct field of the current object. Such fields being used at most once.\label{c2}
\end{enumerate}
\item there is no while loop inside $\ca$, i.e. $\nu(\ca)=0$,\label{c4}
\item and only judgments of the shape $(s, \Delta),\Omega \imp_\tierb \tau \ \m^\C(\overline{\tau\ \xa}) :\C(\tierb)\times \langle \overline{\tau}(\tierb) \rangle \to \tau(\alpha)$ can be derived using Rules $\textit{(Body)}$ and $\textit{(OR)}$.\label{c3}
\end{enumerate}
\end{definition}

\begin{example}
General safety improves the expressivity of captured programs as
illustrated by the following example:

{\tt
\begin{lstlisting}
class Tree {
  int node; 
  Tree left;
  Tree right;
  
  int value(BList b$^\tierb$) {
    int res$^\tierb$ := node$^\tierb$;
    if(b$^\tierb$ != null && right$^\tierb$ != null && left$^\tierb$ != null){
      if(b.getValue()){
        res := right.value(b.getQueue()$^\tierb$); : $\tierb$
      }else{ 
        res := left.value(b.getQueue()$^\tierb$); : $\tierb$
      }
   }else{;} 
   return res;
  }
}
\end{lstlisting}
}

The method $\verb!value!$ returns the value of the node whose path is encoded by a boolean list in the method parameter $\verb!b!$ (the boolean constants $\verb!true!$ and $\verb!false!$ encoding the right and left sons, respectively). It can be typed by $\verb!Tree!$($\tierb$) $\times$ $\verb!BList!$($\tierb$) $\to$ $\ent$($\tierb$). Consequently, it is generally safe (but not safe) as exactly one branch of the conditional (thus one recursive call) can be reached, thus satisfying Item~\ref{c1} of Definition~\ref{def:gsafety}.

In the same manner, Item~\ref{c2} allows us to implement the clone method on recursive structures:
{\tt
\begin{lstlisting}
  Tree clone(){
    Tree res$^\tiera$ := null;
    int n$^\tiera$ := node$^\tierb$;
    if(left$^\tierb$ == null && right$^\tierb$ == null){
    //The case of non well-balanced trees is not treated. 
      res$^\tiera$ := new Tree(n$^\tiera$,null,null)$^\tiera$;
    }
    else{
     res$^\tiera$ := new Tree(n$^\tiera$,left.clone()$^\tiera$, right.clone()$^\tiera$);
    }
    return res;
  }
\end{lstlisting}
}
This is typable by $\tt Tree(\tierb) \to Tree(\tiera)$ thanks to Item~\ref{c2} of Definition~\ref{def:gsafety} as the two recursive calls are applied exactly once on distinct fields ($\tt left$ and $\tt right$).
\end{example}

We can now compare safety and general safety:
\begin{proposition}\label{afortiori}
If a program is safe then it is generally safe. The converse is not true.
\end{proposition}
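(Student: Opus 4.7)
The forward direction reduces to comparing the three clauses of the two definitions. Clauses~\ref{D2} and~\ref{c4} are literally identical (no while loop inside $\ca$), and so are clauses~\ref{D3} and~\ref{c3} (the tier constraint on the method signature). Hence the only work is to show that clause~\ref{D1} of safety implies clause~\ref{c1} of general safety.

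For this, I would argue as follows. By clause~\ref{D2}, the body $\ca$ contains no $\iwhile$. So the operational evaluation of $\ca$ (more precisely of its flattened version $\underline{\ca}$ through the rules of Figure~\ref{fig:sem}) can visit each syntactic occurrence of a method call at most once per activation: the only constructs that can produce repeated visits are while loops, ruled out here, and recursion, which concerns nested activations rather than the same body. Thus if $\ca$ contains exactly one syntactic call to some $\m'\in[\m]$ (clause~\ref{D1}), at most one recursive call to $[\m]$ is triggered during the evaluation of $\ca$, which is precisely clause~\ref{c1}. Hence the program is generally safe.

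For the converse, it suffices to exhibit a program that is generally safe but violates clause~\ref{D1}. The \texttt{Tree.value} method given right after Definition~\ref{def:gsafety} is exactly such an example: its body syntactically contains two recursive calls (one in each branch of the inner conditional), so it violates the ``exactly one call in $\ca$'' requirement of clause~\ref{D1} of safety; yet it satisfies clause~\ref{c1} of general safety since the two calls lie in mutually exclusive branches of an $\iif/\ielse$, so at most one of them fires during any evaluation of $\ca$. Clauses~\ref{c4} and~\ref{c3} hold because the body contains no while loop and because the method can be typed with tier $\tierb$ parameters, as already observed in the example. Hence this program is generally safe but not safe, showing the converse fails.

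The only delicate point in the plan is the justification of ``at most one activation of each syntactic call per body evaluation'' when clause~\ref{D2} holds; this is where one must be careful with the flattening, since $\underline{\ca}$ introduces fresh assignments but no new loop or duplicated method call, so the property is preserved. Everything else is an immediate comparison of the clauses and a single counter-example for the strict inclusion.
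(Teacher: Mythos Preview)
Your proposal is correct and follows essentially the same approach as the paper. The paper's own proof is extremely terse: for the forward direction it simply says that one syntactic recursive call \emph{a fortiori} yields at most one recursive call during evaluation, and for the converse it exhibits the abstract pattern $\iif(\ea)\{\m();\}\ielse\{\m();\}$. Your argument unpacks the ``a fortiori'' by observing that, under clause~\ref{D2}, the absence of $\iwhile$ prevents any syntactic call site from being revisited within a single activation, and your counterexample (the \texttt{Tree.value} method) is a concrete instance of the paper's abstract conditional pattern; so the two proofs coincide in substance, with yours being the more explicit version.
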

\begin{proof}
If there is only one recursive call in the body of any recursive method then, a fortiori, there is only one recursive call during the body evaluation. On the opposite in an instruction of the shape $\ca=\iif (\ea)\ithen\{ \m();\}\ielse \{\m();\}$ there is only one call of $\m()$ during the evaluation of $\ca$ whereas $\m$ is called twice in $\ca$.
\end{proof}

\begin{proposition}\label{undec}
General safety is undecidable.
\end{proposition}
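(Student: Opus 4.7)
The plan is to reduce (the complement of) the halting problem to the decision problem for general safety. Given a Turing machine $M$, I would construct a well-typed AOO program $P_M$ containing a recursive method $\m$ such that $P_M$ is generally safe iff $M$ does not halt on the empty input. Since the latter is undecidable, so is general safety.

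First I would introduce a non-recursive helper $\bool\ \mathtt{validComp}(\mathtt{BList}\ \xa)$ that uses a single \iwh\ loop guarded by $\xa \neq \nul$ to walk the list $\xa$ and returns $\true$ iff $\xa$ encodes a valid halting computation of $M$ on the empty input, i.e.\ a finite sequence of TM configurations, each a one-step $M$-successor of the previous, with the last one accepting. Because this verification is purely read-only, performed by structural list traversal together with primitive equality tests on tape cells, control states and head positions of successive configurations, $\mathtt{validComp}$ is typeable as $\mathtt{BList}(\tierb) \to \bool(\tierb)$ using only neutral operators and no object creation, with an accumulator flag of tier $\tierb$. Second, I would add a recursive method $\m$ taking an input of type $\mathtt{BList}(\tierb)$ with tier annotation $\tierb$, whose body assigns $\xb \iasg \xa.\mathtt{getQueue}()$, makes a first recursive call $\m(\xb)$, and then makes a second recursive call $\m(\xb)$ inside an \iif\ guarded by $\mathtt{validComp}(\xa)$. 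Rules \textit{(Body)}, \textit{(C)}, \textit{(If)}, \textit{(Op)} and \textit{(Ass**)} yield well-typedness of the resulting $P_M$.

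Next I would verify each clause of Definition~\ref{def:gsafety} for $\m$. Clause~\ref{c4} holds because the body of $\m$ contains no \iwh\ loop (the loop is hidden inside the non-recursive $\mathtt{validComp}$); clause~\ref{c3} holds by construction; clause~\ref{c2} is vacuous because both recursive calls are issued on the local variable $\xb$, not on distinct fields of $\this$. General safety of $P_M$ thus collapses to clause~\ref{c1}, the semantic requirement that for every reachable memory configuration the body of $\m$ triggers at most one recursive call. By inspection of the two call sites, this holds iff $\mathtt{validComp}$ returns $\false$ on every argument, iff no list witnesses a halting computation of $M$, iff $M$ does not halt on empty input. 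Any decision procedure for general safety would then decide the halting problem, which is impossible.

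The hard part will be to show that $\mathtt{validComp}$ is fully typeable at tier $\tierb$: no object creation and no positive operator may appear in its body, and every intermediate value (the current position pointer, the symbol and state comparisons, the accumulated Boolean) must remain at tier $\tierb$, so that the returned Boolean can guard the inner \iif\ whose true branch contains a tier $\tierb$ recursive call to $\m$. This is feasible because the verification algorithm only needs list traversal via $\mathtt{getQueue}$ and primitive equality tests, both admissible neutral operations on tier $\tierb$ data; the finiteness of the TM's alphabet and state set lets the per-step transition check be encoded by a fixed bounded-depth tree of such comparisons, entirely without growing the heap.
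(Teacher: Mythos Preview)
Your reduction is the right idea and is, in fact, a spelled-out version of what the paper only sketches: the paper's own proof merely observes that clause~\ref{c1} is a reachability/dead-code question and cites the undecidability of those problems, whereas you build an explicit halting-problem reduction via the ``$\mathtt{validComp}$ guards a second recursive call'' trick.

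There is, however, a genuine gap in your handling of clause~\ref{c4}. You argue that ``the loop is hidden inside the non-recursive $\mathtt{validComp}$'', but the paper's intricacy function $\nu$ is defined to \emph{propagate through method calls}: $\nu(\xb.\m(\ldots);)=\max_{\D\dord\C\star}\nu(\mi_\D)$, and the example immediately after the definition computes $\nu$ of a call to $\mathtt{isEqual}$ as $1$ precisely because $\mathtt{isEqual}$'s body contains a $\iwh$. Consequently, with a $\iwh$ inside $\mathtt{validComp}$ one gets $\nu(\text{body of }\m)\geq 1$, so clause~\ref{c4} fails independently of $M$, and $P_M$ is never generally safe. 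Your reduction then proves nothing.

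The repair is easy and keeps your argument intact: implement $\mathtt{validComp}$ by structural recursion on the list (one recursive call on $\mathtt{getQueue}$) instead of a $\iwh$ loop. Then $\mathtt{validComp}$ is itself a recursive method that trivially satisfies clauses~\ref{c1}--\ref{c3} (single call on the queue field, no $\iwh$, tier-$\tierb$ typing via neutral operators only), it lies in a separate class $[\mathtt{validComp}]\neq[\m]$ so its calls do not count against $\m$'s clause~\ref{c1}, and now $\nu(\text{body of }\m)=0$. With this change clauses~\ref{c2}, \ref{c4}, \ref{c3} hold for $\m$ unconditionally, and general safety of $P_M$ really does collapse to clause~\ref{c1} for $\m$, i.e.\ to whether the guarded second call is ever reached, i.e.\ to the halting of $M$. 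Alternatively, you can simply align with the paper and argue directly that clause~\ref{c1} is a reachability predicate on a Turing-complete language, without constructing $\mathtt{validComp}$ at all.
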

\begin{proof}
Since Item~\ref{c1} can be reduced to either a dead-code problem or a reachability problem that are known to be undecidable problems. Items~\ref{c2},~\ref{c4} and~\ref{c3} are still decidable in polynomial time.
\end{proof}

Note that there are decidable classes of programs between safe programs and generally safe programs. For example, one may ask a program to have only one recursive call per reachable branch of a conditional in a method body. This requirement is clearly decidable in polynomial time.

Hence general safety can be seen as a generalization of the safe recursion on notation ({\sc SRN}) scheme by Bellantoni and Cook~\cite{BC92}. Indeed a {\sc SRN} function can be defined (and typed) by a method \verb!f! in a class $\C$ by:
{\tt
\begin{lstlisting}
int f(int x, int y){
    int res = 0;
    if (x==0) {
        res = g(${\tt y}$);
    } else {
        if (x%2 == 0) {
            res = ${\tt h}_0$(f(x/2,${\tt y}$));
        } else {
            res = ${\tt h}_1$(f(x/2,${\tt y}$));
        }
    }
    return res;
} 
\end{lstlisting}
}
This can be typed provided that ${\tt f} : \C(\tierb) \times \ent(\tierb) \times \ent(\tierb)  \to  \ent(\tiera)$, ${\tt h}_i : \C(\tierb) \times \ent(\tiera) \to \ent(\tiera)$, $i \in \{0,1\}$ and ${\tt g} : \C(\tierb) \times \ent(\tierb) \to  \ent(\tiera)$. As $\tt f$ is recursive, its parameters have to be of tier $\tierb$. However its output can be of tier $\tiera$ provided that we can use only derivations of the shape $\typenv,\Omega \imp_\tierb \ent\ {\tt f}() : \C(\tierb) \times \ent(\tierb) \times \ent(\tierb)  \to  \ent(\tiera)$ and $\typenv,\Omega \imp_\tierb {\tt h}_i : \C(\tierb) \times \ent(\tiera) \to \ent(\tiera) $. 
If $\tt f$ output is of tier $\tiera$ (i.e. computes something) then ${\tt h}_i$ will not be able to recurse on it (as in {\sc SRN}). Clearly, the above program fullfills the general safety criterion for some typing context $\Gamma$ such that $\Gamma(\xa)=\Gamma(\xb)=\ent(\tierb)$ and $\Gamma(\texttt{res})=\ent(\tiera)$ as the recursive calls that are performed in the ${\tt res} = {\tt h}_0({\tt f}(\xa/2,{\tt y}));$ instruction can be given the type $\void(\tierb)$ using Rules \textit{(Ass)} and \textit{(ISub)}.  The operators $\% 2$ and $==0$ can be given the type $\Omega(\% 2) \ni \ent(\tierb) \to \ent(\tierb)$ and $\Omega(==0) \ni \ent(\tierb) \to \bool(\tierb)$ as they are neutral. Finally, the method body is typable using twice the Rule \textit{(If)} on tier $\tierb$ guard and instructions.

\section{Type system non-interference properties}

In this section, we demonstrate that classical non-interference results are obtained through the use of the considered type system. For that purpose, we introduce some intermediate lemmata. Notice that all the results presented in this section hold for compatible pairs only (see Section~\ref{compatible}).

The confinement Lemma expresses the fact that no tier $\tierb$ variables are modified by a command of tier $\tiera$.

\begin{lemma}[Confinement]\label{lem:confinement}
Let $P$ be an AOO program of computational instruction ${\tt Comp}$, $\typenv$ be a contextual typing environment and $\Omega$ be an operator typing environment such that $P$ is (generally) safe with respect to $\typenv$ and $\Omega$.
If $\typenv, \Omega \imp {\tt Comp} :\void(\tiera)$,
then every variable assigned to during the execution of $(\mathcal{I},\underline{{\tt Comp}})$ is of tier $\tiera$.
\end{lemma}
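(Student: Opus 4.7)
The plan is to prove by induction on the reduction length $n$ of $(\mathcal{I},\underline{\tt Comp}) \to^n (\conf, \mi)$ that (i) the meta-instruction $\mi$ is typable at tier $\tiera$ under the appropriate contextual typing environment, and (ii) every assignment performed in the first $n$ steps modifies a variable of tier $\tiera$. The lemma then follows.

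First I would establish a tier-$\tiera$ analogue of the subject reduction lemma: if $\typenv,\Omega \imp \mi : \void(\tiera)$ and $(\conf,\mi)\to(\conf',\mi')$ then $\typenv,\Omega \imp \mi' : \void(\tiera)$. The argument mirrors Lemma~\ref{lem:subred} but crucially exploits \emph{monotonicity} of the tier system: since there is no subtyping from $\tierb$ to $\tiera$, an instruction typed at $\tiera$ admits a derivation in which every subcomputation is also typed at tier $\tiera$. In particular no application of Rule \textit{(Wh)} occurs (it always yields tier $\tierb$), and by Rule \textit{(Seq)} both components of a sequence must be at tier $\tiera$. For the method-call reduction (Rule (10) of Figure~\ref{fig:sem}), applied to an instruction $[\xd \iasg]\xa.\m(\overline{\xb});$ typed at $\void(\tiera)$, Rule \textit{(Ass)} forces the expression $\xa.\m(\overline{\xb})$ to be typed at tier $\tiera$ with annotation $\tiera$, so by Rule \textit{(C)} the method signature is derived with $\imp_\tiera$. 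Consulting Rule \textit{(Body)}, the annotation $\gamma$ on the signature coincides with the tier of the body, so the flattened body $\mi'$ is typable at tier $\tiera$; similarly the generated $\push$ instruction is typable at $\tiera$ by the added Rule \textit{(Push)} and $\pop$ by Rule \textit{(Pop)}. Safety (Definition~\ref{def:safety}, Item~\ref{D3}) guarantees that the called method is not recursive, since recursive methods admit only $\imp_\tierb$ derivations of their signatures, so no infinite unfolding can occur at tier $\tiera$.

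Now I would run the induction. The base case ($n=0$) holds by the hypothesis $\typenv,\Omega \imp \underline{\tt Comp}:\void(\tiera)$ and vacuously for assignments. For the inductive step, apply the tier-$\tiera$ subject reduction to conclude that the new meta-instruction is still typed at tier $\tiera$. If the reduction step performs an assignment via Rules (1)--(8) of Figure~\ref{fig:sem}, then the head of the current meta-instruction is of the form $[\tau]\ \xa \iasg \mea;$ and is typable at tier $\void(\tiera)$ by Rule \textit{(Seq)} (both components of a $\tiera$ sequence are $\tiera$). Rule \textit{(Ass)} with conclusion tier $\alpha \vee \beta = \tiera$ forces $\alpha = \beta = \tiera$, and the side condition then forces the tier $\alpha'$ of $\xa$ to be $\tiera$: when $\xa$ is a field, directly by $\xa\in\mathcal{F}\Rightarrow\alpha=\tiera$; when $\tau\in\Class$, by $\alpha'=\alpha=\tiera$; when $\tau\in\mathbb{T}$, by $\alpha'\ord\alpha=\tiera$, hence $\alpha'=\tiera$ since $\tiera$ is the bottom of the tier lattice. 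The remaining semantic rules (pushes, pops, conditionals, and the well/false branches of loops) perform no assignments, and Rule \textit{(Wh)} cannot produce $\mi$ in the first place, so they are vacuous for this invariant.

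The main obstacle is the method-call step: one needs to verify that the body of the called method is typed at tier $\tiera$ under the appropriately updated contextual typing environment (with $\this$, parameters and local fields bound). This is handled exactly as in the subject reduction argument sketched above, by reading back from the $\imp_\tiera$ annotation on the method signature (via Rule \textit{(C)}) to the body tier $\gamma = \tiera$ in Rule \textit{(Body)}, and by invoking safety to preclude a recursive call appearing in a $\tiera$ context. Once this is in place, the induction goes through uniformly and the confinement property is established.
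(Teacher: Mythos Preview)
Your proof is correct and rests on the same key observations as the paper's: monotonicity of the tier system (a tier-$\tiera$ instruction has only tier-$\tiera$ subinstructions, so no \textit{(Wh)} and, by safety Item~\ref{D3}, no recursive call), and the analysis of Rule \textit{(Ass)} at tier $\tiera$ forcing the assigned variable's tier $\alpha'$ to $\tiera$. The paper organises this as a direct structural induction on the meta-instruction $\mi$ reached, while you instead factor out an explicit tier-$\tiera$ subject-reduction lemma and induct on reduction length; this is a minor reorganisation rather than a different argument, though your version is somewhat more explicit about the operational details (in particular the method-call step and the reading-back from the $\imp_{\tiera}$ annotation via \textit{(C)} and \textit{(Body)}).
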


\begin{proof}
By Proposition~\ref{lemma:typepres} and Lemma~\ref{lemma:safeequiv}, we know that $\underline{\texttt{Comp}}$ is safe with respect to environments $\typenv'$ and $\Omega$ such that $\typenv', \Omega \imp \underline{\texttt{Comp}} :\void(\tiera)$.
Now we prove by induction that for any $\mi$ such that $(\mathcal{I},\underline{{\tt Comp}}) \to^{*}(\conf, \mi)$,  every variable assigned to in the evaluation of $\mi$ is of tier $\tiera$:
\begin{itemize}
\item if $ \mi =\epsilon$ or $;$ then the result is straightforward.
\item if $ \mi = [\tau]\ \xa \iasg \mea ;$. Assume $\xa$ is of tier $\tierb$. From rule \textit{(Ass)}, it means that $\alpha'=\tierb$, implying that $\alpha=\tierb$, then that $\mi: \void(\tierb)$.

\item  $\mi = \ca_1 \ \ca_2$ (or  $\iif (\xa)\ithen\{ \ca_1\} \ielse \{\ca_2\}$) then both $\ca_i$ are of tiered type $\void(\tiera)$ by Rule \textit{(Seq)} (respectively \textit{(If)}) and so the result by induction.
\item $ \mi = \xb.\m(\overline{\xa});$ then $\xb.\m(\overline{\xa});$ is of tiered type $\void(\tiera)$ by Rule \textit{(Ass)}. Hence $\m$ cannot be a recursive method because of safety. 
More, $\m$ should be typed as $\typenv',\Omega \imp_\gamma \void \ \m^\C(\overline{\tau}) : \C(\beta) \times \langle \overline{\tau}(\overline{\alpha}) \rangle \to \void(\tiera)$. 
Finally, by Rule \textit{(Body)} the flattened body of $\m$ is also of tier $\void(\tiera)$. 
Consequently, by induction hypothesis, it does not contain assignments of tier $\tierb$ variables. 
\end{itemize}
and so the result.
\end{proof}

\begin{lemma}\label{lem:nowhile0}
Let $P$ be an AOO program of computational instruction ${\tt Comp}$, $\typenv$ be a contextual typing environment and $\Omega$ be an operator typing environment such that $P$ is (generally) safe with respect to $\typenv$ and $\Omega$.
For all $\mi$ such that $(\mathcal{I}, {\tt Comp}) \to^{*} (\conf, \mi)$ and $\typenv, \Omega \imp \mi: \void(\tiera)$, 
there is no while loops and recursive calls evaluated during this execution.
\end{lemma}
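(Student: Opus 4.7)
The plan is to strengthen Lemma~\ref{lem:subred} to a tier-preserving form for tier $\tiera$: if $(\mathcal{I},\underline{{\tt Comp}}) \to^* (\conf, \mi)$ with $\typenv, \Omega \imp \underline{{\tt Comp}} : \void(\tiera)$, then $\typenv, \Omega \imp \mi : \void(\tiera)$. Once this invariant is in hand, the conclusion follows by inspecting which reduction rules could possibly fire. A while-unrolling step (Rules~(13)--(14) of Figure~\ref{fig:sem}) requires a subterm of shape $\iwh(\xa)\ido\{\mi'\}$ in $\mi$, but Rule \textit{(Wh)} forces any such subterm to have type $\void(\tierb)$, and subtyping only flows from $\tiera$ to $\tierb$ (Rule \textit{(ISub)}), never the other way; hence such a redex cannot occur inside a tier $\tiera$ meta-instruction. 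Similarly, a method-call step (Rule~(10)) firing a recursive method $\m$ requires the redex $[\xd\iasg] \xa.\m(\overline{\xb});$ to sit inside $\mi$; by inversion of Rules \textit{(Ass)}, \textit{(Seq)} and \textit{(C)}, this forces a derivation of the shape $\typenv', \Omega \imp_\beta \tau\ \m^\C(\overline{\tau}) : \ldots$ with $\beta = \tiera$. But the safety condition (Item~\ref{D3} of Definition~\ref{def:safety}, or Item~\ref{c3} of Definition~\ref{def:gsafety}) only allows derivations with $\beta = \tierb$ for recursive methods; contradiction.

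The tier-$\tiera$ invariant itself is proved by induction on the reduction length, mirroring the case analysis of Lemma~\ref{lem:subred}. The base case is immediate. For the inductive step, assignment, operator evaluation, conditional unfolding and pointer-graph update steps preserve the tier by direct inversion of Rules \textit{(Seq)} and \textit{(Ass)} (in these cases the continuation is a syntactic subterm whose tier is at most that of the whole, and recovered via \textit{(Seq)} at tier $\tiera$). For $\push$ and $\pop$ insertion, Figure~\ref{pp} assigns $\push$ the tier $\tiera$ and permits $\pop$ to be typed at $\tiera$, so the composite remains at tier $\tiera$ through Rule \textit{(Seq)}. For method-call expansion producing $\push(s_\heap); \mi'' \ [\xd \iasg \xc';]\ \pop;\ \mi$, the body $\mi''$ is typable at tier $\tiera$ by inversion of Rule \textit{(Body)}: the call itself is typed at $\beta = \tiera$ (as above), so the method annotation $\gamma$ is $\tiera$, and Rule \textit{(Body)} types the body exactly at that $\gamma$.

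The main obstacle is this last point about method-call expansion: one must guarantee that the expanded body actually inherits tier $\tiera$, rather than being implicitly promoted to $\tierb$ via (ISub) as in Lemma~\ref{lem:subred}. This is exactly what Rule \textit{(Body)} gives us, since it fixes the body tier equal to the annotation $\gamma$, and the ambient Rule \textit{(Ass)} plus Rule \textit{(C)} pin that annotation to $\tiera$ in our setting. Together with the observation that no recursive method can be called at tier $\tiera$ (by safety), the invariant propagates through every execution step, and consequently no while-loop or recursive call is ever evaluated. This formally captures the claim made in the Remark following Lemma~\ref{lem:subred}.
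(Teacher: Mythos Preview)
Your proposal is correct and follows the same approach as the paper: establish that tier $\tiera$ is preserved under reduction, then observe that while loops (by Rule \textit{(Wh)}) and recursive calls (by the safety condition) are necessarily of tier $\tierb$ and hence excluded. The paper's proof is much terser, merely citing the tier-$\tiera$ variant of subject reduction that is stated but not proved in the Remark after Lemma~\ref{lem:subred}; you actually spell out that invariant and its inductive proof, which is a welcome addition. One minor phrasing point: you state the invariant with $\underline{{\tt Comp}}$ as the tier-$\tiera$ starting point, whereas the lemma's hypothesis gives tier $\tiera$ only at the intermediate $\mi$; your argument works unchanged when restated with $(\conf,\mi)$ as the starting point of the preservation induction.
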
  
\begin{proof}

From Lemma~\ref{lem:subred}, we know that all instructions reached from a tier~$\tiera$ instruction are of tier~$\tiera$.

\begin{itemize}
\item By rule \textit{(Wh)}, if a while loop occurs, it is of tier $\tierb$.
\item By safety assumption, calls to recursive functions are of tier $\tierb$. \qedhere
\end{itemize}
\end{proof}

We now establish a non-interference Theorem which states that if $\xa$ is variable of tier $\tierb$ then the value stored in $\xa$ is independent from variables of tier $\tiera$. For this, we first need to define the suitable relation on memory configurations and meta-instructions:
\begin{definition}
Let $\typenv$ be a contextual typing environment and $\Omega$ be an operator typing environment. 
\begin{itemize}
\item The equivalence relation $\approx_{\typenv,\Omega}$ on memory configurations is defined as follows: \\
 $\conf \approx_{\typenv,\Omega} \conff$ iff the transitive closure of the pointer graph of $\conf$ corresponding to $\tierb$ variables with respect to $\typenv$ is equal to the one of $\conff$ and both stacks match everywhere except on tier $\tiera$ variables.
\item The relation $\approx_{\typenv,\Omega}$ is extended to commands as follows:
\begin{enumerate}
\item If $\ca=\cb$ then $\ca \approx_{\typenv,\Omega} \cb$
\item If $\typenv, \Omega \imp \ca: \void(\tiera)$ and $\typenv, \Omega \imp \cb:\void(\tiera)$ then $\ca \approx_{\typenv,\Omega} \cb$
\item If $\ca \approx_{\typenv,\Omega} \cb$ and $\cc \approx_{\typenv,\Omega} \cd$ then $\ca\ \cc\approx_{\typenv,\Omega}  \cb\ \cd$ 
\end{enumerate}
\item Finally, it is extended to configurations as follows: \\
 If $\ca \approx_{\typenv,\Omega} \cb$ and $\conf \approx_{\typenv,\Omega} \conff$ then $(\conf,\ca)   \approx_{\typenv,\Omega} (\conff, \cb)$
\end{itemize}
\end{definition}

In other words, two commands that configurations are equivalent for $\approx_{\typenv, \Omega}$ if their memory configurations restricted to tier~$\tierb$ are the same and their commands of tier~$\tierb$ are the same.

\begin{theorem}[Non-interference] \label{thm:soundness}
Assume that $\typenv$ is a contextual typing environment and $\Omega$ is an operator typing environment
such that  $\mi$ and $\mj$ are the computational instructions of two safe programs with respect to $\typenv$ and $\Omega$ having exactly the same classes. 
Assume also that $(\conf, \mi) \approx_{\typenv,\Omega}(\conff,\mj)$. If $(\conf, \mi) \to (\conf',\mi')$ then there exist $\conff'$  and $\mj'$ such that:
\begin{itemize}
\item $(\conff, \mj) \to^*(\conff',\mj')$
\item and $(\conf', \mi') \approx_{\typenv,\Omega}(\conff',\mj')$
\end{itemize}
\end{theorem}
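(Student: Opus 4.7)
The plan is to prove the theorem by a combined induction on the derivation of $(\conf,\mi) \approx_{\typenv,\Omega} (\conff,\mj)$ and a case analysis on the reduction rule applied to $(\conf,\mi)$. Since the equivalence on commands has three clauses (syntactic equality, tier $\tiera$ on both sides, and structural concatenation) and the equivalence on configurations pairs this with an equivalence on memory configurations restricted to the tier $\tierb$ portion of the pointer graph and the tier $\tierb$ variables in each stack frame, I will reduce each case to one of these clauses on the right-hand side.

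The structural clause is handled first: if $\mi = \ca\ \cc$ with $\ca \approx \cb$ and $\cc \approx \cd$, then the evaluation fires on the leftmost meta-instruction of $\ca$ (or of $\cc$ when $\ca = \epsilon$), so I apply the induction hypothesis on that sub-instruction and re-concatenate with $\cd$. The tier $\tiera$ clause is handled using Lemma~\ref{lem:confinement} and Lemma~\ref{lem:nowhile0}: if both $\mi$ and $\mj$ are typed $\void(\tiera)$, then the reduction $(\conf,\mi) \to (\conf',\mi')$ assigns only to tier $\tiera$ variables and creates only tier $\tiera$ graph nodes, so the tier $\tierb$ projection of $\conf'$ still agrees with that of $\conf$, hence with that of $\conff$; by subject reduction the residual $\mi'$ remains tier $\tiera$, so taking $\conff' = \conff$ and $\mj' = \mj$ with zero steps gives $(\conf',\mi') \approx_{\typenv,\Omega} (\conff,\mj)$ by the tier $\tiera$ clause again.

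The syntactic-equality clause is where the real work lies, and I dispatch it by the form of the leading meta-instruction of $\mi = \mj$. For an assignment whose right-hand side has tier $\tierb$, typing forces all constituents to have tier $\tierb$, so Rules~(1)--(10) consume only values on which $\conf$ and $\conff$ coincide, and the same reduction step can be fired on the right, producing equivalent configurations. For an assignment of a tier $\tiera$ variable, typing allows tier $\tierb$ to tier $\tiera$ flow on primitive values but forbids it on references; firing the same step on the right preserves the tier $\tierb$ portion of the graph and stack. For $\iwhile$, Rule~(Wh) forces a tier $\tierb$ guard, so the truth value agrees on both sides and the same branch is chosen. For $\iif$, either the guard is tier $\tierb$ and the same branch is taken, or the guard is tier $\tiera$ and Rule~(If) forces both branches to be tier $\tiera$, so the resulting $\mi'$ and $\mj'$ (one per configuration) are related by the tier $\tiera$ clause. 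For a method call, if the receiver has tier $\tierb$ then the class label of the referenced node agrees in both graphs (it belongs to the tier $\tierb$ projection), hence dynamic dispatch selects the same method, and the pushed frames bind identical tier $\tierb$ parameters; if the receiver has tier $\tiera$ then the whole call has tier $\tiera$ and the previous clause applies.

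The main obstacle will be the method-call case combined with safety of recursive methods: I must argue that the new stack frame pushed by Rule~(10) preserves the equivalence, which requires that the $\this$ field valuation and each tier $\tierb$ parameter are bound to values which agree under the tier $\tierb$ projection of the heap, and that the body of the selected method is itself typable in the contextual environment updated to $s$. Safety is essential here: recursive methods are forced by Item~\ref{D3} of Definition~\ref{def:safety} to take all parameters at tier $\tierb$, so when they are actually invoked the simulation cannot diverge in the two executions because the driving data coincide. The remaining administrative rules $\push$ and $\pop$ are typed in Figure~\ref{pp} in a way that is compatible with both clauses of the command equivalence, so they do not create additional cases beyond what is already handled.
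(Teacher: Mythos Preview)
Your approach is essentially the same as the paper's: an induction/case analysis on the shape of $\mi$, splitting each case by whether the instruction is tier $\tiera$ or tier $\tierb$, and invoking the Confinement Lemma in the tier $\tiera$ branch. Your presentation is in fact more thorough than the paper's, which sketches only the assignment and sequence cases and closes with ``and so on, for all the other remaining cases''; your explicit treatment of $\iif$, $\iwhile$, and method dispatch (including the observation that a tier $\tierb$ receiver forces agreement on the dynamically selected class) fills in what the paper leaves implicit.

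One small but genuine difference worth flagging: in the tier $\tiera$ clause you take \emph{zero} steps on the right, relying on a tier-$\tiera$ form of subject reduction so that $\mi'$ remains tier $\tiera$ and hence $(\conf',\mi') \approx_{\typenv,\Omega} (\conff,\mj)$ directly. The paper instead runs $\mj$ all the way to $\epsilon$, which tacitly requires termination of tier $\tiera$ code (obtainable from Lemma~\ref{lem:nowhile0}, though the paper does not spell this out). Your choice is cleaner and avoids that side obligation; just be aware that Lemma~\ref{lem:subred} as stated only yields tier $\tierb$, so you are using the strengthening acknowledged in the Remark following it.
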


\begin{proof}

We proceed by induction on $\mi$.
\begin{itemize}
\item If $\mi = \xa \iasg \mea ;$. There are two cases to consider. 
\begin{itemize}
\item If $\typenv,\Omega \imp \mi : \void(\tiera)$ then, by definition of $ \approx_{\typenv,\Omega}$, $\typenv,\Omega \imp \mj : \void(\tiera)$. Indeed, either $\mj= \mi$, hence can be typed by tier $\tiera$, or it is of tier $\tiera$ ($\mj$ cannot be a sequence of tier $\tiera$ instructions). Hence Lemma~\ref{lem:confinement} tells us that every variable assigned to in $\mj$ is of tier $\tiera$ and there exists $\conff'$ such that $(\conff,\mj) \to^* (\conff',\epsilon)$ and $(\conf',\epsilon)  \approx_{\typenv,\Omega}(\conff',\epsilon)$. This holds as $\conf'$ and $\conff'$ match on tier $\tierb$ values.
\item If $\typenv,\Omega \imp \mi : \void(\tierb)$ and cannot be given the type $\void(\tiera)$ (i.e. both $\xa$ and $\ea$ are of tier $\tierb$) then, by definition of $\approx_{\typenv,\Omega}$,  $\mi=\mj$. We let the reader check that the evaluation of $\mea$ leads to the same tier $\tierb$ value as it only depends on tier $\tierb$ variables (the only difficulty is for non recursive methods that might have tier $\tiera$ arguments but this is straightforward as it just consists in inlining the method body that is also safe with respect to $\typenv$ and $\Omega$). Consequently, $\conf' \approx_{\typenv,\Omega} \conff'$ as the change on tier $\tierb$
\end{itemize}

\item If $\mi=\mi_1 \ \mi_2$. There are still two cases to consider:
\begin{itemize}
\item Either $\mi$ is of tiered type $\void(\tiera)$ then so is $\mj$ and the result is straightforward.
\item Or $\mi$ is of tiered type $\void(\tierb)$ and $\mj=\mj_1 \ \mj_2$ with $\mi_i \approx_{\typenv,\Omega} \mj_i$. By induction hypothesis if $(\conf,\mi_1) \to (\conf',\mi_1')$ there exists $(\conff',\mj_1')$ such that $(\conff,\mj_1) \to^* (\conff',\mj_1')$ and $(\conf',\mi_1') \approx_{\typenv,\Omega} (\conff',\mj_1')$. Consequently, $(\conf, \mi) \to (\conf', \mi_1' \ \mi_2)$, $(\conff, \mj_1 \ \mj_2) \to^* (\conff', \mj_1' \ \mj_2)$ and $(\conf', \mi_1' \ \mi_2)\approx_{\typenv,\Omega} (\conff', \mj_1' \ \mj_2)$, by definition of $\approx_{\typenv,\Omega}$.
\end{itemize}
\end{itemize}
And so on, for all the other remaining cases.
\end{proof}

Given a configuration $\conf$ and a meta-instruction $\mi$ of a safe program with respect to contextual typing environment $\typenv$ and operator typing environment $\Omega$, the \emph{distinct tier $\tierb$ configuration sequence } $\xi_{\typenv, \Omega}(\conf,\mi)$ is defined by:
\begin{itemize} 
\item If $(\conf,\mi) \to (\conf',\mi')$ then:
$$\xi_{\typenv,\Omega}(\conf,\mi)=
\begin{cases} 
\xi_{\typenv,\Omega}(\conf',\mi')  & \text{if }\conf \approx_{\typenv,\Omega} \conf'\\
\conf.\xi_{\typenv,\Omega}(\conf',\mi') &\text{otherwise}
\end{cases}
$$
\item If $(\conf,\mi) =(\conf,\epsilon)$ then $ \xi_{\typenv,\Omega}(\conf,\mi)= \conf.$
\end{itemize}
Informally, $\xi_{\typenv,\Omega}(\conf,\mi)$ is a record of the distinct tier $\tierb$ memory configurations encountered during the evaluation of $(\conf,\mi)$. Notice that this sequence may be infinite in the case of a non-terminating program.  The relation $\approx_{\typenv,\Omega}$ is extended to sequences by $\epsilon \approx_{\typenv,\Omega} \epsilon$ ($\epsilon$ being the empty sequence)  and $\conf.\xi \approx_{\typenv,\Omega} \conff.\xi'$ iff both $\conf \approx_{\typenv,\Omega} \conff$ and $\xi'\approx_{\typenv,\Omega}  \xi'$ hold.

Now we can show another non-interference property \`a la Volpano et al.~\cite{VIS96} stating that given a safe program, traces (distinct tier $\tierb$ configuration sequence) do not depend on tier $\tiera$ variables.

\begin{lemma}[Trace non-interference]\label{lem:nonin}
Given a meta-instruction $\mi$ of a safe program with respect to environments $\typenv$ and $\Omega$, let $\conf$ and $\conff$ be two memory configurations, if $\conf \approx_{\typenv,\Omega} \conff$ then $ \xi_{\typenv,\Omega}(\conf,\mi) \approx_{\typenv,\Omega}  \xi_{\typenv,\Omega}(\conff,\mi) $. 
\end{lemma}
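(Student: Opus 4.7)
The plan is to lift Theorem~\ref{thm:soundness} from a single reduction step to the whole trace, using the fact that $\approx_{\typenv,\Omega}$ is an equivalence relation. The sequence $\xi_{\typenv,\Omega}(\conf,\mi)$ can be read as the list of $\approx_{\typenv,\Omega}$-equivalence classes visited by the execution of $(\conf,\mi)$, so the goal reduces to showing that the two executions visit the same sequence of classes in the same order.

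First I would establish an auxiliary bisimulation claim by iterating Theorem~\ref{thm:soundness}: if $(\conf,\mi) \approx_{\typenv,\Omega} (\conff,\mj)$ and $(\conf,\mi) \to^n (\conf',\mi')$, then there exist $m$ and $(\conff',\mj')$ with $(\conff,\mj) \to^m (\conff',\mj')$ and $(\conf',\mi') \approx_{\typenv,\Omega} (\conff',\mj')$. The argument is a straightforward induction on $n$; transitivity and symmetry of $\approx_{\typenv,\Omega}$ are used to chain together the equivalences produced by Theorem~\ref{thm:soundness} at each step.

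Next I would proceed by induction on the length of $\xi_{\typenv,\Omega}(\conf,\mi)$ (and coinductively when the sequence is infinite). By definition of $\xi$, its first element is $\conf$. If $\xi_{\typenv,\Omega}(\conf,\mi)$ has length one, the execution either terminates or remains forever inside $[\conf]_{\approx_{\typenv,\Omega}}$; applying the auxiliary claim to every intermediate configuration shows that $\xi_{\typenv,\Omega}(\conff,\mi)$ also stays in $[\conff]_{\approx_{\typenv,\Omega}} = [\conf]_{\approx_{\typenv,\Omega}}$ and has length one. For the inductive step, let $\conf_1$ be the first configuration reached by $(\conf,\mi)$ whose $\approx$-class differs from $[\conf]_{\approx_{\typenv,\Omega}}$. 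The auxiliary claim yields some $(\conff_1,\mj_1)$ with $(\conff,\mj) \to^* (\conff_1,\mj_1)$ and $(\conf_1,\mi_1) \approx_{\typenv,\Omega} (\conff_1,\mj_1)$, so $\xi_{\typenv,\Omega}(\conff,\mi)$ begins with $\conff . \conff_1 . \cdots$ and the induction hypothesis applied to the suffix starting from $(\conf_1,\mi_1) \approx_{\typenv,\Omega} (\conff_1,\mj_1)$ closes the argument.

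The main obstacle is the timing mismatch built into Theorem~\ref{thm:soundness}: a single step on the left may be matched by several steps on the right, so \emph{a priori} the right-hand trace could register extra $\approx$-jumps before reaching $\conff_1$. To rule this out I would invoke the auxiliary claim in the reverse direction, using the symmetry of $\approx_{\typenv,\Omega}$: any intermediate right-side configuration $\conff'$ encountered strictly before $\conff_1$ must itself be matched on the left by some configuration reached before $\conf_1$, which by choice of $\conf_1$ lies in $[\conf]_{\approx_{\typenv,\Omega}}$, forcing $\conff' \approx_{\typenv,\Omega} \conff$ and therefore not contributing to $\xi_{\typenv,\Omega}(\conff,\mi)$. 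Once this is established, the two traces match elementwise up to $\approx_{\typenv,\Omega}$.
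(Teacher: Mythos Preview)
Your overall strategy—lifting Theorem~\ref{thm:soundness} to a bisimulation and then reading off that the two executions visit the same sequence of $\approx_{\typenv,\Omega}$-classes—is sound in spirit and is genuinely different from the paper's proof, which proceeds by a direct induction on the reduction relation with a case analysis on the shape of the meta-instruction (assignments, while, conditionals, \ldots) and never invokes Theorem~\ref{thm:soundness} at all. Your route is more modular; the paper's is more elementary but repeats work already done for Theorem~\ref{thm:soundness}.

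However, the way you close the timing-mismatch gap does not go through as written. You claim that every intermediate right-side configuration strictly before $\conff_1$ is matched, via the symmetric auxiliary claim, by a left-side configuration \emph{reached before} $\conf_1$. Symmetry only tells you that each such $\conff'$ is $\approx_{\typenv,\Omega}$-equivalent to \emph{some} $\conf^{(l)}$ on the (deterministic) left trace; it does not by itself force $l$ to be smaller than the index of $\conf_1$. The forward and backward weak simulations you build are two different, not-a-priori-consistent relations, and for general weak bisimulations one side can ``overshoot'' the other: the backward match of an intermediate $\conff'$ could land past $\conf_1$, in which case you learn nothing about its class. So the last paragraph does not discharge the obligation.

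The clean fix stays within your framework but replaces the symmetry argument by an appeal to Lemma~\ref{lem:confinement}. In the proof of Theorem~\ref{thm:soundness}, the only case where the right-hand side performs more than one step is when the current instruction is of tier~$\tiera$ on both sides; by Confinement (and Lemma~\ref{lem:nowhile0}) every step executed in such a segment modifies only tier-$\tiera$ variables, hence every intermediate configuration remains in the same $\approx_{\typenv,\Omega}$-class. Conversely, a step that actually changes the class must execute a tier-$\tierb$ instruction, and on tier-$\tierb$ instructions the two sides agree syntactically and each performs exactly one step. With this observation, the multi-step blocks of your auxiliary claim are class-preserving, so the first class change on the right coincides (up to $\approx_{\typenv,\Omega}$) with $\conf_1$, and your induction on the length of $\xi$ then goes through.
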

\begin{proof}
By induction on the reduction $\to$ and a case analysis on meta-instructions $\mi$:
\begin{itemize}
\item If $\mi = \xa \iasg \mea$ then there are two cases to consider:
\begin{itemize}
\item either $\typenv,\Omega \imp \mi : \void(\tiera)$. In this case, by Confinement Lemma~\ref{lem:confinement}, $\xa$ is of tier $\tiera$ and thus if $(\conf,\mi) \to (\conf',\epsilon)$ and $(\conff,\mi) \to (\conff',\epsilon)$, we have $\conf' \approx_{\typenv,\Omega} \conf \approx_{\typenv,\Omega} \conff \approx_{\typenv,\Omega} \conff'$.
\item or  $\typenv,\Omega \imp \mi : \void(\tierb)$ and not $\typenv,\Omega \imp \mi : \void(\tiera)$. In this case, $\xa$ is a tier $\tierb$ variable. However as $\conf   \approx_{\typenv,\Omega} \conff$, $\mea$ evaluates to the same value or reference under both memory configuration. For example, in the case of a variable assignment (Rule (1) of Figure~\ref{fig:sem}), we have $(\conf, [\tau]\ \xa \iasg \xb; ) \to  ( \conf[\xa \mapsto \conf(\xb)], \epsilon )$ and $(\conff, [\tau]\ \xa \iasg \xb; ) \to  ( \conff[\xa \mapsto \conff(\xb)], \epsilon )$. However, as $\conf   \approx_{\typenv,\Omega} \conff$ and $\xb$ is of tier $\tierb$ by Rule \textit{(Ass)} of Figure~\ref{fig:TypeI}, $\conf(\xb) = \conff(\xb)$. Consequently  $ \conf[\xa \mapsto \conf(\xb)]  \approx_{\typenv,\Omega}  \conff[\xa \mapsto \conff(\xb)]$. All the other cases of assignments (operator, methods) can be treated in the same manner, the constructor case being excluded as it enforces the output to be of tier $\tiera$.
\end{itemize}
\item If $\mi = \iwh(\xa)\ido \{ \mi'  \}$ then, by Rule \textit{(Wh)} of Figure~\ref{fig:TypeI}, $\xa$ is enforced to be of tier $\tierb$. Consequently, $\conf(\xa) =\conff(\xa)$ and, consequently, $(\conf, \mi) \to (\conf,\mi')$ and $(\conff, \mi) \to (\conff,\mi')$, for some $\mi'$, independently of whether the guard evaluates to $\true$ or $\false$. As a consequence, $\xi_{\typenv,\Omega}(\conf,\mi')= \xi_{\typenv,\Omega}(\conf,\mi) \approx_{\typenv,\Omega} \xi_{\typenv,\Omega}(\conff,\mi) = \xi_{\typenv,\Omega}(\conff,\mi')$. 
\end{itemize}
All the other cases for meta-instructions can be treated in the same manner and so the result.
\end{proof}

This Lemma implies that tier $\tierb$ variables do not depend on tier $\tiera$ variables.

Using Lemma~\ref{lem:nonin}, if a safe program evaluation encounters twice the same meta-instruction under two configurations equal on tier $\tierb$ variables then the considered meta-instruction does not terminate on both configurations.
\begin{corollary}\label{lem:term}
Given a memory configuration $\conf$ and a meta-instruction $\mi$ of a safe program with respect to environments $\typenv$ and $\Omega$, if $(\conf, \mi) \to^+(\conf',\mi)$ and $\conf \approx_{\typenv,\Omega} \conf' $, then the meta-instruction $\mi$ does not terminate on memory configuration $\conf$.
\end{corollary}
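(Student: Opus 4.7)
The plan is to argue by contradiction. Assume that $\mi$ terminates on $\conf$. The hypothesis provides a reduction $(\conf,\mi) \to^n (\conf',\mi)$ with $n \geq 1$ and $\conf \approx_{\typenv,\Omega} \conf'$. Since the commands on both sides are syntactically equal, the first clause of the extension of $\approx_{\typenv,\Omega}$ to commands, combined with $\conf \approx_{\typenv,\Omega} \conf'$, yields $(\conf,\mi) \approx_{\typenv,\Omega} (\conf',\mi)$.

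The core step is to show that $(\conf',\mi)$ admits a reduction $(\conf',\mi) \to^+ (\conf'',\mi)$ with $\conf' \approx_{\typenv,\Omega} \conf''$, which enables iteration. Unfold the hypothesis as $(\conf,\mi) = (\conf_0,\mi_0) \to (\conf_1,\mi_1) \to \cdots \to (\conf_n,\mi_n) = (\conf',\mi)$ and apply Theorem~\ref{thm:soundness} inductively: each single step $(\conf_{i-1},\mi_{i-1}) \to (\conf_i,\mi_i)$ is matched by a reduction $(X_{i-1},Y_{i-1}) \to^{\ast} (X_i,Y_i)$ with $(\conf_i,\mi_i) \approx_{\typenv,\Omega} (X_i,Y_i)$, starting from $(X_0,Y_0) = (\conf',\mi)$. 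Concatenating gives $(\conf',\mi) \to^{\ast} (X_n,Y_n)$ with $(\conf',\mi) \approx_{\typenv,\Omega} (X_n,Y_n)$, hence $\conf' \approx_{\typenv,\Omega} X_n$, and by the definition of $\approx_{\typenv,\Omega}$ on commands applied to $\mi_n = \mi$ I may take $Y_n = \mi$.

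It remains to show that the total length of the matched reduction is strictly positive. Were it zero, every $(\conf_i,\mi_i) \approx_{\typenv,\Omega} (\conf_0,\mi_0)$, so each left step would be silent in the tier $\tierb$ sense, altering only tier $\tiera$ state or tier $\tiera$ sub-commands. By Confinement (Lemma~\ref{lem:confinement}) and Lemma~\ref{lem:nowhile0}, a reduction made exclusively of tier $\tiera$ steps contains no while unfold and triggers no recursive call, so it makes strict structural progress on the command and cannot return to $\mi$ after $n \geq 1$ steps. Hence at least one step on the left breaks tier $\tierb$-equivalence, forcing the matched right-hand reduction to contain at least one step, so $(\conf',\mi) \to^+ (X_n,\mi)$ with $\conf' \approx_{\typenv,\Omega} X_n$.

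Iterating this construction produces an infinite reduction $(\conf,\mi) \to^+ (\conf',\mi) \to^+ (X_n,\mi) \to^+ \cdots$, contradicting the termination assumption. The main obstacle is the positivity argument of the third paragraph: ruling out a silent cycle requires delicately combining Theorem~\ref{thm:soundness} with Confinement (Lemma~\ref{lem:confinement}) and the no-loop property of tier $\tiera$ reductions (Lemma~\ref{lem:nowhile0}) in order to show that any return to the same command $\mi$ must be witnessed by at least one tier $\tierb$-visible step.
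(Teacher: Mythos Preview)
Your overall strategy—simulate the cycle $(\conf,\mi)\to^+(\conf',\mi)$ from $(\conf',\mi)$ via Theorem~\ref{thm:soundness} and iterate—is a reasonable alternative to the paper's argument, which instead works directly with the trace sequence $\xi_{\typenv,\Omega}$ and Lemma~\ref{lem:nonin}. The paper splits into two cases: either some intermediate configuration breaks $\approx_{\typenv,\Omega}$, in which case $\xi_{\typenv,\Omega}(\conf,\mi)$ is a proper extension of $\xi_{\typenv,\Omega}(\conf',\mi)$ yet $\approx$-equal to it by trace non-interference, forcing it to be infinite; or all intermediate configurations are $\approx$-equivalent, in which case the tier~$\tierb$ state is frozen and the cycle repeats verbatim from $(\conf',\mi)$.

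There are two genuine gaps in your argument. First, the assertion ``I may take $Y_n=\mi$'' is unjustified: Theorem~\ref{thm:soundness} only yields $(X_n,Y_n)\approx_{\typenv,\Omega}(\conf',\mi)$, hence $Y_n\approx_{\typenv,\Omega}\mi$, not $Y_n=\mi$. The pair $(X_n,Y_n)$ lies on the (deterministic) reduction path from $(\conf',\mi)$, and nothing forces the command component to return to $\mi$ exactly. Your iteration ``$(\conf,\mi)\to^+(\conf',\mi)\to^+(X_n,\mi)\to^+\cdots$'' therefore does not go through as written. A repair is possible: instead of demanding $Y_n=\mi$, observe that $(X_n,Y_n)\approx_{\typenv,\Omega}(\conf,\mi)$ and apply Theorem~\ref{thm:soundness} to the \emph{same} left-hand cycle $(\conf,\mi)\to^n(\conf',\mi)$ again, now starting from $(X_n,Y_n)$, and so on.

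Second, your positivity argument is loose. You infer from ``every $(\conf_i,\mi_i)\approx_{\typenv,\Omega}(\conf',\mi)$'' that ``each left step is a tier~$\tiera$ step'', and then invoke Lemmas~\ref{lem:confinement} and~\ref{lem:nowhile0}. But those lemmas apply when a \emph{command} is of tier~$\tiera$, not when a \emph{transition} happens to preserve $\approx_{\typenv,\Omega}$; an $\approx$-preserving step is not the same thing as the evaluation of a tier~$\tiera$ instruction. The correct observation (essentially the paper's Case~2) is more direct: any reduction that returns to the same command $\mi$ after $n\geq 1$ steps must contain a while-unfold or a method-call expansion, since all other rules strictly shrink the command; these are tier~$\tierb$ steps and are matched by at least one step on the right in the simulation of Theorem~\ref{thm:soundness}. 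Framing the positivity argument this way closes the gap without appealing to Lemmas~\ref{lem:confinement} and~\ref{lem:nowhile0}.
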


\begin{proof}
Assume that during the transition $(\conf, \mi) \to^+(\conf',\mi)$ there is a
$\conf''$ such that $\conf'' \approx_{\typenv,\Omega} \conf$ does not hold,
then the distinct tier $\tierb$ configuration sequence
$\xi_{\typenv,\Omega}(\conf, \mi)$ contains this $\conf''$. From the construction of the sequence, we
deduce that $\xi_{\typenv,\Omega}(\conf, \mi)$ is of the shape $\ldots
\conf'' \ldots \xi_{\typenv,\Omega}(\conf', \mi)$. However, by Lemma~\ref{lem:nonin}, $\xi_{\typenv,\Omega}(\conf, \mi)  \approx_{\typenv,\Omega} \xi_{\Delta_\tierb}(\conf', \mi)$, hence it is infinite and the
meta-instruction $\mi$ does not terminate on memory configuration $\conf$.

Otherwise, we are in a state $(\conf, \mi)$ from which the set of variables of tier $\tierb$ will never change and containing either a while loop or a recursive call. Consequently, there is some $\conf''$ such that $(\conf', \mi) \to^+(\conf'',\mi)$ and so on. This means that the meta-instruction $\mi$ does not terminate on $\conf$.
\end{proof}

\section{Polynomial time soundness}

It is possible to bound the number of distinct configurations of tier $\tierb$ variables that can be met during the execution of a program, that is the number of different equivalence classes for the $\approx$ relation on configurations.

\begin{lemma}\label{lem:stabledata}\label{lem:polyconf}
Given a safe program with respect to environments $\typenv$ and $\Omega$ of computational instruction ${\tt Comp}$ on input $\mathcal{I}$, the number of equivalence classes for $\approx_{\typenv, \Omega}$ on configurations is at most $|\mathcal{I}|^{n_{\tierb}}$ where $n_{\tierb}$ is the number of tier $\tierb$ variables in the computational instruction.
\end{lemma}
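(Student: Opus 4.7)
The plan is to show that every tier $\tierb$ variable can, throughout execution, only hold values drawn from a pool of size at most $|\mathcal{I}|$, so the total number of joint assignments to the $n_\tierb$ tier $\tierb$ variables is bounded by $|\mathcal{I}|^{n_\tierb}$. Since $\approx_{\typenv,\Omega}$ identifies configurations agreeing on the tier $\tierb$ part of the stack and on the pointer subgraph reachable from tier $\tierb$ variables, any upper bound on the possible tier $\tierb$ valuations immediately gives an upper bound on the number of equivalence classes.

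First I would show that no reference node outside the input heap $\heap_\mathcal{I}$ is ever pointed to by a tier $\tierb$ variable. By Rule \textit{(New)}, the output of a constructor call has tiered type $\C(\tiera)$, and by Rule \textit{(Ass)} an assignment to a variable of reference type forces the assigned expression to have the same tier as the variable; therefore the fresh node created by a \new\ instruction can only flow into a tier $\tiera$ variable. The same argument handles operator outputs: by Definition~\ref{typing:op} positive operators only appear with type $\langle \overline{\tau}(\tiera)\rangle \to \tau(\tiera)$, so their results cannot be stored in tier $\tierb$ variables, while neutral operators on tier $\tierb$ inputs return a value bounded by their inputs (by items (\ref{a}) and (\ref{b}) of the neutral operator definition). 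For method calls, safety forces any recursive method used for computation to have a signature typed with all tier $\tierb$ parameters and output, and the subject reduction Lemma~\ref{lem:subred} combined with the Confinement Lemma~\ref{lem:confinement} guarantees that a tier $\tierb$ returned reference is itself an ancestor of an input reference.

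From these observations one obtains the invariant: in any reachable configuration $(\conf,\mi)$, every tier $\tierb$ reference variable points to a node of the input pointer graph $\heap_\mathcal{I}$ (or to $\&\nul$, already counted in $\heap_\mathcal{I}$), and every tier $\tierb$ primitive variable holds a value bounded by the maximum primitive value in $\mathcal{I}$. In both cases the set of admissible values has cardinality at most $|\mathcal{I}|$ (by Definition of $|\cdot|$ for nodes and primitive values). Since the equivalence class of a configuration under $\approx_{\typenv,\Omega}$ is fully determined by the joint valuation of the tier $\tierb$ variables present on the stack and in the heap, the product $|\mathcal{I}|^{n_\tierb}$ upper bounds the number of classes.

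The main obstacle is the careful accounting of what counts as a tier $\tierb$ variable when stack frames are pushed and popped: several methods may share the same parameter name and fields receive polymorphic tiers depending on the contextual environment. I would handle this by using the fact that flattening fixes all variables statically (Lemma~\ref{lem:flat}) and by observing that since every program variable has a unique tiered type within a given contextual environment $\typenv$, we can enumerate once and for all the tier $\tierb$ variables occurring in $\underline{{\tt Comp}}$ and in every method body reachable from it, getting $n_\tierb$ as a static quantity. The stack depth is irrelevant to the counting, because a recursive method's parameters are themselves already tier $\tierb$ variables bounded by input values, and each stack frame contributes a bounded number of admissible tier $\tierb$ valuations drawn from the same pool of size $|\mathcal{I}|$.
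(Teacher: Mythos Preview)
Your approach is essentially the paper's: bound the range of each tier~$\tierb$ variable by $|\mathcal{I}|$ (separating reference and primitive cases via Rules \textit{(New)}, \textit{(Ass)}, and the neutral/positive operator typing) and take the product over the $n_\tierb$ variables.

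There is, however, one ingredient the paper makes explicit that you omit. The relation $\approx_{\typenv,\Omega}$ depends not only on the valuations of tier~$\tierb$ variables but also on the tier~$\tierb$ pointer subgraph itself (its edges), so bounding valuations is not enough unless that subgraph is \emph{invariant} throughout execution. The paper secures this via the side condition $\xa\in\mathcal{F}\Rightarrow\alpha=\tiera$ of Rule \textit{(Ass)} (every field assignment is tier~$\tiera$) together with Rule \textit{(Self)} (a field's tier matches the current object's tier), which jointly freeze the tier~$\tierb$ portion of the heap after \texttt{Init}. Your discussion of Rule \textit{(Ass)} covers ordinary variable assignment but not this field case; once you add it, your argument matches the paper's. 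Your extra paragraph on method calls and stack-frame accounting goes beyond what the paper says---the paper's proof simply ignores the stack, relying implicitly on the derecursivation performed later in Theorem~\ref{sound}.
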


\begin{proof}
First, let us note that the nodes and internal edges of tier $\tierb$ of the pointer graph do not change: new nodes created by constructors can only be of tier $\tiera$ from Rule \textit{(Cons)}. Field assignments can only be of tier $\tiera$ according to Rule $(Ass)$. Moreover, Rule \textit{(Self)} enforces all the field of an object to have tiers matching that of the current object. Consequently, reference type variables of tier~$\tierb$ may only point to nodes of
the initial pointer graph corresponding to input $\mathcal{I}$. The number of such
nodes is bounded by $\size{\mathcal{I}}$ as $\size{\mathcal{I}}$ bounds the number of nodes in the initial pointer graph. 

Second, we look at primitive type variables. By Definition~\ref{typing:op} of operator typing environments, only the output of neutral operators (or of methods of return type of tier $\tierb$) can be applied. Indeed,  they are the only operators to have a tier $\tierb$ output and in an assignment of the shape $\xa \iasg \op(\overline{\xb})$, if $\op$ is of type $\langle \overline{\tau}(\tiera) \rangle \to \tau(\tiera)$ then $\xa$ is enforced to be of tier $\tiera$ by Rule \textit{(Ass)} of Figure~\ref{fig:TypeI}. Neutral operators are operators whose output is either a value of a constant domain ($\bool$, $\charac$, $\ldots$) and, hence, has a constant number of distinct values, or whose output is a positive integer value smaller than one of its input (also tier $\tierb$ values by Definition~\ref{typing:op}). Consequently, they can have a number of distinct values in $O(\size{\mathcal{I}})$, as, by definition, $\size{\mathcal{I}}$ bounds the initial primitive values stored in the initial pointer mapping.

To conclude, let the number of tier $\tierb$
variables be $n_{\tierb}$, the number of distinct possible configurations is 
$\size{\mathcal{I}}^{n_{\tierb}}$.
\end{proof}

Now we can show the soundness: a safe and terminating program terminates in polynomial time.

\begin{theorem}[Polynomial time soundness]\label{sound}
If an AOO program of computational instruction ${\tt Comp}$ is safe with respect to environments $\typenv$ and $\Omega$ and  terminates on input $\mathcal{I}$:
$$(\mathcal{I},\underline{\tt Comp}) \to^k (\conf', \epsilon)$$ then: $$k=O(\size{\mathcal{I}}^{n_\tierb (\nest+ \lev) }).$$
\end{theorem}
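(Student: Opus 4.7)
The plan is to bound the reduction length by induction on the structure of the meta-instruction, tracking intricacy and level simultaneously, and using Corollary~\ref{lem:term} together with Lemma~\ref{lem:polyconf} to bound iteration and recursion depth separately.

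\textbf{Step 1 (bounding one while loop).} By Lemma~\ref{lem:subred}, every configuration $(\conf,\iwh(\xa)\{\mi'\}\ \mi)$ reached during execution is still typable at tier $\tierb$. If the loop iterates $t$ times in a terminating run, Corollary~\ref{lem:term} forces the tier $\tierb$ parts of the configurations at the start of successive iterations to be pairwise inequivalent under $\approx_{\typenv,\Omega}$ (otherwise $\iwh(\xa)\{\mi'\}$ would not terminate on the repeated configuration, contradicting the assumption). By Lemma~\ref{lem:polyconf}, there are at most $\size{\mathcal{I}}^{n_\tierb}$ such equivalence classes, so $t \leq \size{\mathcal{I}}^{n_\tierb}$.

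\textbf{Step 2 (induction on intricacy for loop-rich, recursion-free code).} By induction on $\nest(\mi)$, I would show that if $\mi$ contains no invocation of any recursive method then $(\conf,\mi)\to^k(\conf',\epsilon)$ implies $k = O(\size{\mathcal{I}}^{n_\tierb \cdot \nest(\mi)})$. The base case $\nest(\mi)=0$ is a loop-free block (straight-line code, conditionals, and calls to non-recursive methods whose bodies have intricacy $0$), whose reduction length is bounded by a constant depending only on the program. The inductive step applies Step~1, noting that the loop body has intricacy $\nest(\mi)-1$, so the total cost is at most $\size{\mathcal{I}}^{n_\tierb} \cdot O(\size{\mathcal{I}}^{n_\tierb\cdot(\nest(\mi)-1)})$.

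\textbf{Step 3 (bounding recursion by level).} For a method $\m$ with $\lev(\m)=\ell$, induct on $\ell$. Safety gives (i) exactly one recursive call to some $\m'\in[\m]$ per body evaluation, (ii) no while loop inside the body ($\nest = 0$ in the body, up to calls to lower-level methods), and (iii) every such recursive method is typed with all parameters at tier $\tierb$ and is only invoked from tier $\tierb$ instructions. Therefore, a maximal chain of nested recursive calls within $[\m]$ produces stack frames whose tier $\tierb$ contents must be pairwise inequivalent under $\approx_{\typenv,\Omega}$ (else Corollary~\ref{lem:term} would again yield non-termination of the body on that repeated configuration). By Lemma~\ref{lem:polyconf} the chain length is at most $\size{\mathcal{I}}^{n_\tierb}$. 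Between two successive recursive calls, the body executes only straight-line code plus calls to strictly lower-level methods $\m''$ with $\m\smcall^+\m''$, which by the induction hypothesis cost $O(\size{\mathcal{I}}^{n_\tierb\cdot(\ell-1)})$. Multiplying, each initial call to $\m$ costs $O(\size{\mathcal{I}}^{n_\tierb\cdot\ell})$ steps.

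\textbf{Step 4 (combining).} A call to a method of level $\ell$ occurring inside a loop nest of depth $k$ contributes $O(\size{\mathcal{I}}^{n_\tierb\cdot k}) \cdot O(\size{\mathcal{I}}^{n_\tierb\cdot\ell}) = O(\size{\mathcal{I}}^{n_\tierb(k+\ell)})$ steps, and the definition of $\nest$ on method calls guarantees that the intricacy of a call already accounts for the deepest reachable body (including through inheritance). Taking the global maxima $k\leq\nest$ and $\ell\leq\lev$ delivers the bound $O(\size{\mathcal{I}}^{n_\tierb(\nest+\lev)})$ on the full reduction $(\mathcal{I},\underline{\texttt{Comp}}) \to^{\ast} (\conf',\epsilon)$.

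The main obstacle is Step~3 together with its interface with Step~2: one must verify that $\nest$ and $\lev$ compose \emph{additively} rather than multiplicatively. The essential leverage comes from the safety restriction that forbids while loops inside recursive bodies, so the only way recursion amplifies cost is through its level, while the only way loops amplify cost is through their nesting depth; they stack independently in the exponent. A subtle technical point to verify is that Corollary~\ref{lem:term} applies cleanly across stack frame boundaries when comparing recursive configurations, which requires checking that the $\approx_{\typenv,\Omega}$ relation on configurations sees each frame's parameters and the surviving tier $\tierb$ portion of the heap, so that a repeat in the recursive parameter vector under equal initial context indeed forces divergence.
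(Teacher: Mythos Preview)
Your overall strategy is sound, but the paper takes a shorter route that sidesteps the very obstacle you flag at the end. Rather than running two separate inductions (Step~2 on intricacy, Step~3 on level) and then arguing they compose additively, the paper first \emph{derecursivates}: because safety guarantees that a recursive method has no while loops in its body and exactly one recursive call, each recursive method can be simulated by a single while loop, and a chain of methods of level $\lev$ becomes $\lev$ nested while loops. The transformed program is recursion-free with intricacy $\nest+\lev$, and a single induction on intricacy (your Step~1 plus Step~2) finishes.

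The gap in your Step~3 is real, not merely subtle. Corollary~\ref{lem:term} requires $(\conf,\mi)\to^+(\conf',\mi)$ with the \emph{same} meta-instruction $\mi$ and with $\conf\approx_{\typenv,\Omega}\conf'$. At successive recursion depths neither premise holds: the meta-instruction acquires an extra $\pop$-continuation tail at each unfolding (Rule~(10) of Figure~\ref{fig:sem}), and the pointer stacks differ by one frame, so by the definition of $\approx_{\typenv,\Omega}$ (which requires the stacks to match everywhere except on tier~$\tiera$ variables) the configurations are not equivalent either. You would need a bespoke strengthening of the non-interference machinery that compares only the top frame and the current method body, which is more work than you indicate. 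Derecursivation is exactly the trick that makes Corollary~\ref{lem:term} applicable as stated: once recursion is a while loop, revisiting the loop head restores both the meta-instruction and the stack depth, so a repeated tier~$\tierb$ state genuinely triggers the corollary.
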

\begin{proof}

For simplicity, each recursive call can be simulated through derecursivation  by a while loop instruction of intricacy $\lev$ (keeping the number of reduction steps preserved relatively to some multiplicative and additive constants). Indeed recursive calls do not contain while loops and have only one method call in their body. Consequently, the maximum intricacy of an equivalent program with no recursive call is $\lev + \nest$. 
Now we prove the result by induction on the intricacy $\nest$ of the transformed program:
\begin{itemize}
\item if $\nest=0$ then the program has no while loops. Consequently, $k =O(1)=O(\size{\mathcal{I}}^{n_\tierb\times 0})$
\item if $\nest=k+1$ then, by definition of intricacy, the program contains at least one while loop. Let $\iwh (\xa)\{\mi\}$ be the first outermost while loop of the program. We have that $\nest(\mi)=k$. By Induction hypothesis, $\mi$ can be evaluated in time $O(\size{\mathcal{I}}^{n_\tierb k}$, that is $O(\size{\mathcal{I}}^{n_\tierb (\nest-1)})$. By Lemma~\ref{lem:polyconf},  the tier $\tierb$ variable $\xa$ can take at most $O(\size{\mathcal{I}}^{n_\tierb})$ distinct values. Consequently, by termination assumption, the evaluation of $\iwh (\xa)\{\mi\}$ will take at most $O(\size{\mathcal{I}}^{n_\tierb} \times \size{\mathcal{I}}^{n_\tierb (\nest-1) })$ steps, that is $O(\size{\mathcal{I}}^{n_\tierb \nest })$ steps. Indeed, by Corollary~\ref{lem:term}, we know that a terminating program cannot reach twice the same configuration on tier $\tierb$ variables for a fixed meta-instruction. Now it remains to see that the instruction in the context have only while loops of intricacy smaller than $\nest$. So, by induction again,  they can be evaluated in time $O(\size{\mathcal{I}}^{n_\tierb \nest })$. Finally the total time is the sum so in time $O(\size{\mathcal{I}}^{n_\tierb \nest })$.\qedhere
\end{itemize}
 \end{proof}
 
The same kind of results can be obtained for generally safe programs:
  \begin{proposition}
If an AOO program of computational instruction ${\tt Comp}$ is generally safe with respect to environments $\typenv$ and $\Omega$ and  terminates on input $\mathcal{I}$:
$$(\mathcal{I},\underline{\tt Comp}) \to^k (\conf', \epsilon)$$ then: $$k=O(\size{\mathcal{I}}^{n_\tierb (\nest+ \lev) }).$$
\end{proposition}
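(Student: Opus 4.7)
My plan is to adapt the proof of Theorem~\ref{sound} to the generally safe setting. First, I would observe that the auxiliary results supporting soundness — the Confinement Lemma~\ref{lem:confinement}, the Trace non-interference Lemma~\ref{lem:nonin}, Corollary~\ref{lem:term}, and the polynomial bound on tier $\tierb$ configurations (Lemma~\ref{lem:polyconf}) — only rely on Items~\ref{c4} and~\ref{c3} of Definition~\ref{def:gsafety}, which coincide with Items~\ref{D2} and~\ref{D3} of Definition~\ref{def:safety}. Hence these lemmas transfer to generally safe programs verbatim, and in particular the number of distinct tier $\tierb$ configurations reachable during execution is still bounded by $O(\size{\mathcal{I}}^{n_\tierb})$.

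The only place where Item~\ref{D1} of safety is used in the proof of Theorem~\ref{sound} is the derecursivation step, where each recursive method is simulated by a single while loop of intricacy~$\lev$. I would split the argument according to the two alternatives of Item~\ref{c1}--\ref{c2} of general safety. For case~\ref{c1}, where at most one recursive call per body evaluation is performed, the derecursivation argument of Theorem~\ref{sound} applies unchanged, since operationally the execution behaves as in the safe case (a chain of calls of depth at most $\lev$). For case~\ref{c2}, recursion is structural on distinct fields of the current object, each used at most once, so the recursion tree of a single top-level invocation traverses each reached node of the tier $\tierb$ pointer graph only once. Since by Lemma~\ref{lem:stabledata} this pointer graph is invariant under execution and has at most $\size{\mathcal{I}}$ nodes, this branching recursion can be simulated by a traversal loop whose number of iterations is bounded by $\size{\mathcal{I}}$. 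In both cases the derecursivated program has intricacy at most $\nest + \lev$, and the induction on intricacy used in Theorem~\ref{sound} then yields the bound $O(\size{\mathcal{I}}^{n_\tierb(\nest+\lev)})$.

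The main obstacle I expect is formally justifying the derecursivation of branching recursion in case~\ref{c2} so that the resulting loop is consistent with the intricacy accounting. The delicate point is to show that even when several recursive calls are made per body, the combined effect at level $\lev$ produces no more than $\size{\mathcal{I}}^{\lev}$ evaluation units, rather than an exponentially larger recursion tree. This follows from the fact that the branching descends through distinct fields into sub-objects whose references are pinned to tier $\tierb$ nodes of the initial graph, so along any root-to-leaf path of the recursion tree the visited nodes form a simple descent, which under the termination hypothesis is acyclic and of bounded total size. Once this observation is combined with the configuration-counting argument of Lemma~\ref{lem:polyconf} applied level by level, the same asymptotic bound as for safe programs is recovered.
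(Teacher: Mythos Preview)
The paper does not give an explicit proof of this proposition; it simply states that ``the same kind of results can be obtained for generally safe programs'' immediately after Theorem~\ref{sound}, leaving the adaptation implicit. Your plan to reuse the auxiliary lemmas (Confinement, trace non-interference, Corollary~\ref{lem:term}, Lemma~\ref{lem:polyconf}) is exactly right, and your observation that they only depend on Items~\ref{c4} and~\ref{c3} is correct. For alternative~\ref{c1} your argument is sound: operationally a single recursive call per body \emph{evaluation} behaves like a single syntactic call, so the derecursivation of Theorem~\ref{sound} carries over unchanged.

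The gap is in your treatment of alternative~\ref{c2}. You assert that ``the recursion tree of a single top-level invocation traverses each reached node of the tier $\tierb$ pointer graph only once,'' but this fails if the tier $\tierb$ part of the pointer graph has sharing. Consider a chain of $n$ objects $o_1,\ldots,o_n$ where both the \texttt{left} and \texttt{right} fields of $o_i$ point to $o_{i+1}$: the input has size $O(n)$, the graph is acyclic, termination holds, Item~\ref{c2} is satisfied (each call recurses on the two distinct fields, each used once), yet the recursion tree for a \texttt{clone}-like method has $2^n$ calls. Your fallback---that root-to-leaf paths are simple and hence bounded---only controls the \emph{depth} of the recursion tree, not its size, and your closing appeal to ``the configuration-counting argument of Lemma~\ref{lem:polyconf} applied level by level'' does not bridge this: Corollary~\ref{lem:term} forbids revisiting a tier $\tierb$ configuration at the \emph{same} meta-instruction, but in a branching recursion the two visits to the shared node occur with different continuations, hence different meta-instructions. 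To make case~\ref{c2} go through you need an additional argument (or hypothesis) excluding such sharing along the recursed-upon fields, or a different accounting that bounds the total number of calls rather than the depth; as written, the step does not establish the claimed bound.
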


As a side effect, we obtain polynomial upper bounds on both the stack size and the heap size of safe terminating programs:

\begin{theorem}[Heap and stack size upper bounds]\label{space}
If an AOO program of computational instruction ${\tt Comp}$ is safe with respect to environments $\typenv$ and $\Omega$ and  terminates on input $\mathcal{I}$ then for each memory configuration $\conf=\langle \heap, \stack_\heap \rangle$ and meta-instruction $\mi$ such that $(\mathcal{I},\underline{\tt Comp}) \to^* (\conf, \mi)$ we have: 
\begin{enumerate}
\item $\size{\heap} = O(\max(\size{\mathcal{I}},\size{\mathcal{I}}^{n_\tierb (\nest+\lev) }))$\label{h}
\item $\size{\stack_\heap}= O( \size{\mathcal{I}}^{n_\tierb (\nest+2\lev) })$\label{s}
\end{enumerate}
\end{theorem}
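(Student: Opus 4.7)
The plan is to derive both bounds from Theorem~\ref{sound} together with a fine-grained analysis of which reduction rules can make the heap or the stack grow. Throughout, I will write $T = \size{\mathcal{I}}^{n_\tierb(\nest+\lev)}$ for the reduction-length bound provided by Theorem~\ref{sound}.

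For Item~\ref{h}, I would proceed by inspecting the semantic rules of Figure~\ref{fig:sem} and observing that the only rule which creates a new node in the pointer graph is Rule (8), the constructor call. Each application of Rule (8) contributes exactly one fresh node (and a constant number of arrows, bounded by the maximum arity of a class, which is a program constant). Hence if the reduction $(\mathcal{I}, \underline{\texttt{Comp}}) \to^* (\conf, \mi)$ takes $k$ steps, then $\size{\heap} \leq \size{\mathcal{I}} + k$. Since by Theorem~\ref{sound} we have $k = O(T)$ (and also $k \leq T$ at any intermediate configuration, as intermediate reductions are a prefix of a terminating one), we conclude $\size{\heap} = O(\max(\size{\mathcal{I}}, T))$.

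For Item~\ref{s}, I would split the estimate as $\size{\stack_\heap} \leq (\text{max number of frames}) \times (\text{max size of a single frame})$ and bound each factor separately. For the size of a single frame $s_\heap = \langle s, p\rangle$: the domain of $p$ is a constant (the number of parameters, local variables, and fields of the corresponding method, which does not depend on $\mathcal{I}$ thanks to flattening). Reference and boolean values contribute $1$ each. Tier $\tierb$ primitive values remain bounded by $\size{\mathcal{I}}$ because they can only be produced by neutral operators (Definition~\ref{typing:op}) whose output is bounded by the maximum of their tier $\tierb$ inputs, and these trace back by Lemma~\ref{lem:polyconf} to the initial heap. Tier $\tiera$ primitive values can grow, but only through positive operators which add at most a constant $\texttt{c}$ per application; since the total number of reduction steps is $O(T)$, every tier $\tiera$ integer stays bounded by $O(T)$. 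Hence each frame has size $O(T)$.

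For the number of frames, I would use the fact that frames are pushed only by method calls (Rule (10)) and popped by returns (Rule (13)). Non-recursive call nesting is bounded by a program constant (the length of the $\mcall^+$-chain from $\main$ restricted to non-recursive signatures). Within each equivalence class $[s]$ of recursive methods, safety (Item~\ref{D3}) forces tier $\tierb$ arguments, and Corollary~\ref{lem:term} together with Lemma~\ref{lem:polyconf} imply that a terminating nested sequence of recursive calls from $[s]$ cannot contain two frames with the same tier $\tierb$ configuration; hence the depth contributed by each recursion class is at most $O(\size{\mathcal{I}}^{n_\tierb})$. Since the $\smcall^+$-hierarchy of recursion classes has depth at most $\lev$, the total stack depth is $O(\size{\mathcal{I}}^{n_\tierb \lev})$. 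Multiplying the two factors gives $\size{\stack_\heap} = O(T \cdot \size{\mathcal{I}}^{n_\tierb\lev}) = O(\size{\mathcal{I}}^{n_\tierb(\nest+2\lev)})$, as required.

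The main obstacle I expect is the frame-count estimate: proving that recursive call depth per class is genuinely bounded by $O(\size{\mathcal{I}}^{n_\tierb})$ requires invoking the non-interference machinery (Lemma~\ref{lem:nonin} and Corollary~\ref{lem:term}) to argue that on a strictly-increasing call chain the tier $\tierb$ parameter snapshots must all differ, since otherwise termination would fail; the per-frame size bound via positive operators is comparatively routine once one observes that tier $\tiera$ integers can only accumulate additive constants across the entire execution.
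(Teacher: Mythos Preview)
Your proposal is correct and follows essentially the same approach as the paper: both parts derive the bound from Theorem~\ref{sound} by counting how many reduction steps can enlarge the heap (only constructor calls) and by factoring the stack size as (number of frames)$\times$(maximum frame size), bounding the former via the level hierarchy and the $O(\size{\mathcal{I}}^{n_\tierb})$ configuration count per recursion class, and the latter via the additive-constant growth of positive operators over $O(T)$ steps. One cosmetic remark: the $\pop$ rule is Rule~(12), not Rule~(13), in Figure~\ref{fig:sem}.
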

\begin{proof}
(\ref{h}) By Theorem~\ref{sound}, the number of reductions is in $O(\size{\mathcal{I}}^{n_\tierb (\nest+\lev) })$. The only instructions making the heap (pointer graph) increase are constructor calls. The number of such calls is thus bounded by  $O(\size{\mathcal{I}}^{n_\tierb (\nest+\lev) })$ and consequently the heap size is bounded by the size of the original heap (that is in $O(\size{\mathcal{I}})$) plus the size of the added nodes. Consequently, $\size{\heap}=O(\max(\size{\mathcal{I}},\size{\mathcal{I}}^{n_\tierb (\nest+\lev) }))$ as if $f=O(f')$ and $g=O(g')$ then $f+g=O(\max(f',g'))$. 

(\ref{s}) The number of stack frames added is bounded by $O(\size{\mathcal{I}}^{n_\tierb \lev })$ as, for some recursive method call of signature $s$, each level of recursion may add at most $O(\size{\mathcal{I}}^{n_\tierb})$ stack frames corresponding to method signatures in the set $[s]$. Remember that a stack frame is just a signature of constant size $1$ and a pointer mapping. The size of a pointer mapping is constant on boolean, character and reference type variables (it is equal to $1$) and corresponds to the discrete value stored for numerical primitive type variables. As such values may only augment by a constant for each call to a positive operator and, as such calls may happen $O(\size{\mathcal{I}}^{n_\tierb (\nest+\lev) })$ times, we obtain that each stack frame has size bounded by $O(\size{\mathcal{I}}^{n_\tierb (\nest+\lev) })$. This is also due to the fact that no fresh variable is generated: consequently, the domain of pointer mapping is constantly bounded by the flattened program size, that is linear in the original program size by Corollary~\ref{lem:flatsize}. Here we clearly understand the advantage to deal with flattened programs! All together,  we obtain that $\size{\stack_\heap}= O(\size{\mathcal{I}}^{n_\tierb \lev } \times \size{\mathcal{I}}^{n_\tierb (\nest+\lev) })$, that is $O( \size{\mathcal{I}}^{n_\tierb (\nest+2\lev) })$.
\end{proof}

Again the same results hold for generally safe programs:
\begin{corollary}\label{hs}
If an AOO program of computational instruction ${\tt Comp}$ is generally safe with respect to environments $\typenv$ and $\Omega$ and  terminates on input $\mathcal{I}$ then for each memory configuration $\conf=\langle \heap, \stack_\heap \rangle$ and meta-instruction $\mi$ such that $(\mathcal{I},\underline{\tt Comp}) \to^* (\conf, \mi)$ we have: 
\begin{enumerate}
\item $\size{\heap} = O(\max(\size{\mathcal{I}},\size{\mathcal{I}}^{n_\tierb (\nest+\lev) }))$
\item $\size{\stack_\heap}= O( \size{\mathcal{I}}^{n_\tierb (\nest+2\lev) })$
\end{enumerate}
\end{corollary}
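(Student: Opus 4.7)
The plan is to reduce the claim to the proof strategy of Theorem~\ref{space} by substituting the safe-program step bound (Theorem~\ref{sound}) with its generally safe counterpart, the Proposition stated immediately before Corollary~\ref{hs}. Since that Proposition gives the identical bound $O(\size{\mathcal{I}}^{n_\tierb(\nest+\lev)})$ on the length of any terminating reduction from $(\mathcal{I},\underline{\texttt{Comp}})$, the rest of the argument is structural and transfers \emph{mutatis mutandis}.

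For the heap bound (item~1), first I would observe that the only reduction rule of Figure~\ref{fig:sem} that enlarges the pointer graph is Rule~(8), the constructor call $\new\ \C(\overline{\xb})$, which introduces exactly one fresh node. Hence the number of nodes added throughout execution is at most the total number of reduction steps, namely $O(\size{\mathcal{I}}^{n_\tierb(\nest+\lev)})$. Adding the $O(\size{\mathcal{I}})$ nodes already present in the initial heap of $\mathcal{I}$ and using $f+g=O(\max(f',g'))$ when $f=O(f')$ and $g=O(g')$ yields $\size{\heap}=O(\max(\size{\mathcal{I}},\size{\mathcal{I}}^{n_\tierb(\nest+\lev)}))$, exactly as in the proof of Theorem~\ref{space}.

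For the stack bound (item~2), I would split $\size{\stack_\heap}$ into (number of live frames) $\times$ (max size of one frame). The number of live frames is controlled by recursion depth: each level of $\lev$ nested recursion contributes at most $O(\size{\mathcal{I}}^{n_\tierb})$ frames, one per equivalence class $[s]$ of recursive signatures, since by Item~\ref{c3} of Definition~\ref{def:gsafety} their parameters (and $\this$) are tier $\tierb$, and Lemma~\ref{lem:polyconf} bounds the number of distinct tier $\tierb$ configurations by $\size{\mathcal{I}}^{n_\tierb}$. Multiplying across $\lev$ nested levels gives $O(\size{\mathcal{I}}^{n_\tierb\lev})$ simultaneously live frames. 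For a single frame, the domain of its pointer mapping is bounded by the program size (this is where flattening is crucial, as pointed out in Theorem~\ref{space} via Corollary~\ref{lem:flatsize}); boolean, character and reference entries are of size $1$, whereas numerical entries grow only via positive operators and therefore by at most the total number of reduction steps, so each frame has size $O(\size{\mathcal{I}}^{n_\tierb(\nest+\lev)})$. The product gives $O(\size{\mathcal{I}}^{n_\tierb(\nest+2\lev)})$.

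The only point that requires verification beyond a mechanical transfer is that Item~\ref{c2} of the general safety definition (multiple recursive calls, each on a distinct field used once) does not inflate the recursion depth: since such sibling calls are performed sequentially rather than nested on the stack, only one is in flight at any moment, so the height of the call stack is still bounded by $\lev$, and the counting above is unaffected. I expect this to be the main (and essentially the only) subtle point of the proof, the rest being a verbatim replay of Theorem~\ref{space}'s argument with the new step bound plugged in.
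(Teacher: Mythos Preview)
Your proposal is correct and matches the paper's approach: the paper gives no explicit proof of this corollary, merely prefacing it with ``Again the same results hold for generally safe programs,'' which is precisely the replay of Theorem~\ref{space} with the generally-safe step bound substituted that you describe. Your additional observation about Item~\ref{c2} (sibling recursive calls being sequential rather than nested on the stack) is a welcome clarification that the paper leaves entirely implicit.
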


\begin{example}\label{ex:final}
The AOO program of Example~\ref{ex7}. This program is clearly terminating and safe (there is no recursive method call) with respect to the provided environments. Moreover intricacy $\nest$ is equal to $1$ since there is no nested while loops in the method $\tt getTail$. Moreover, its level is equal to $0$ as there is no recursive call.
Moreover there is one tier $\tierb$ variable $n_\tierb=1$. Consequently, applying Theorem~\ref{sound}, we obtain that it terminates in $O(n^1)$, on some input of size $n$. Moreover, by Corollary~\ref{hs}, the heap size and stack size are in $O(\max(n,n^1))=O(n)$ and $O(n^{1(1+2\times 0) })=O(n)$.
\end{example}

\begin{example}\label{cyclic}
Consider the below example representing cyclic data:
{\tt
\begin{lstlisting}
Ring { 
  boolean data;
  Ring next;
  Ring prev;

  Ring(boolean d, Ring old) {
    data = d;
    if (old == null) {
        next = this;
        prev = this;
    } else {
        next = old.next;
        next.setPrev(this);
        prev = old.prev;
        prev.setNext(this);
    }
  }

  boolean getData() {return data;}
  Ring getNext() {return next;}
  Ring getPrev() {return prev;}
  void setPrev(Ring p) {prev = p;}
  void setNext(Ring n) {next = n;}
}

Exe {
  void main() {
    // Init
    Ring a = new Ring(true, null);
    Ring input = new Ring(true, a);
    //Comp: Search for a false in the input.
    copy$^\tierb$ = input$^\tierb$;
    while (copy$^\tierb$.getData() != false) {
        copy$^\tierb$ = copy$^\tierb$.getNext();
    }
  }
}
\end{lstlisting}
}
The program is safe and can be typed with respect to the following judgments:
\begin{itemize}
\item $\verb!getData()! : \verb!Ring!(\tierb) \to \verb!boolean!(\tierb)$
\item $\verb!getNext()! : \verb!Ring!(\tierb) \to \verb!Ring!(\tierb)$
\end{itemize}
with respect to environments $\typenv$ and $\Omega$ such that $\typenv(\verb!copy!)=\typenv(\verb!input!)=\verb!Ring!(\tierb)$. Notice that methods $\verb!setNext(Ring n)!$ and $\verb!setPrev(Ring p)!$ are not required to be typed with respect to tiers as they only appear in the initialization instruction and, consequently, their complexity is not under control (they are just supposed to build the input).
It is obvious that if the main program halts, it will do so in time linear in the size of the input. But it can loop infinitely if the ring does not contain any \verb!false!. We obtain a bound through the use of Theorem~\ref{sound}, that is $O(n^{n_\tierb \times 1})=O(n^2)$. Notice that this bound can be ameliorated to $O(n)$ at the price of a non-uniform formula by noticing that only $1$ tier $\tierb$ variable occurs in the while loop (see the remarks about declassification). Notice that this program could be adapted and typed in while loops of the shape:
\begin{verbatim}
while(copy.getData() != false && n>0){
  ...
  n--;
}
\end{verbatim}
and this would be still typable.

Consequently, we can see that the presented methodology does not only apply to trivial data structures but can take benefit of any complex object structure.
\end{example}

\section{Completeness}
Another direct result is that this characterization is complete with respect to the class of functions computable in polynomial time as a direct consequence of Marion's result~\cite{M11} since both language and type system can be viewed as an extension of the considered imperative language. This means that the type system has a good expressivity. We start to show that any polynomial can be computed by a safe and terminating program. Consider the following method of some class $\C$ computing addition:
{\tt 
\begin{lstlisting}
add(int x, int y){                       
  while(x$^\tierb$>0){
    x$^\tierb$ := x$^\tierb$-1; $:\tierb$
    y$^\tiera$ := y$^\tiera$+1; $:\tiera$
  }
  return y$^\tiera$;
}                             
\end{lstlisting}
}
It can be typed by $\C(\beta) \times \ent(\tierb) \times \ent(\tiera) \to \ent(\tiera)$, for any tier $\beta$, under the typing environment $\Delta$ such that $\Delta(\texttt{add}^\C)(\xa)=\ent(\tierb)$ and $\Delta(\texttt{add}^\C)(\xb)=\ent(\tiera)$. Notice that $\xa$ is enforced to be of tier $\tierb$, by Rules \textit{(Wh)} and \textit{(Op)} as it appears in the guard of a while loop. The operators $>0$ and $-1$ are neutral. Consequently, they can be given the tiered types $\ent(\tierb) \to \bool(\tierb)$ and, $\ent(\tiera) \to \ent(\tiera)$, respectively,  in Rule \textit{(Op)}. The operator $+1$ is positive and its tiered type is enforced to be $\ent(\tiera) \to \ent(\tiera)$ by Rule \textit{(Op)}. 

Consider the below method encoding multiplication:
{\tt
\begin{lstlisting}
mult(int x, int y){
  int z$^\tiera$ := 0;
  while(x$^\tierb$>0){
    x$^\tierb$ := x$^\tierb$-1;
    int u$^\tierb$ := y$^\tierb$;
    while(u$^\tierb$>0){
      u$^\tierb$ := u$^\tierb$-1;
      z$^\tiera$ := z$^\tiera$+1;
    }
  }
  return z$^\tiera$;
}
\end{lstlisting}
}
It can be typed by $\C(\beta) \times \ent(\tierb) \times \ent(\tierb) \to \ent(\tiera)$, for any tier $\beta$, under the typing environment $\Delta$ such that $\Delta(\texttt{mult}^\C)(\xa)=\Delta(\texttt{mult}^\C)(\xb)=\Delta(\texttt{mult}^\C)(\xd)=\ent(\tierb)$ and $\Delta(\texttt{mult}^\C)(\xc)=\ent(\tiera)$. Notice that $\xa$ and $\xd$ are enforced to be of tier $\tierb$, by Rules \textit{(Wh)} and \textit{(Op)}. Moreover $\xb$ is enforced to be of tier $\tierb$, by Rule \textit{(Ass)} applied to instruction $\verb!int u=y;!$, $\xd$ being of tier $\tierb$. Finally, $\xc$ is enforced to be of tier $\tiera$, by Rule \textit{(Op)}, as its stored value increases in the expression $\verb!z+1;!$.

\begin{theorem}[Completeness]\label{complete}
Each function computable in polynomial time by a Turing Machine can be computed by a safe and terminating program.
\end{theorem}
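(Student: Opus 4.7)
The plan is to reduce completeness to the completeness result of Marion~\cite{M11} for the tiered imperative while language, by viewing AOO as a conservative extension of Marion's language. Fix a function $f$ computable in polynomial time by a Turing Machine. By Marion's theorem, there is an imperative while program $W_f$ over primitive integer variables, typed in his tier-based system with tiers $\tiera$ and $\tierb$, that computes $f$.

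I would translate $W_f$ into an AOO program $P_f$ consisting of a single class \texttt{Exe} whose \texttt{main} method contains the translated body of $W_f$. Each integer variable of $W_f$ becomes an \texttt{int} variable in \texttt{main} with the same tier; each arithmetic operator is mapped to the corresponding primitive operator in AOO with its neutral or positive classification; control-flow constructs translate via Rules \textit{(Skip)}, \textit{(Seq)}, \textit{(Ass)}, \textit{(If)}, \textit{(Wh)}, \textit{(ISub)} of Figure~\ref{fig:TypeI}, which in the reference-type-free fragment coincide with Marion's rules. The \texttt{add} and \texttt{mult} methods given just before the theorem statement already illustrate this translation on two representative fragments. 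The input of $W_f$ is loaded into primitive-typed variables by the initialization instruction \texttt{Init}, which is not subject to the tiering discipline; this ensures that $\size{\mathcal{I}}$ is linear in the size $\size{x}$ of the TM input.

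The resulting $P_f$ contains no classes other than \texttt{Exe}, no constructors, and no methods other than \texttt{main}; in particular there are no recursive methods, so the safety conditions of Definition~\ref{def:safety} are vacuously satisfied. Well-typedness transfers from $W_f$ to $P_f$ because the translation preserves tier annotations, and termination transfers because it preserves semantics. Theorem~\ref{sound} then serves as a consistency check: the polynomial bound it delivers on $P_f$ matches the polynomial bound already known for $W_f$.

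The main obstacle is to verify faithfully that Marion's type system is a sub-system of the one presented here once all reference-type features are erased. This reduces to an inspection of the corresponding rules of Figures~\ref{fig:TypeE} and~\ref{fig:TypeI}, which match one-for-one with Marion's on the primitive fragment; the only subtle point is Rule~\textit{(ISub)}, which permits a tier $\tiera$ instruction to be typed at tier $\tierb$, but this matches precisely the standard subtyping in Marion's system and preserves the non-interference properties established in Section~6.
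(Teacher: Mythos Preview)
Your high-level strategy---reducing to Marion's completeness result for the imperative tiered language---is exactly what the paper announces in the paragraph preceding the theorem. However, your concrete implementation of the reduction has a genuine gap.

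Marion's language in~\cite{M11} works over \emph{words} (binary strings whose size is their length), not over integers; the paper's own related-work section confirms this (``words in~\cite{M11,HMP13}''). AOO has no primitive word type: its \texttt{int} type carries \emph{unary} size (Definition~1, Item~\ref{i2}: ``the size of numerical primitive value is the value itself''). Your claim that ``each integer variable of $W_f$ becomes an \texttt{int} variable'' therefore has no content, and your assertion that ``$\size{\mathcal{I}}$ is linear in the size $\size{x}$ of the TM input'' is false under any primitive encoding. If you pack the $n$ bits of the input word into a single \texttt{int}, its value---hence its AOO size---is up to $2^n$; if you store only the length $n$, you have lost the input; and a fixed finite set of primitive variables cannot hold $n$ bits. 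Worse, the word-successor operations (append a bit) that Marion treats as positive correspond on the \texttt{int} side to doubling, which is neither neutral nor positive in the sense of Definition~\ref{typing:op}, so the transition function of the simulated TM could not be typed.

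The paper's proof avoids this by encoding the tape as two \texttt{BList} objects of tier~$\tiera$: $\new\ \texttt{BList}(b,\xa)$ plays the role of Marion's positive word-successor, and \texttt{getTail()} plays the role of the neutral predecessor. The cascade of tier~$\tiera$ \texttt{if}-commands implements one TM step, and it is wrapped in tier~$\tierb$ \texttt{while} loops whose guards are driven by the \texttt{add}/\texttt{mult} polynomial clock shown before the theorem. So the reduction to~\cite{M11} does go through, but it genuinely requires the object layer (a class with a constructor and two getters); your proposal that $P_f$ ``contains no classes other than \texttt{Exe}, no constructors, and no methods other than \texttt{main}'' cannot be maintained.
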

\begin{proof}
We show that every polynomial time function over binary words, encoded using the class $\tt BList$, can be computed by a safe and terminating program. Consider a Turing Machine $TM$, with one tape and one head,  which computes within $n^k$ steps for some constant $k$ and where $n$ is the input size. The tape of $TM$ is represented by two variables $\xa$ and $\xb$ which contain respectively the reversed left side of the tape and the right side of the tape. States are encoded by integer constants and the current state is stored in the variable $\vetat$. We assign to each of these three variables that hold a configuration of TM the tier $\tiera$.
A one step transition is simulated by a finite cascade of if-commands of the form:
{\tt
\begin{lstlisting}[frame=none]
if($\xb{\tt .getHead()}^\tiera$){
	if($\vetat^\tiera==8^\tiera$){
		$\vetat^\tiera$ = $3^\tiera; : \tiera$ 
		$\xa^\tiera$ = $\new\ \tt BList(\false, \xa^\tiera); : \tiera$
		$\xb^\tiera$ = $\xb{\tt .getTail()}^\tiera); : \tiera$
	}else{$\ldots : \tiera$}
}
\end{lstlisting}
}
The above command expresses that if the current read symbol is $\true$ and the state is $8$, then the next state is $3$, the head moves to the right and the read symbol is replaced by $\false$. The methods $\tt getTail()$ and $\tt getHead()$ can be given the types $\tt BList(\tiera) \to BList(\tiera)$ and $\tt BList(\tiera) \to \bool(\tiera)$, respectively (see previous Example).
Since each variable inside the above command is of tier $\tiera$, the tier of the if-command is also $\tiera$.  
As shown above, any polynomial can be computed by a safe and terminating program: we have already provided the programs for addition and multiplication and we let the reader check that it can be generalized to any polynomial. Thus the cascade of one step transitions can be included in an intrication of while loops computing the requested polynomial. Note that this is possible as the cascade will be of tiered type $\void(\tiera)$ and it can be typed by $\void(\tierb)$ through the use of Rule \textit{(ISub)}.
\end{proof}

\begin{corollary}
Each function computable in polynomial time by a Turing Machine can be computed by a generally safe and terminating program.
\end{corollary}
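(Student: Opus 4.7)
The plan is to obtain the corollary as an essentially immediate consequence of Theorem~\ref{complete} combined with Proposition~\ref{afortiori}. The strategy is to observe that generally safe is a weakening of safe, so any completeness result for safe programs transfers automatically to generally safe programs, with no new construction required.

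More precisely, first I would recall that Theorem~\ref{complete} already shows that for every polynomial time function $f$ computable by a Turing machine, there exists an AOO program $P_f$ that is safe (with respect to some environments $\typenv$ and $\Omega$), terminates on every input, and computes $f$. The construction in that proof simulates the Turing machine by storing its tape halves and its state in three tier $\tiera$ variables and encoding one-step transitions by a cascade of conditionals of tiered type $\void(\tiera)$, which is then wrapped in an intrication of while loops computing a polynomial bound $n^k$. Crucially, this construction uses no recursive methods at all (only the auxiliary additive/multiplicative loops plus the $\tt BList$ getters), so Definition~\ref{def:safety} is satisfied vacuously on the recursion clauses.

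Second, I would invoke Proposition~\ref{afortiori}, which states that every safe program is generally safe. Thus the very same program $P_f$ produced by the proof of Theorem~\ref{complete} is also generally safe with respect to $\typenv$ and $\Omega$, and of course still terminates on every input and still computes $f$. This gives the conclusion of the corollary.

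There is really no obstacle here: the statement is a one-line consequence of the implication ``safe $\Rightarrow$ generally safe'' applied to the completeness theorem. The only mild subtlety worth mentioning in the writeup is that one should be explicit about the fact that Definition~\ref{def:gsafety} is strictly weaker than Definition~\ref{def:safety} (as formalized by Proposition~\ref{afortiori}) and hence preserves the completeness side of the characterization, in contrast to the soundness side where the weaker criterion required its own dedicated proof.
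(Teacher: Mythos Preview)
Your proposal is correct and takes essentially the same approach as the paper: the paper's proof is the single line ``By Proposition~\ref{afortiori},'' which is precisely your strategy of combining Theorem~\ref{complete} with the implication safe $\Rightarrow$ generally safe. Your additional observation that the Turing machine simulation uses no recursive methods (so safety holds vacuously on those clauses) is accurate but not needed, since Theorem~\ref{complete} already asserts safety outright.
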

\begin{proof}
By Proposition~\ref{afortiori}.
\end{proof}

\section{Type inference}
Now we show a decidability result for type inference in the case of safe programs.

\begin{proposition}[Type inference]\label{poltypeinf}
Given an AOO program $P$, deciding if there exist a typing environment $\Delta$ and an operator typing environment $\Omega$ such that $P$ is well-typed can be done in time polynomial in the size of the program.
\end{proposition}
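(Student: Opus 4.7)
The plan is to reduce type inference to a constraint-satisfaction problem over the two-element tier lattice $\{\tiera,\tierb\}$ and solve it using standard polynomial-time techniques.

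First, I would fix the operator typing environment $\Omega$. By Definition~\ref{typing:op}, $\Omega$ is entirely determined by the classification of each operator occurring in $P$ as neutral, positive, or otherwise, which is provided by the language's operator signatures; hence $\Omega$ can be computed in linear time in $\size{P}$. Next, for the typing environment $\Delta$, I would introduce one Boolean tier variable $\alpha_{s,x}$ for each method (or constructor) signature $s$ appearing in $P$ and each program variable $x$ accessible in $s$ (field of the enclosing class, parameter, or local declared in the body). Additional tier variables are introduced for the annotation tier of every expression occurrence of $\underline{P}$ and the output tier of every instruction occurrence. By Corollary~\ref{lem:flatsize} and the fact that the number of method signatures is linear in $\size{P}$, the total number of tier variables is polynomial in $\size{P}$.

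Third, I would traverse $\underline{P}$ once. Each syntactic construct uniquely determines the typing rule of Figures~\ref{fig:TypeE}--\ref{fig:TypeCM} that must be applied, with (ISub) and (Pol) inserted wherever they can relax a strict equality into an inequality. Each rule application produces elementary constraints of only three shapes: forced value $\alpha = \tiera$ or $\alpha = \tierb$, equality $\alpha = \beta$, and order $\alpha \preceq \beta$; the joins $\vee \overline{\beta}$ appearing in Rules (Op), (New), (C) are encoded by auxiliary tier variables bounded by order constraints $\beta_i \preceq \alpha$. The number of generated constraints is polynomial in $\size{P}$.

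Finally, I would solve the resulting system. Because the lattice has only two elements, it is an instance of 2-SAT: encoding $\alpha = \tierb$ by the literal $\alpha$ turns equality into a biconditional and order into an implication, and 2-SAT is decidable in linear time in the number of clauses. Equivalently, a direct saturation of the forced tiers until a fixed point runs in quadratic time. A satisfying assignment yields $\Delta$ and confirms well-typedness, while failure witnesses non-typability. The main obstacle is faithfully encoding the contextual nature of the system: since a field's tier in Rules (Self) and (Body) depends on the method signature currently being typed, a single global tier per field would destroy the per-method polymorphism afforded by $\Gamma = (s,\Delta)$. Indexing tier variables by the pair $(s,x)$ restores this polymorphism; call sites (Rule (C)) and overrides (Rule (OR)) then translate to equalities between tier variables living in different contexts, and the polynomial-time bound follows.
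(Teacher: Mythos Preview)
Your proposal is correct and follows essentially the same route as the paper: reduce tier inference on the flattened program to a 2-SAT instance over boolean variables indexed by method context, then appeal to the linear-time solvability of 2-SAT. Your treatment is in fact more detailed than the paper's sketch (you explicitly fix $\Omega$, discuss how \textit{(ISub)} and \textit{(Pol)} are absorbed as order constraints, and explain why the contextual indexing $(s,x)$ is needed); the one point where the paper is slightly more explicit is that a method body may be typed \emph{several times}, once per call site, since Rule \textit{(Body)} lets the tiers of $\this$, the parameters, and the return variable be chosen afresh at each call---this is why the clause count is polynomial rather than linear, and your per-$(s,x)$ indexing should accordingly be refined to per-call-site copies for those particular variables.
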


\begin{proof}
We work on the flattened program.
Type inference can be reduced to a 2-SAT problem. We encode the tier of each field $\xa$ (respectively instruction $\mi$) within the method $\m$ of class $\C$ by a boolean variable
$x^{\m^\C}$ (respectively $I^{\m^\C}$) that will be true if the variable (instruction) is of tier $\tierb$, false if it is of tier $\tiera$ in  in the context of $\m^\C$. 
Then we generate boolean clauses with respect to the program code:
\begin{itemize}
\item An assignment $\xa := \xb;$ corresponds to $(y^{\m^\C} \wedge x^{\m^\C})$
\item A sequence $\mi_1 \ \mi_2$ corresponds to $(I_1^{\m^\C} \vee I_2^{\m^\C})$
\item An  $\iif (\xa)\ithen\{ \mi_1\}\ielse \{\mi_2\}$ corresponds to $(\neg I_1^{\m^\C} \vee x^{\m^\C}) \wedge (\neg I_2^{\m^\C} \vee x^{\m^\C})$
\item A expression involving a neutral operator $\xa_0\iasg \op(\xa_1,\ldots,\xa_n)$ corresponds to  $\wedge_{i,j} ( x_i^{\m^\C} \vee \neg x_j^{\m^\C})$
\item and so on...
\end{itemize}
The number of generated clauses is polynomial in the size of the flattened program and also of the initial program by Corollary~\ref{lem:flatsize}. The polynomiality comes from the fact that a method may be given distinct types on several calls (thus its body might by typed several times statically). Consequently, tiered types are inferred in polynomial time as 2-SAT problems can be solved in linear time.
\end{proof}

\begin{corollary}
Given an AOO program $P$, deciding if there exist a typing environment $\Delta$ and an operator typing environment $\Omega$ such that $P$ is well-typed and safe can be done in time polynomial in the size of the program.
\end{corollary}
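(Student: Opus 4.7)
The strategy is to combine Proposition~\ref{poltypeinf} with Lemma~\ref{lemma:safecheck}, but taking care of the fact that safety depends on the chosen typing derivation. A naive two-phase approach (infer a typing, then test safety) would be insufficient, because the typing returned by the inference might fail Item~\ref{D3} of Definition~\ref{def:safety} even though a different admissible typing would pass it. Hence the plan is to fold the safety constraints directly into the 2-SAT encoding used for type inference.

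First, I would work on the flattened program $\underline{P}$, which by Corollary~\ref{lem:flatsize} has size $O(\size{P}^{2})$, so any polynomial-time procedure on $\underline{P}$ yields a polynomial-time procedure on $P$. Then I would syntactically pre-check the tier-independent parts of safety: using Lemma~\ref{lemma:lev}, compute the equivalence classes $[s]$ of recursive methods in polynomial time; for each recursive method check Item~\ref{D1} (exactly one call to an element of $[\m]$ in its body) and Item~\ref{D2} (no $\iwhile$ inside its body). Both checks are purely syntactic and quadratic in $\size{P}$. If either fails, reject immediately.

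Next I would enrich the 2-SAT instance from the proof of Proposition~\ref{poltypeinf} with the constraints expressing Item~\ref{D3}. For each recursive method $\m$ of signature $s$, and each parameter $\xa_i$ of $\m$, add the unit clause $x_i^{\m^{\C}}$ (forcing $\xa_i$ to tier $\tierb$); similarly force $\mathtt{this}^{\m^{\C}}$ to tier $\tierb$, so by Rule \textit{(Self)} all fields in that context become $\tierb$. For every call site of such an $\m$, add a unit clause on the corresponding annotation variable $\beta$ to force it to $\tierb$, thereby enforcing $\imp_{\tierb}$. All these extra clauses are polynomially many in $\size{\underline{P}}$, they remain 2-SAT clauses (in fact unit clauses), and a satisfying assignment of the augmented system yields precisely a typing for which $P$ is both well-typed and safe.

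The only subtle point, and the main thing to verify carefully, is that this enrichment preserves the polymorphic treatment of fields across methods that makes the original reduction work: since the boolean variables in Proposition~\ref{poltypeinf} are already indexed by method context $\m^{\C}$, forcing tiers in the context of a recursive method does not pollute the typing derivable in other contexts. Once this is checked, since 2-SAT is solvable in linear time in the number of clauses, the whole procedure runs in time polynomial in $\size{P}$, which proves the corollary.
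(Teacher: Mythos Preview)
Your proposal is correct and, in fact, more careful than the paper's own argument. The paper dispatches the corollary in a single line, essentially pointing to Lemma~\ref{lemma:safecheck} after Proposition~\ref{poltypeinf}; read literally, that is the naive two-phase procedure you explicitly warn against. You correctly observe that Items~\ref{D1} and~\ref{D2} of Definition~\ref{def:safety} are purely syntactic and independent of the chosen typing, whereas Item~\ref{D3} constrains the typing itself, so a satisfying assignment returned by the bare 2-SAT instance of Proposition~\ref{poltypeinf} might violate Item~\ref{D3} even when another assignment would not. Your fix---adding unit clauses forcing $\this$, the parameters, and the call-site annotation of every recursive method to tier~$\tierb$---is exactly the right way to close this gap: the extra clauses are polynomially many, remain within 2-SAT, and are indexed by method context so they do not interfere with the polymorphic treatment of fields elsewhere. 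What your route buys over the paper's is an actual decision procedure for the \emph{existence} of a well-typed \emph{and} safe derivation, rather than a check of one particular derivation; what the paper's terse version buys is brevity, at the cost of leaving this subtlety to the reader.
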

\begin{proof}
Using Lemma~\ref{lemma:safecheck}.
\end{proof}

Just remark that the above corollary becomes false for generally safe programs as a consequence of Proposition~\ref{undec}.

\section{Extensions}\label{sec:ext}
In this section, we discuss several possible improvements of the presented methodology.
\subsection{Control flow alteration}
Constructs altering the control flow like break, return and continue can also be considered in this fragment. For example, a break statement has to be constrained to be of tiered type $\void(\tierb)$ so that if such an instruction is to be executed, then we know that it does not depend on tier $\tiera$ expressions. More precisely, it can be typed by the rule:
$$\ninfer{ }{\typenv,\Omega \imp \breaks; : \void(\tierb)}{(Break)}$$
This prevents the programmer from writing conditionals of the shape:
$$\iwhile(\xa^\tierb)\{\ldots \iif (\xb^\tiera) \ithen \{\breaks;\} \ielse \{\ldots\} \ldots\}$$
 that would break the non-interference result of Lemma~\ref{lem:nonin}. Such a conditional cannot be typed in such a way since $\xb$ has to be of tier $\tierb$ by Rules \textit{(If)} and \textit{(Break)}. Notice that Theorem~\ref{thm:soundness} remains valid since in a \emph{terminating program}, an instruction containing break statements will have an execution time smaller than the same instruction where the break statements have been deleted.
Using the same kind of typing rule, return statements can also be used in a more flexible manner by allowing the execution to leave the current subroutine anywhere in the method body. In previous Section, we have made the choice not to include break statement in the code as this make the definition of a formal semantics much more difficult: at any time, we need to keep in mind the innermost loop executed. The same remarks hold for the continue construct. 

Another possible extension is to consider methods with no restriction on the return statement. 
For the simplicity of the type system, we did not present a flexible treatment of the return statements by allowing the execution to leave the current subroutine anywhere in the method body. But this can be achieved in the same manner. The difficulty here is that all the return constructs have to be of the same tiered type (this is a global check on the method body). Moreover, as usual, we need to enforce that in a conditional all reachable flows lead to some return construct.

\subsection{Static methods, static variables and access modifiers}
The exposed methodology can be extended without any problem to static methods since they can be considered as a particular case of methods. In a similar manner, static variables can be captured since they are global (and can be considered as variables of the executable class) and add no complexity to the program. We claim that the current analysis can handle all usual access modifiers. Indeed the presented work is based on the implicit assumption that all fields are \texttt{private} since there is no field access in the syntax. On the opposite, methods and classes are all \texttt{public}. Consequently, method access restriction only consists in restricting the class of analyzed programs.

\subsection{Abstract classes and interfaces}
Both abstract classes and interfaces can be analyzed by the presented framework. This is straightforward for interfaces since they do not add any complexity to the program. We claim that abstract classes can be analyzed since they are just a particular and simple case of inheritance.
\subsection{Garbage Collecting}

In the pointer graph semantics, dereferenced objects stay in memory forever. It does not entail the bound on the heap size, however, it is far from optimal from a memory usage point of view. However, it is easy to add naive garbage collecting to the system. Indeed, finding which objects in the graph are dereferenced is simply recursively deleting nodes whose indegree is zero (counting pointers in the indegree).

Two different strategies can be thought of to implement dynamically this idea: either use an algorithm similar to the mark-and-don't-sweep algorithm that will color the part of the graph that is referenced then delete the uncolored part; or noting that as assignments and $\pop;$ are the only instructions that may dereference a pointer, maintain for each node of the graph a counter for its indegree, update it at each instruction and delete the dereferenced nodes (and its children if they have no other parents) whenever it happens.

The first strategy has the main drawback that exploring the whole graph will block the execution for some time, especially as starting from each node in each pointer mapping of the pointer stack is needed.

The second strategy will be far more flexible and the real wiping of memory may be deferred if necessary (as the deleting of files in a Unix system for example). An assignment increments the indegree of the node assigned and decrements the indegree of the node previously assigned to the variable. An instantiation ($\new$) adds arrows in the graph, hence increments the indegree of all the nodes associated to its parameter. $\push$ and $\pop$ respectively increment and decrement the indegree of the parameters and the current object of the method. Whenever a node's indegree reaches zero, it can be deleted. Finally, deleting a node of the graph decrements the indegree of the nodes it pointed to.

\subsection{Alleviating the safety condition}

Another way to alleviate the safety condition is to reuse the distinct recursion schemata provided in the {\sc icc} works of the function algebra. For example, over $\verb!Tree!$ structures, a recursive definition -written in a functional manner - of the shape:
{\tt
\begin{lstlisting}
f(t) = new Tree(f(t.getLeft()),f(t.getRight()));
\end{lstlisting}
}
should be accepted by the analysis as recursive calls work on partitioned data and, consequently, the number of recursive calls remains linear in the input size.

One sufficient condition to ensure that property is that:
\begin{itemize}
\item recursive calls of the method body are performed on distinct fields
\item and these fields are never assigned to in the method body 
\end{itemize}
The following depth first tree traversal algorithm satisfies this condition:
{\tt
\begin{lstlisting}
class Tree {
  int node; 
  Tree left; 
  Tree right;
  ...
  void visit() {
    println(node);
    if(left != null){
      left.visit();
    }else{;}
    if(right != null){
      right.visit();
    }else{;}
    
  }
}
\end{lstlisting}
}
on the assumption that the method $\verb!println(n)!$ behaves as expected.
We let the reader check that the method $\verb!visit()!$ can be typed by $\verb!Tree!(\tierb) \to \void(\tierb)$.

\subsection{Declassification}

In non-interference settings, declassification consists in lowering the confidentiality level of part of the data. For example, when verifying a password, the password database itself is at the highest level, knowing whether an input password matches should have the same level of confidentiality. As this is highly impractical, declassifying this partial information makes sense.

In this context, declassifying would mean retyping some tier $\tiera$ variables into $\tierb$. Such a flow is strictly forbidden by the type system, but it would make sense, for example, to compose treatments that are separately well-typed.
As long as those treatments are in finite number, we keep the polynomial bound as bounded composition of polynomials remains polynomial.

Formally, we will say that programs of the form \verb!Exe{void main(){Init I!$_1$ \verb!I!$_2$ ... \verb!I!$_n$\verb!}}! are well-typed iff for each $i\leq n$, \verb!Exe{void main(){Init I!$_1$ ... \verb!I!$_i$\verb!}}! is well-typed when we consider \verb!Init I!$_1$ ... \verb!I!$_{i-1}$ to be the initialization instruction.

\begin{example}
The following program:
{\tt
\begin{lstlisting}
Exe {
  void main() {
    //Init
    int n := ...;
    BList b := null;
    while (n>0) {
      b := new BList(true, b);
      n := n-1;
    }
    //Comp1
    z := 0;
    while (y.getQueue()) {
      z := z+1
    }
    //Comp2
    x := z;
    BList c := null;
    while (x>0) {
      c := new BList(false, c);
    }
    //Comp3
    x := z;
    while (x>0) {
      c := new BList(true, c);
    }
  }
}
\end{lstlisting}
}
cannot be typed without declassification as in \verb!Comp1!, \verb!z! needs to be of tier $\tiera$, while in \verb!Comp2! and \verb!Comp3!, it needs to be of tier $\tierb$.
\end{example}

\begin{remark}
Note that the proof of Theorem~\ref{complete} could be improved by using declassification. Indeed, we could write a first computation instruction that creates a polynomial bound on the number of steps of the Turing Machine, and a second computation instruction that executes the simulation of each step while a counter is lower than the bound. In the first part, the bound needs to be of tier $\tiera$, in the second of tier $\tierb$, hence the use of declassification.
\end{remark}

\section{Related works}
\subsection{Related works on tier-based complexity analysis}
The current work is inspired by three previous works:
\begin{itemize}
\item 
the seminal paper~\cite{M11}, initiating imperative programs type-based complexity analysis using secure information flow and providing a characterization of polynomial time computable functions,
\item
the paper~\cite{HMP13}, extending previous analysis to C processes with a fork/wait mechanism, which provides a characterization of polynomial space computable functions,
\item and the paper~\cite{LM13}, extending this methodology to a graph based language. 
\end{itemize}
The current paper tries to pursue this objective but on a distinct paradigm: Object. It differs from the aforementioned works on the following points:
\begin{itemize}
\item first, it is an extension to the object-oriented paradigm (although imperative features can be dealt with). In particular, it characterizes the complexity of recursive and non-recursive method calls whereas previous works~\cite{M11,HMP13,LM13} where restricted to while loops and to non-object data type (words in ~\cite{M11,HMP13} and records in~\cite{LM13}),
\item second, it studies program intensional properties (like heap and stack) whereas previous papers were focusing on the extensional part (characterizing function spaces). Consequently, it is closer to a programmer's expectations,
\item third, it provides explicit big $O$ polynomial upper bounds while the two aforementioned studies were only certifying algorithms to compute a function belonging to some fixed complexity class.
\item last, from an expressivity point of view, the presented results strictly extend the ones of~\cite{M11}, as the restriction of this paper to primitive data types is mainly the result of~\cite{M11}, while they are applied on a more concrete language than the ones in~\cite{LM13}.
\end{itemize}
The current work is an extended version of~\cite{HP15}. The main distinctions are the following:
\begin{itemize}
\item In~\cite{HP15}, only general safety is studied. Here we have presented a decidable (and thus restricted) safety condition.
\item Contrarily to~\cite{HP15}, the paper presents a formal semantics. The pros are a cleaner theoretical treatment. The cons are that we do not have constructs changing the flow like ``break''. Such constructs can be handled by the tier-based type system. However, their use would increase drastically the complexity of the semantics as a program counter would be required.
\item The requirement that a recursive method can only be called in a tier $\tierb$ instruction is formalized through the use of the $ \imp_\tierb$ notation while it was only informally stated in~\cite{HP15}.
\end{itemize}

\subsection{Other related works on complexity}
There are several related works on the complexity of imperative and object oriented languages. On imperative languages, the papers~\cite{NW06,Moyen09,JonesK09} study theoretically the heap-space complexity of core-languages using type systems based on a matrices calculus. 

On OO programming languages, the papers~\cite{HJ06,HR09} control the heap-space consumption using type systems based on amortized complexity introduced in previous works on functional languages~\cite{HJ03,JostHLH10,BHMS04}. Though similar, the presented result differs on several points with this line of work. First, this analysis is not restricted to linear heap-space upper bounds. Second, it also applies to stack-space upper bounds. Last but not least, this language is not restricted to the expressive power of method calls and includes a \texttt{while} statement, controlling the interlacing of such a purely imperative feature with functional features like recurrence being a very hard task from a complexity perspective. 

Another interesting line of research is based on the analysis of heap-space and time consumption of Java bytecode~\cite{AAGPZ07,AAGPZ12,KSGME12,cachera2005certified}. The results from \cite{AAGPZ07,AAGPZ12,KSGME12} make use of abstract interpretations to infer efficiently symbolic upper bounds on resource consumption of Java programs. A constraint-based static analysis is used in~\cite{cachera2005certified} and focuses on certifying memory bounds for Java Card. The analysis can be seen as a complementary approach since we try to obtain practical upper bounds through a cleaner theoretically oriented treatment. Consequently, this approach allows the programmer to deal with this typing discipline on an abstract OO code very close to the original Java code without considering the corresponding Java bytecode. Moreover, this approach handles very elegantly while loops guarded by a variable of reference type whereas most of the aforementioned studies are based on invariants generation for primitive types only.

The concerns of this study are also related to the ones of~\cite{AGG09,AGG10}, that try to predicts the minimum amount of heap space needed to run the program without exhausting the memory, but the methodology, the code analyzed (source vs compiled) and the goals differ.

A complex type-system that allows the programmer to verify linear properties on heap-space is presented in~\cite{chin2005memory}. The presented result in contrast presents a very simple type system that however guarantees a polynomial bound.

In a similar vein, characterizing complexity classes below polynomial time is studied in \cite{HS09,HofmannS10}. These works rely on a programming language called PURPLE combining imperative statements together with pointers on a fixed graph structure. Although not directly related, the presented type system was inspired by this work.

\subsection{Related works on termination}
The presented work is independent from termination analysis but the main result relies on such analysis. Indeed, the polynomial upper bounds on both the stack and the heap space consumption of a typed program provided by Theorem~\ref{space} only hold for a terminating computation. Consequently, this analysis can be combined with termination analysis in order to certify the upper bounds on any input. Possible candidates for the imperative fragment are \emph{Size Change Termination}~\cite{BA10,BAGM12}, tools like Terminator~\cite{CPR06} based on \emph{Transition predicate abstraction}~\cite{PR05} or symbolic complexity bound generation based on abstract interpretations, see~\cite{G09,GMC09} for example.

\section{Conclusion}

This work presents a simple but highly expressive type-system that is sound and complete with respect to the class of polynomial time computable functions, that can be checked in polynomial time and that provides explicit polynomial upper bounds on the heap size and stack size of an object oriented program allowing (recursive) method calls. As the system is purely static, the bounds are not as tight as may be desirable. It would indeed be possible to refine the framework to obtain a better exponent at the price of a non-uniform formula (for example not considering all tier $\tierb$ variables but only those modified in each while loop or recursive method would reduce the computed complexity. See Example~\ref{cyclic}). OO features, such as abstract classes, interfaces and static fields and methods, were not considered here, but we claim that they can be treated by this analysis. 

This analysis has several advantages:
\begin{itemize}
\item It provides a high-level alternative to usual complexity studies mainly based on the compiled bytecode. The analysis is high-level but the obtained bound are quite tight.
\item It is decidable in polynomial time on the safety criterion.
\item It merges both theoretical and applied results as we both obtain bounds on real programs and a sound and complete characterization of polynomial time computable functions.
\item It is able to deal with the complexity of programs whose loops are guarded by object. This is not a feature of bytecode-based approaches that are restricted to primitive data and are in need of costly program transformation techniques and tools to comply with such kind of programs.
\item It uses previous theoretical techniques (tiering, safe recursion on notation) for functional programs and function algebra and shows that they can be adapted elegantly (though technically) to the OO paradigm.
\item It uses previous security techniques (non-interference, declassification) for imperative programs. The use is slightly different (even orthogonal) but the methodology is surprisingly very close.
\end{itemize}
We expect this paper to be a first step towards the use of tiers and non-interference for controlling the complexity of OO programs. The next steps are the design of a practical application and extensions to (linear or polynomial) space using threads.

\paragraph{Acknowledgments} The authors gratefully acknowledge the advises and comments from anonymous referees that contributed to improving this article.

%\bibliography{obj}

\begin{thebibliography}{}
\expandafter\ifx\csname url\endcsname\relax
  \def\url#1{\texttt{#1}}\fi
\expandafter\ifx\csname urlprefix\endcsname\relax\def\urlprefix{URL }\fi
\expandafter\ifx\csname href\endcsname\relax
  \def\href#1#2{#2} \def\path#1{#1}\fi

\end{thebibliography}


\begin{thebibliography}{10}
\expandafter\ifx\csname url\endcsname\relax
  \def\url#1{\texttt{#1}}\fi
\expandafter\ifx\csname urlprefix\endcsname\relax\def\urlprefix{URL }\fi
\expandafter\ifx\csname href\endcsname\relax
  \def\href#1#2{#2} \def\path#1{#1}\fi

\bibitem{BC92}
S.~Bellantoni, S.~Cook, A new recursion-theoretic characterization of the
  poly-time functions, Comput. Complex. 2 (1992) 97--110.

\bibitem{LM93}
D.~Leivant, J.-Y. Marion, Lambda calculus characterizations of poly-time,
  Fundam. Inform. 19~(1/2) (1993) 167--184.

\bibitem{VIS96}
D.~Volpano, C.~Irvine, G.~Smith, A sound type system for secure flow analysis,
  J. Computer Security 4~(2/3) (1996) 167--188.

\bibitem{HP15}
E.~Hainry, R.~P{\'e}choux, {Objects in Polynomial Time}, in: {Programming
  Languages and Systems - 13th Asian Symposium, APLAS 2015}, Vol. 9458 of
  Lecture Notes in Computer Science, 2015, pp. 387--404.

\bibitem{M11}
J.-Y. Marion, A type system for complexity flow analysis, in: 26th Annual IEEE
  Symposium on Logic in Computer Science, LICS 2011, 2011, pp. 123--132.

\bibitem{IPW01}
A.~Igarashi, B.~C. Pierce, P.~Wadler, Featherweight java: a minimal core
  calculus for java and {GJ}, ACM Trans. Program. Lang. Syst. 23~(3) (2001)
  396--450.

\bibitem{Girard98}
J.-Y. Girard, Light linear logic, Inf. Comput. 143~(2) (1998) 175--204.

\bibitem{CR80}
S.~A. Cook, C.~Rackoff, Space lower bounds for maze threadability on restricted
  machines, SIAM J. Comput. 9~(3) (1980) 636--652.

\bibitem{DN12}
N.~Danner, J.~S. Royer, \href{http://arxiv.org/abs/1201.4567}{Ramified
  structural recursion and corecursion}, CoRR abs/1201.4567.
\newline\urlprefix\url{http://arxiv.org/abs/1201.4567}

\bibitem{HMP13}
E.~Hainry, J.-Y. Marion, R.~P\'echoux, Type-based complexity analysis for fork
  processes, in: FOSSACS, Vol. 7794 of Lecture Notes in Computer Science, 2013,
  pp. 305--320.

\bibitem{LM13}
D.~Leivant, J.-Y. Marion, Evolving graph-structures and their implicit
  computational complexity, in: Automata, Languages, and Programming - ICALP
  2013, Vol. 7966 of Lecture Notes in Computer Science, 2013, pp. 349--360.

\bibitem{NW06}
K.-H. Niggl, H.~Wunderlich, Certifying polynomial time and linear/polynomial
  space for imperative programs, SIAM J. Comput. 35~(5) (2006) 1122--1147.

\bibitem{Moyen09}
J.-Y. Moyen, Resource control graphs, ACM Trans. Comput. Logic 10~(4) (2009)
  29:1--29:44.

\bibitem{JonesK09}
N.~D. Jones, L.~Kristiansen, A flow calculus of {\it wp}-bounds for complexity
  analysis, ACM Trans. Comput. Log. 10~(4).

\bibitem{HJ06}
M.~Hofmann, S.~Jost, Type-based amortised heap-space analysis, in: ESOP, Vol.
  3924 of Lecture Notes in Computer Science, 2006, pp. 22--37.

\bibitem{HR09}
M.~Hofmann, D.~Rodriguez, Efficient type-checking for amortised heap-space
  analysis, in: CSL, Vol. 5771 of Lecture Notes in Computer Science, 2009, pp.
  317--331.

\bibitem{HJ03}
M.~Hofmann, S.~Jost, Static prediction of heap space usage for first-order
  functional programs, in: Symposium on Principles of Programming Languages,
  POPL 2003, ACM, 2003, pp. 185--197.

\bibitem{JostHLH10}
S.~Jost, K.~Hammond, H.-W. Loidl, M.~Hofmann, Static determination of
  quantitative resource usage for higher-order programs, in: Symposium on
  Principles of Programming Languages, POPL 2010, 2010, pp. 223--236.

\bibitem{BHMS04}
L.~Beringer, M.~Hofmann, A.~Momigliano, O.~Shkaravska, Automatic certification
  of heap consumption, in: LPAR, Vol. 3452 of Lecture Notes in Computer
  Science, 2004, pp. 347--362.

\bibitem{AAGPZ07}
E.~Albert, P.~Arenas, S.~Genaim, G.~Puebla, D.~Zanardini, Costa: Design and
  implementation of a cost and termination analyzer for java bytecode, in:
  FMCO, Vol. 5382 of Lecture Notes in Computer Science, 2008, pp. 113--132.

\bibitem{AAGPZ12}
E.~Albert, P.~Arenas, S.~Genaim, G.~Puebla, D.~Zanardini, Cost analysis of
  object-oriented bytecode programs, Theor. Comput. Sci. 413~(1) (2012)
  142--159.

\bibitem{KSGME12}
R.~Kersten, O.~Shkaravska, B.~van Gastel, M.~Montenegro, M.~C. J.~D. van
  Eekelen, Making resource analysis practical for real-time java, in: Java
  Technologies for Real-time and Embedded Systems, JTRES '12, 2012, pp.
  135--144.

\bibitem{cachera2005certified}
D.~Cachera, T.~Jensen, D.~Pichardie, G.~Schneider, Certified memory usage
  analysis, in: FM 2005: Formal Methods, Vol. 3582 of Lecture Notes in Computer
  Science, 2005, pp. 91--106.

\bibitem{AGG09}
E.~Albert, S.~Genaim, M.~G{\'o}mez-Zamalloa~Gil, Live heap space analysis for
  languages with garbage collection, in: Proceedings of the 2009 international
  symposium on Memory management, ACM, 2009, pp. 129--138.

\bibitem{AGG10}
E.~Albert, S.~Genaim, M.~G{\'o}mez-Zamalloa, Parametric inference of memory
  requirements for garbage collected languages, in: ACM Sigplan Notices,
  Vol.~45, ACM, 2010, pp. 121--130.

\bibitem{chin2005memory}
W.~Chin, H.~Nguyen, S.~Qin, M.~Rinard, Memory usage verification for {OO}
  programs, in: Static Analysis, SAS 2005, 2005, pp. 70--86.

\bibitem{HS09}
M.~Hofmann, U.~Sch{\"o}pp, Pointer programs and undirected reachability, in:
  24th Annual IEEE Symposium on Logic in Computer Science, LICS 2009, IEEE
  Computer Society, 2009, pp. 133--142.

\bibitem{HofmannS10}
M.~Hofmann, U.~Sch{\"o}pp, Pure pointer programs with iteration, ACM Trans.
  Comput. Log. 11~(4).

\bibitem{BA10}
A.~M. Ben-Amram, Size-change termination, monotonicity constraints and ranking
  functions, Log. Meth. Comput. Sci. 6~(3).

\bibitem{BAGM12}
A.~M. Ben-Amram, S.~Genaim, A.~N. Masud, On the termination of integer loops,
  in: Verification, Model Checking, and Abstract Interpretation, VMCAI 2012,
  Vol. 7148 of Lecture Notes in Computer Science, 2012, pp. 72--87.

\bibitem{CPR06}
B.~Cook, A.~Podelski, A.~Rybalchenko, Terminator: Beyond safety, in: Computer
  Aided Verification, CAV 2006, Vol. 4144 of Lecture Notes in Computer Science,
  2006, pp. 415--426.

\bibitem{PR05}
A.~Podelski, A.~Rybalchenko, Transition predicate abstraction and fair
  termination, in: Symposium on Principles of Programming Languages, POPL 2005,
  ACM, 2005, pp. 132--144.

\bibitem{G09}
S.~Gulwani, Speed: Symbolic complexity bound analysis, in: Computer Aided
  Verification, CAV 2009, Vol. 5643 of Lecture Notes in Computer Science, 2009,
  pp. 51--62.

\bibitem{GMC09}
S.~Gulwani, K.~K. Mehra, T.~M. Chilimbi, Speed: precise and efficient static
  estimation of program computational complexity, in: Symposium on Principles
  of Programming Languages, POPL 2009, ACM, 2009, pp. 127--139.

\end{thebibliography}
\bibliographystyle{elsarticle-num}

\end{document}